\documentclass[11pt]{article}

\def\Anonymity{0}%

\usepackage{geometry}
\usepackage[T1]{fontenc}
\usepackage{graphicx}
\usepackage{palatino}
\usepackage{inconsolata}
\usepackage{xspace}
\usepackage[utf8]{inputenc}
\usepackage{amsmath, amssymb, amsfonts, amsthm,mathtools}
\usepackage{enumerate}
\usepackage{xcolor}
\usepackage[breaklinks,colorlinks,linkcolor=magenta,citecolor=blue,bookmarks,bookmarksopen,bookmarksnumbered]{hyperref}
\usepackage{bm}
\usepackage{paralist}
\usepackage{thmtools}
\usepackage{rotating}
\usepackage{mdframed}
\usepackage{multirow}
\usepackage{fancyvrb}
\usepackage{mathrsfs}
\usepackage{indentfirst}
\usepackage{algorithm, algorithmicx, algpseudocodex}
\usepackage{mleftright}
\usepackage[nottoc]{tocbibind}%
\usepackage[langfont=roman]{complexity}
\usepackage{tablefootnote}
\usepackage{relsize}
\usepackage{exscale}
\usepackage{tikz}
\usepackage{afterpage}

\usepackage[backend=biber, style=alphabetic, backref=true, maxbibnames=99, maxalphanames=4, minalphanames=3]{biblatex}

\newbibmacro{string+doiurl}[1]{%
  \iffieldundef{doi}{%
    \iffieldundef{url}{%
      #1%
    }{%
      \href{\thefield{url}}{#1}%
    }%
  }{%
    \href{https://doi.org/\thefield{doi}}{#1}%
  }%
}

\DeclareFieldFormat{title}{\usebibmacro{string+doiurl}{\mkbibemph{#1}}}
\DeclareFieldFormat[article,incollection,inproceedings,thesis]{title}{\usebibmacro{string+doiurl}{\mkbibquote{#1}}}

\DeclareFieldFormat
  [article,inbook,incollection,inproceedings,patent,thesis,unpublished]
  {titlecase}{%
    \ifcurrentfield{title}{\MakeSentenceCase*{#1}}{%
      \ifcurrentfield{subtitle}{\MakeSentenceCase*{#1}}{#1}%
    }%
  }

\usepackage[capitalize]{cleveref}%

\allowdisplaybreaks

\newtheorem{theorem}{Theorem}[section]
\newtheorem{lemma}[theorem]{Lemma}
\newtheorem{corollary}[theorem]{Corollary}
\newtheorem{claim}[theorem]{Claim}

\newtheorem{question}[theorem]{Question}

\theoremstyle{definition}
\newtheorem{definition}[theorem]{Definition}

\renewcommand{\MAEXP}{\mathsf{MA}_{\mathsf{EXP}}}
\renewcommand{\AMEXP}{\mathsf{AM}_{\mathsf{EXP}}}

\theoremstyle{remark}
\newtheorem{remark}[theorem]{Remark}

\newenvironment{claimproof}[1][\proofname]
{\proof[#1]}
{\endproof}

\algtext*{EndWhile}%
\algtext*{EndIf}%
\algtext*{EndFor}%
\algtext*{EndFunction}%

\algnewcommand{\IfThen}[2]%
{\State \algorithmicif\ #1\ \algorithmicthen\ #2}

\makeatletter
\def\moverlay{\mathpalette\mov@rlay}
\def\mov@rlay#1#2{\leavevmode\vtop{%
		\baselineskip\z@skip \lineskiplimit-\maxdimen
		\ialign{\hfil$\m@th#1##$\hfil\cr#2\crcr}}}
\newcommand{\charfusion}[3][\mathord]{
	#1{\ifx#1\mathop\vphantom{#2}\fi
		\mathpalette\mov@rlay{#2\cr#3}
	}
	\ifx#1\mathop\expandafter\displaylimits\fi}
\makeatother

\renewcommand{\poly}{\mathrm{poly}}

\renewcommand{\polylog}{\mathrm{polylog}}

\newcommand{\eps}{\varepsilon}

\def\BF{\mathsf{BF}}

\newlang{\MCSP}{MCSP}
\newlang{\MFSP}{MFSP}
\newlang{\MKtP}{MKtP}
\newlang{\MKTP}{MKTP}
\newlang{\itrMCSP}{itrMCSP}
\newlang{\itrMKTP}{itrMKTP}
\newlang{\itrMINKT}{itrMINKT}
\newlang{\MINKT}{MINKT}
\newlang{\MINK}{MINK}
\newlang{\MINcKT}{MINcKT}
\newlang{\CMD}{CMD}
\newlang{\DCMD}{DCMD}
\newlang{\CGL}{CGL}
\newlang{\PARITY}{PARITY}
\newlang{\SGV}{SGV}
\renewlang{\Gap}{Gap}
\newlang{\Avoid}{\textnormal{\textsc{Avoid}}}
\newlang{\MissingString}{\textsc{Missing-String}}
\newlang{\SinkOfDAG}{\textsc{Sink-Of-DAG}}
\newlang{\Iter}{\textsc{Iter}}
\newlang{\Palindromes}{\textsc{Palindromes}}
\newlang{\Sparsification}{\textsc{Sparsification}}
\newlang{\HamEst}{\mathsf{HammingEst}}
\newlang{\HamHit}{\mathsf{HammingHit}}
\newlang{\CktEval}{\textsc{Circuit-Eval}}
\newlang{\Hard}{\textsc{Hard}}
\newlang{\cHard}{\textsc{cHard}}
\newlang{\CAPP}{CAPP}
\newlang{\GapUNSAT}{GapUNSAT}
\newlang{\searchSAT}{searchSAT}
\newlang{\OV}{OV}
\newlang{\PRIMES}{PRIMES}
\renewlang{\PCP}{PCP}
\newlang{\PCPP}{PCPP}
\newlang{\CircuitSAT}{\textnormal{\textsc{Circuit-SAT}}}
\newclass{\FMA}{FMA}
\newclass{\Avg}{Avg}
\newclass{\ZPEXP}{ZPEXP}
\newclass{\DLOGTIME}{DLOGTIME}
\newclass{\ALOGTIME}{ALOGTIME}
\newclass{\MATIME}{MATIME}
\newclass{\ATIME}{ATIME}%
\newclass{\SZKA}{SZKA}
\newclass{\Laconic}{Laconic\text{-}}
\newclass{\APEPP}{APEPP}
\newclass{\SAPEPP}{SAPEPP}
\newclass{\TFSigma}{TF\Sigma}
\newclass{\NTIMEGUESS}{NTIMEGUESS}
\newclass{\FZPP}{FZPP}
\newclass{\UEoPL}{UEoPL}
\newclass{\EoPL}{EoPL}
\newclass{\SoPL}{SoPL}
\newclass{\CLS}{CLS}
\newclass{\PWPP}{PWPP}
\newclass{\SVN}{SVN}

\newlang{\Formula}{Formula}
\newlang{\THR}{THR}
\newlang{\MAJ}{MAJ}
\newlang{\DOR}{DOR}
\newlang{\ETHR}{ETHR}
\newlang{\Midbit}{Midbit}
\newlang{\LCS}{LCS}
\newlang{\TAUT}{TAUT}

\newcommand{\pr}{\textnormal{\textrm{pr}}}
\newcommand{\prMA}{{\pr\MA}}%
\newcommand{\prAM}{{\pr\AM}}

\newcommand{\prMATIME}{{\pr\MATIME}}

\newcommand{\smart}{\mathsf{smart}\text{-}}

\renewcommand{\S}{\mathsf{S}}%

\newcommand{\calA}{\mathcal{A}}
\newcommand{\calB}{\mathcal{B}}
\newcommand{\calC}{\mathcal{C}}
\newcommand{\calD}{\mathcal{D}}

\newcommand{\N}{\mathbb{N}}

\newcommand{\F}{\mathbb{F}}

\newcommand{\pfx}{\mathsf{pfx}}

\newcommand{\Yes}{\textsc{Yes}}
\newcommand{\No}{\textsc{No}}

\newcommand{\Corr}{\mathsf{Corr}}%

\newcommand{\Range}{\mathrm{Range}}

\newcommand{\JK}{\textnormal{\textsf{Je\v{r}\'{a}bek--Korten}}}

\newcommand{\Jerabek}{Je\v{r}\'{a}bek\xspace}

\usepackage[inline,marginclue,index]{fixme}  %
\fxsetup{theme=color,mode=multiuser}

\definecolor{color1}{RGB}{46,134,193}
\FXRegisterAuthor{hanlin}{ahanlin}{\colorbox{color1}{\color{white}Hanlin}}

\FXRegisterAuthor{ryan}{aryan}{\colorbox{cyan}{\color{white}Ryan}}

\usepackage{aliascnt} %

\makeatletter
\newcommand{\ALG@lineautorefname}{Line}
\makeatother

\bibliography{ref}
\def\Solve{\mathsf{Solve}}
\def\Recon{\mathsf{Recon}}

\newcommand{\RoundLength}[2]{\ensuremath{\mleft[\scalebox{0.9}{$
    \begin{aligned}
        \textnormal{\textsf{\#rounds}}=&~{#1}\\
        \textnormal{\textsf{length}}=&~{#2}
    \end{aligned}
$}\mright]}}

\begin{document}

\newgeometry{margin=0.8in}
\title{Near-Maximum Circuit Lower Bounds for Exponential Time with Merlin-Arthur Queries}
\ifnum\Anonymity=0
\author{
	Hanlin Ren\footnote{Hanlin Ren is supported by the Massive Dynamics Member Fund at the Institute for Advanced Study.}\\ \small{IAS} \\ \small{\texttt{\href{mailto:h4n1in.r3n@gmail.com}{h4n1in.r3n@gmail.com}}}
	\and
    Ryan Williams\footnote{This work was initiated while visiting the Institute for Advanced Study, Princeton, NJ. This material is based upon work supported by the National Science Foundation under grants DMS-2424441 (at IAS) and CCF-2420092 (at MIT).}\\ \small{MIT}\\
    \small{\texttt{\href{mailto:rrw@mit.edu}{rrw@mit.edu}}}
}
\fi

\maketitle
\vspace{-2em}
\pagenumbering{gobble}

\begin{abstract}
    We prove a near-maximum ($2^n / n$) circuit lower bound for the complexity class $\E^\prMA/_1$, corresponding to exponential time with access to a promise-$\MA$ oracle and one bit of advice. Our proof incorporates the iterative win-win paradigm (Chen--Lu--Oliveira--Ren--Santhanam, FOCS'23), the reduction from the Range Avoidance problem to circuit lower bounds (\Jerabek, \emph{Ann.~Pure Appl.~Log.~'04}; Korten, FOCS'21), and the PCP theorem. Crucial to our proof is the analysis of the complexity class $\P^\NP\RoundLength{r}{s}$, which is $\P^\NP$ with $r(n)$ adaptive rounds of $\NP$ queries, where each $\NP$ query has witness length $s(n)$.
\end{abstract}

{\tableofcontents}

\newpage

\pagenumbering{arabic}

\section{Introduction}

A simple counting argument dating back to Shannon~\cite{Shannon49} shows that a uniformly random $2^n$-bit truth table describes a \emph{near-maximum hard} function requiring Boolean circuits of size $\Omega(2^n/n)$ on inputs of length $n$, with high probability: there are simply too many possible functions to be covered by all circuits of size $2^n/(10n)$. Significant research has gone into understanding how efficiently such a hard function can be constructed, and this problem is central to complexity theory. For example, if there are functions in $\NP$ with near-maximum hard (or even superpolynomially hard) finite slices, then $\P \neq \NP$. If there are near-maximum hard (or even exponentially-hard) functions in $\E = \TIME[2^{O(n)}]$, then $\P = \BPP$~\cite{IW97}. However, the \emph{non-uniformity} of circuit families makes the uniform construction of hard functions apparently extremely difficult. It has been known for decades that the (huge) class $\Sigma_3 \E = \Sigma_3 \TIME[2^{O(n)}]$ contains a function of near-maximum circuit size~\cite{Kannan82} and that the complexity can be reduced slightly to $\E^{\Sigma_2 \P}$~\cite{MVW99}. 

In the last few years, substantial progress has been made on understanding the uniform complexity of near-maximum hard functions. Chen, Hirahara, and Ren~\cite{CHR24} showed that $\Sigma_2 \E$ (indeed, even $\S_2 \E/_1$) contains near-maximum hard functions, resolving a 40-year open problem, using a novel win-win paradigm for constructing hard functions. Li~\cite{Li24} gave a drastically simplified proof that also improved the lower bound in several aspects; see \cite{CHLR26} for a combined exposition. More recently, Chen, Li, and Liang~\cite{CLL25} showed that exponential time Arthur-Merlin with subexponential advice ($\AMEXP/_{2^{n^{\eps}}}$) contains near-maximum hard functions. 

These recent results raise the question: can the complexity of exponentially hard functions be reduced even further? Natural candidates for such functions are those in complexity classes where we already know some nontrivial circuit complexity lower bounds, but the known lower bounds are much smaller than $\Omega(2^n/n)$ size. For example, it has been known since \cite{BFT98} that the class \emph{exponential-time Merlin Arthur} ($\MAEXP$) has functions without polynomial-size circuits. Can the complexity of the hard function in \cite{CLL25} be reduced from Arthur-Merlin down to Merlin-Arthur (non-interactive proofs), and can the $2^{n^{\eps}}$ size advice be removed or made much shorter? (Recall that Merlin-Arthur can always be simulated by Arthur-Merlin with polynomial overhead~\cite{Babai85}, but showing the converse is a wide open problem.) An additional motivation for proving circuit lower bounds for classes related to Merlin-Arthur instead of Arthur-Merlin is their connections to \emph{easy witness lemmas}~\cite{ImpagliazzoKW02} and \emph{the algorithmic method}~\cite{MurrayW20, Chen25, Chen23}.

Although Merlin-Arthur complexity classes have long been known to admit non-trivial circuit lower bounds via non-relativizing techniques~\cite{BFT98,Santhanam09}, it has been an outstanding open problem to find a Merlin-Arthur-type complexity class which contains a near-maximum hard function. The proof of \cite{BFT98} that $\MAEXP \not\subset \P/_\poly$ generalizes to show a ``half-exponential'' size lower bound: a lower bound for size functions $h$ such that $h(h(\poly(n))) \leq 2^n$. Their lower bound provably does not relativize: they give an oracle relative to which $\MAEXP$ has small circuits. However, for the slightly larger class $\E^{\pr\MA}$ (exponential time with queries to a promise version of Merlin-Arthur), there \emph{is} a relativizing proof of a half-exponential circuit complexity lower bound (see \autoref{appendix: E to prMA halfexp}) based on \cite{ChakaravarthyR11}. Half-exponential functions are a typical barrier for circuit size lower bounds: prior to the breakthrough of \cite{CHR24}, the best circuit size lower bound for $\Sigma_2 \E$ was only half-exponential~\cite{MVW99}.

\subsection{Our Results} Recall that $\pr\MA$ is the class of promise problems solvable in polynomial time with a Merlin-Arthur protocol. In particular, a promise problem~\cite{EvenSY84} is given by a pair $(\Pi_\Yes, \Pi_\No)$ where $\Pi_\Yes$ and $\Pi_\No$ are disjoint but do not necessarily cover all possible inputs, and the Merlin-Arthur protocol is only judged on its behavior for inputs in $\Pi_\Yes \cup \Pi_\No$ (we say such inputs ``satisfy the promise'').

Our main result is that there is a near-maximum hard function in $\E^{\pr\MA}/_1$, the class of problems computable in exponential time with oracle access to a $\pr\MA$ language and $1$ bit of advice. 

\begin{theorem}[Main Result]\label{thm: main} There is a function in $\E^{\pr\MA}/_1$ that requires  circuits of $\Omega(2^n/n)$ size.
\end{theorem}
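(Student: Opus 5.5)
We plan to combine the Jeřábek–Korten reduction from Range Avoidance to hard truth tables with the iterative win-win paradigm of Chen–Lu–Oliveira–Ren–Santhanam. Write $C_{\mathrm{TT}}\colon\{0,1\}^{N-1}\to\{0,1\}^N$ for the circuit mapping a description of a size-$2^n/(10n)$ circuit to its truth table, with $N=2^n$. Korten's construction expands $C_{\mathrm{TT}}$ into a GGM-style tree circuit $\JK$ with exponentially many output bits. Any string outside its range yields a non-trivial string outside the range of $C_{\mathrm{TT}}$, hence a near-maximum hard truth table. Following Li, we run the single-valued $\Sigma_2$ ``$\Solve$'' procedure over the tree: at each node, the (lexicographically first) string is recovered from its children via an $\NP$-search ($\Recon$).

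Our goal is to replace the $\Sigma_2$ oracle with $\pr\MA$ queries. The structure we exploit is that this procedure is naturally a $\P^\NP\RoundLength{r}{s}$ computation. Its rounds correspond to the tree levels, so $r$ is about the depth, and each query asks for a preimage of length $s\approx 2^n$. The plan has three steps.

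\textbf{Step 1 (PCP compression).} We will show that $\P^\NP\RoundLength{r}{s}$ computations can be simulated in $\P^{\pr\MA}$ whenever the witnesses themselves have small circuits. The idea is that Merlin sends, for each round, a small circuit encoding a PCP proof of the witness (or a certificate of a No answer, used in the opposite direction). Arthur then spot-checks these with the PCP verifier. Using the PCP theorem, a query whose witness is a $2^n$-bit string with circuit complexity $t$ is certified by a proof of size about $\poly(t,n)$. This is exactly an easy-witness-style argument.

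\textbf{Step 2 (iterative win-win).} We will define a sequence of candidate hard-function constructions at increasing input lengths $n_0<n_1<\dots$. At each length, exactly one of two things happens. Either the intermediate strings of the Korten tree computed in round $i$ have small circuits, in which case Step 1 lets $\pr\MA$ reproduce round $i+1$ efficiently. Or some such string is itself hard, and we can output it directly as a truth table with near-maximum hardness via the round structure.

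The one bit of advice records which branch holds at the relevant length, as in CLORS-style constructions. It must certify that the selected output is correct, meaning a single-valued, fixed truth table rather than one depending on oracle answers for promise-violating queries. To guarantee this, we need to ensure that all $\pr\MA$ queries we make satisfy the promise in the relevant branch.

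\textbf{Step 3 (parameter bookkeeping).} We must track the number of rounds and the witness length so that the total running time stays $2^{O(n)}$ and the final hardness stays $\Omega(2^n/n)$ rather than degrading through the iteration.

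We expect the main obstacle to be Step 1 combined with the round count. Each round of adaptive $\NP$ queries compounds the circuit complexity of the witnesses, so we must bound $r$ to about $O(n)$ levels. We also need to show that the blow-up per round is only polynomial, keeping the final simulation within $\E^{\pr\MA}$. Our first attempt will be to encode each round's state as the truth table of a function whose circuit size is at most a fixed polynomial of the previous one, and to choose the win-win thresholds accordingly.
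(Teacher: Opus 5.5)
There is a genuine gap, and it lies where the argument has to be new. In Step~1 Merlin is to send ``a certificate of a \textsc{No} answer.'' But an $\NP$ query answered \textsc{No} has no witness at all, so a round-by-round simulation of $\P^\NP\RoundLength{r}{s}$ in $\P^{\prMA}$ cannot proceed this way. Your plan names the need for promise-satisfying queries but gives no mechanism for it.

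The missing idea is the Krentel-style history of \autoref{lemma: computational history of P NP a}:
\begin{itemize}
\item The history begins with the counts $y_1,\dots,y_r$ of accepted queries per round, then the output, then the witnesses (with $1^s$ for rejected queries).
\item The true history is the lexicographically \emph{largest} satisfying assignment of a circuit $C_x$.
\item Any satisfying assignment agreeing with it on the first $r'=r\lceil\log(t+1)\rceil$ bits carries the correct output.
\end{itemize}
Reconstruction then never simulates rounds. It fixes these $r'$ bits one at a time by asking ``is there a small circuit whose truth table is a PCP proof accepted with probability $>1/2$ and whose self-corrected prefix extends $\pfx\circ 1$?'' It then asks once more for the desired output bit. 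Every such query is inside the promise for two reasons. The honest PCP proof is accepted with probability $1$, has a small circuit, and has the maximal prefix among all sound proofs. PCP soundness plus low-degree self-correction prevents any proof from exhibiting a larger prefix. So the round count enters only as the number $r'$ of sequential oracle queries, and nothing compounds. The bound needed (and achieved) is $r=O(N)$ in your notation, polynomial in the $\Avoid$ instance size, not $O(n)$.

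Step~2 puts the dichotomy on the wrong object, and its ``hard'' branch does not give an efficient algorithm. What must have a small circuit is the \emph{entire} PCP proof $\Pi_x$ of the previous solver's history, since Merlin sends one circuit for all of it. Small circuits for raw witnesses do not obviously yield small circuits for their Reed--Muller encodings and auxiliary oracles. The paper ensures this by feeding $\Pi_x$ itself as the truth table to $\JK(G_{n_{i+1}},\cdot)$ at the next length.
\begin{itemize}
\item If $\JK$ fails, $\Pi_x$ has a circuit of size $O(|G|\log T)$. Then $\Recon$ recovers the output of $\BF_i$ on $G_{n_i}$ in $\poly(n_i)$ time with smart queries.
\item If it succeeds, you cannot ``output the hard string directly,'' since producing it costs time $T_i$, possibly exponential in $n_i$. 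Instead you get a new solver $\BF_{i+1}$ for $G_{n_{i+1}}$, with $n_{i+1}=n_i^C$. It runs in time $T_i^D$ with $r_{i+1}=r_is_i+O(n_{i+1}+\log T_i)$ rounds and $s_{i+1}=n_{i+1}$.
\end{itemize}
Since $C>D$, from some level on the solver runs in $\poly(n_i)$ time with an $\NP$ oracle, and since $C\ge 3$ the rounds stay $O(n_i)$. The advice bit chooses between two options: reconstruction at the first level where $\BF_{i+1}$ fails (so $\BF_i$ succeeded), or direct simulation at the first level where the time is polynomial.

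By contrast, your proposed bookkeeping, with circuit size growing by a fixed polynomial per round, gives size $s_0^{c^{r}}$ after $r$ rounds. That is superpolynomial once $r=\omega(1)$, so it cannot be made to work.
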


In fact, the hard function given in  \autoref{thm: main} has the additional nice property that, on the correct advice bit, the exponential-time algorithm \emph{only} queries the $\pr\MA$ oracle on queries that satisfy the promise of the $\pr\MA$ problem. Such reductions are called \emph{smart reductions}~\cite{GrollmannS88, Goldreich06a}; hence, our lower bound \autoref{thm: main} holds for a function in the potentially smaller class $\smart\E^{\pr\MA}/_1$. We refer the reader to \autoref{sec: smart reductions} for a more formal definition.

Prior work from about 15 years ago hinted that a result like \autoref{thm: main} is possible. Ayd{\i}nl{\i}o\u{g}lu, Gutfreund, Hitchcock, and Kawachi~\cite{AydinliogluGHK11} proved that if $\pr\AM$ is in polynomial time with an $\NP$ oracle ($\prAM \subseteq \P^{\NP}$) then $\E^\NP$ contains near-maximum hard functions. Roughly speaking, they showed how to construct a hard truth table by using a $\prAM$ oracle to approximately count how many circuits agree with the current truth table prefix, allowing them to eliminate circuits efficiently. However, they also pointed out that their result does \emph{not} imply an unconditional circuit lower bound for $\E^{\prAM}$: the hard function they produce \emph{depends on oracle answers outside of the promise}. In that sense, their hard function is not single-valued as a computation on oracles. Indeed, they explicitly emphasized that $\E^{\prAM} \not\subset \SIZE[\Omega(2^n/n)]$ remained an interesting open problem. (This question is no longer open anymore: the recent result of~\cite{CHR24, Li24} shows that $\S_2\E \not\subseteq \SIZE[2^n / n]$, and since $\S_2\E \subseteq \E^{\prAM}$~\cite{ChakaravarthyR11}, we have $\E^{\prAM} \not\subseteq \SIZE[2^n / n]$ as well.)%

Fixed-polynomial size lower bounds for \emph{polynomial time} with a $\prMA$ oracle have been known for many years. Santhanam~\cite{Santhanam09} proved that $\MA/_1$ and $\prMA$ require $n^k$-size circuits for all $k$; scaling up his result to $\MAEXP$ only implies a half-exponential size lower bound. In their work on improving Karp-Lipton collapses, Chakaravarthy and Roy~\cite{ChakaravarthyR11} prove that $\NP \subseteq \P/\poly$ implies $\PH=\P^{\prMA}$. This collapse immediately implies~\cite{Kannan82} that for every $k$, there is a function in $\P^{\prMA}$ not in $\SIZE[n^k]$. Their resulting $\P^{\prMA}$ machine asks non-promise queries (so it is not ``smart''), but the machine still gives consistent accept/reject answers regardless of how queries outside of the promise are answered. However, scaling their result up to $\E^{\prMA}$ also only yields a half-exponential size lower bound (see \autoref{appendix: E to prMA halfexp}). To circumvent these half-exponential barriers, we need a totally different approach.

\paragraph{A New Algorithm for Range Avoidance.} Like previous works~\cite{CHR24, Li24, CLL25}, our circuit lower bounds are proved by designing new algorithms for solving the \emph{Range Avoidance} ($\Avoid$) problem, introduced by \cite{KleinbergKMP21}. In this problem, we are given a circuit $C$ which has $n$ inputs and $n+1$ outputs, and the task is to find a $y$ of $n+1$ bits which is not in the range of $C$. This is a fundamental meta-complexity problem which defines a \emph{formal task} associated with counting arguments like the one described in the first paragraph of this paper: if $C$ takes in the description of a circuit of size $2^n/(10n)$ and outputs a $2^n$-bit truth table, then solving $\Avoid$ on $C$ directly corresponds to finding a near-maximum hard function. Range Avoidance has seen a flurry of work in recent years (for example~\cite{Korten21, DBLP:conf/approx/GuruswamiLW22,DBLP:conf/focs/RenSW22,DBLP:conf/stoc/IlangoLW23,DBLP:conf/stoc/ChenHLR23,DBLP:conf/stoc/ChenL24,DBLP:conf/focs/KortenP24} and Korten's recent survey~\cite{Korten25}). 

\begin{theorem}
    For every constant $d\ge 1$, there is a deterministic polynomial-time algorithm $\calA$ with access to a $\pr\MA$ oracle and one bit of advice $\{\alpha_n\}_{n\in\N}$ such that the following holds for infinitely many integers $n$. For every circuit $C: \{0, 1\}^n \to \{0, 1\}^{n+1}$ of size at most $n^d$, $\calA(C, \alpha_n)$ is smart (i.e., only makes queries to the $\pr\MA$ oracle that is indeed in the promise) and outputs a string $y \in \{0, 1\}^{n+1} \setminus \Range(C)$.
\end{theorem}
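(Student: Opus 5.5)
The plan is to follow the iterative win-win paradigm of Chen--Lu--Oliveira--Ren--Santhanam combined with the \JK{} reduction, in the style of \cite{CHR24, Li24}, but to replace the $\Sigma_2$ or $\S_2$ verification step by a $\prMA$ oracle. The abstract points to the intermediate class $\P^\NP\RoundLength{r}{s}$ as the organizing tool. Concretely, given $C:\{0,1\}^n\to\{0,1\}^{n+1}$, I first apply Korten's GGM-tree construction. This yields a circuit $\mathrm{GGM}_C$ stretching $n$ bits to $T=2^k\cdot n$ bits, built from a depth-$k$ tree of $C$-applications, such that any string outside $\Range(\mathrm{GGM}_C)$ can be converted into a non-output of $C$ by a $\P^\NP$ procedure. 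That procedure walks down the tree and, at each node, asks an $\NP$ question of the form ``does this pair of $n$-bit blocks have a $C$-preimage?'' The queries are adaptive, with $r\approx k$ rounds, and each query has witness length $s\approx n$.

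The first key step is the \emph{round/witness-length compression lemma}. I would show that for $s$ polynomial in the input length, a $\P^\NP\RoundLength{r}{s}$ computation whose output is checkable can be simulated by a smart $\P^{\prMA}$ machine running in time roughly $\poly(n)\cdot 2^{O(r)}$ or $\poly(r,s)$. The idea is that Merlin sends the entire transcript of query answers together with witnesses for the yes-answers. Arthur then uses PCP proofs to confirm, by spot-checking, that the witnesses are correct. The no-answers are the hard part. I would handle them by the win-win. Either the relevant $\NP$ predicate has small-circuit witnesses/certificates, which Merlin can send as circuits for Arthur to check via PCPs (an easy-witness-style argument), or the truth table itself is hard, in which case we are already done.

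The second step sets up the iteration. I define a sequence of input lengths $n_0<n_1<\dots<n_t$ and, at each level, a candidate hard truth table computed by the previous level's $\prMA$ procedure. At each level there are two cases. Either the candidate is a genuine non-output, and we output it. Or it has a small circuit, in which case that circuit serves as a succinct witness, reducing the round complexity or witness length of the next level's $\P^\NP\RoundLength{r}{s}$ computation to something $\prMA$ can verify. The single advice bit indicates whether the iteration succeeds at the current $n$ or at a larger one. This is why the conclusion holds only for infinitely many $n$. Smartness has to be maintained at every stage: with the correct advice bit, every query handed to the $\prMA$ oracle must satisfy the promise. To ensure this, each query asks Merlin to supply data whose validity Arthur can either verify with high probability or reject with high probability, for example through PCP-encoded transcripts with canonical (lexicographically first) choices.

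I expect the main obstacle to be exactly the treatment of the $\NP$ \emph{no}-answers and the enforcement of canonicity, so that queries stay inside the promise. $\MA$ cannot certify unsatisfiability. My first attempt would be to have Merlin commit to the lexicographically first accepting path of the GGM tree traversal. Arthur would then check consistency only locally, using a PCP of proximity over the committed transcript. The bound on rounds $r=O(\log T)$ would be used so that the entire adaptive structure can be guessed and locally verified within polynomial time. If local checking fails to certify the no-answers, the fallback is to shrink $s$ through the win-win, so that the no-answers become decidable by brute force over $2^{s}$ witnesses in polynomial time.
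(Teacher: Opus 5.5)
There is a genuine gap, and it is the one you flag yourself: you never give a way for $\MA$ to handle the NO answers, and each mechanism you sketch fails.

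\begin{itemize}
\item A transcript with witnesses for the YES answers lets Merlin simply declare a satisfiable query to be NO.
\item A NO answer to an $\NP$ query has no witness at all, small or not. So an easy-witness-style argument about individual $\NP$ predicates cannot help.
\item Local PCP(P) checks of a committed traversal only certify that it is \emph{some} consistent history, not the lexicographically first one, which is a universal condition.
\item Brute force over $2^{s}$ witnesses is unavailable, since \JK{} issues queries of witness length $s=n$.
\end{itemize}

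The paper's key idea is a Krentel-style normal form for $\P^\NP\RoundLength{r}{s}$ (Lemma~\ref{lemma: computational history of P NP a}). The history is prefixed by the counts $y_1,\dots,y_r$ of accepted queries in each round. Then the true history is the lexicographically largest satisfying assignment of a circuit $C_x$, and every satisfying assignment agreeing with it on the first $r'=r\lceil\log(t+1)\rceil$ bits carries the correct output. NO answers are never certified; they are forced by maximizing this short prefix.

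A deterministic machine finds that prefix bit by bit with $r'$ adaptive $\prMA$ queries. Each asks: is there a circuit of size $k'$ whose truth table is a PCP proof (Theorem~\ref{thm: PCP}) accepted by the verifier, and whose self-corrected Reed--Muller prefix extends $\pfx\circ 1$? One final query reads off the requested output bit. These queries lie in the promise whenever \JK{}, run with the PCP-encoded history $\Pi_x$ as its truth table, answers \texttt{easy}. In that case the honest $\Pi_x$ itself has a small circuit and is accepted with probability $1$, while PCP soundness plus lexicographic maximality gives the rejection side. This is why rounds, not time, must be tracked: the reconstruction makes $r'$ queries, which must be $\poly(n_i)$ even when $t=T_i$ is exponential.

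Your description of the iteration also does not match what the argument needs. The candidate truth table at level $i+1$ is not produced by a $\prMA$ procedure. It is the PCP-encoded history of the level-$i$ algorithm $\BF_i\in\TIME[T_i]^\NP\RoundLength{r_i}{s_i}$ (initially exponential-time brute force), and $\BF_{i+1}=\Solve_{f_i}$ runs \JK{} on it against $G_{n_{i+1}}$.
\begin{itemize}
\item If $\BF_{i+1}$ fails (take $i$ minimal), the small circuit does not speed up the next level. Instead it yields a $\poly(n_i)$-time smart $\prMA$ reconstruction of the correct output of $\BF_i$ at the \emph{lower} length $n_i$.
\item Otherwise the recurrence $r_{i+1}=r_is_i+O(n_{i+1}+\log T_i)=O(n_{i+1})$, with $n_{i+1}=n_i^C$, keeps the chain going until $T_i=\poly(n_i)$.
\end{itemize}

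Finally, this chain needs one canonical instance per length. So it is run on the $\P$-uniform truth-table generators $G_n$, not on the input $C$: an arbitrary $C$ defines no instances at the other lengths $n_i$, and the advice bit cannot depend on $C$. The statement for arbitrary circuits of size $n^d$ then follows by feeding the resulting hard truth table into Korten's $\FP^\NP$ reduction (Lemma~\ref{lemma: Korten reduction}), whose $\NP$ queries are trivially smart.
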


\begin{remark}
    As discussed in~\cite[Section 1.2.1]{CHR24}, to obtain circuit lower bounds, we need to design \emph{single-valued} algorithms for $\Avoid$. Roughly speaking, an algorithm $\calA$ for a search problem is \emph{single-valued}, if for each instance $C$, there is a canonical output $y_C$ that is outputted by every (or most) computational paths of $\calA$. We note that every smart $\FP^{\pr\MA}$ algorithm is necessarily single-valued, since the algorithm is deterministic and only makes queries inside the promise. (In contrast, if we drop the smartness requirement, then an $\FP^{\pr\MA}$ algorithm may not be single-valued, as the answer can depend on the values returned by the oracle on queries outside the $\pr\MA$ promise.)
\end{remark}

\subsection{Techniques} At a high level, we follow the ``iterative win-win'' method of~\cite{CLORS23} for pseudodeterministic constructions. Recall that a pseudodeterministic algorithm~\cite{DBLP:journals/eccc/GatG11} is a randomized algorithm that on an input $x$, with high probability outputs a \emph{canonical} solution $f(x)$ that does not depend on its internal randomness. (In other words, they are single-valued $\BPP$ or $\ZPP$ algorithms.)

The core ingredient underlying the iterative win-win method is a \emph{hardness-randomness tradeoff} with \emph{uniform reconstruction}. We view our main technical contribution as finding the right version of this ingredient that is tailored to the circuit lower bound we are proving.

\subsubsection{The Iterative Win-Win Method}
Let $\Pi = \{\Pi_n\}$ be a family of dense properties, i.e., $\Pi_n\subseteq \{0, 1\}^n$ and $|\Pi_n| \ge 2^n / \poly(n)$. (In~\cite{CLORS23}, $\Pi$ is the set of prime numbers, while in our paper, $\Pi$ is the set of hard truth tables.) Our goal is to design an efficient algorithm that successfully outputs an element of $\Pi_n$ for infinitely many input lengths $n$.

Consider two input lengths $n$ and $N$, where $n < N < \poly(n)$. Imagine that there is a deterministic ``brute force'' algorithm $\BF_n$ that successfully finds an element of $\Pi_n$ in time $T$. (The relationship between $T$ and $n$ may be arbitrary, since we start from the \emph{real} brute force algorithm where $T = \exp(n)$, and will reduce $T$ to be gradually closer to $\poly(n)$ as the win-win argument proceeds.) On such pair of input lengths, we perform a win-win argument of the following form:
\begin{description}
    \item [(Win)] If $\Pi_N$ is ``helpful'' for speeding up the brute force algorithm $\BF_n$, then we obtain an algorithm in $\poly(N)$ time that correctly simulates $\BF_n$. In~\cite{CLORS23}, this algorithm is randomized, which means that we obtain a randomized algorithm in $\poly(N)$ time that finds an element in $\Pi_n$. Moreover, since the randomized algorithm always outputs the output of $\BF_n$, it is also a \emph{pseudodeterministic} algorithm. In our work, this algorithm will run in $\smart\FP^\prMA$.
    \item [(Improve)] If $\Pi_N$ is ``not helpful'' for speeding up $\BF_n$, then we (somehow) obtain an algorithm $\calA_N$ for finding an element of $\Pi_N$ with time complexity $T' \le \poly(T)$. Since $T' \le 2^{O(n)} \ll 2^{O(N)}$, we make progress by bringing $T'$ closer to $\poly(N)$ than $2^{O(N)}$. We then proceed to the next level of the win-win argument by replacing $n$ with $N$, setting $\BF_N$ to be the ``better'' brute force algorithm $\calA_N$, and replacing $N$ with another input length $N' := \poly(N)$.
    
    Choosing the parameters carefully, we can guarantee that at some level of the win-win argument, we will have $T \le \poly(n)$ and we can afford to run $\BF_n$ (details omitted).
\end{description}

To summarize, for the sake of iterative win-win, we need that for every possible combination of $\BF_n$ and $\Pi_N$, either $\Pi_N$ is ``helpful'' for speeding up $\BF_n$, or $\BF_n$ implies an algorithm for finding an element in $\Pi_N$ in time $\poly(T) = \poly(|\BF_n|)$. This sounds familiar to a \emph{hardness-randomness tradeoff.}

\paragraph{The hardness-randomness tradeoff.} It is well-known~\cite{NisanW94, IW97} that circuit lower bounds imply derandomization. More precisely, letting $L \in \BPP$, there is a deterministic machine ${\rm IW97}(x, f)$ that takes an instance $x$ and a hard truth table $f$, runs in $\poly(|x|, |f|)$ time, and outputs $L(x)$. Treating the problem of finding an element in $\Pi_N$ as $L$ and (the computational history of) $\BF_n$ as $f$, we can see that:\footnote{Although ``finding an element in $\Pi_N$'' is technically not a (decisional) $\BPP$ problem, the hardness-randomness tradeoffs in~\cite{NisanW94, IW97} can still solve it by iterating through a complexity-theoretic PRG constructed from the hard truth table.}
\begin{itemize}
    \item either the circuit complexity of $\BF_n$ is small (for reasons that will be evident later, we will say that $\Pi_N$ is ``helpful'' for speeding up $\BF_n$ in this case);
    \item or $\BF_n$ implies a faster ($\poly(T)$-time) algorithm for constructing $\Pi_N$ by invoking ${\rm IW97}$.
\end{itemize}

We note that in our paper, since $\Pi_n$ is the set of hard truth tables, instead of \cite{NisanW94, IW97}, we need a hardness-randomness tradeoff for solving $\Avoid$. Looking ahead, we will use the \Jerabek--Korten reduction (\cite{Jerabek04, Korten21}, see also \autoref{thm: Jerabek-Korten}).

\paragraph{Uniform reconstruction.} We still need to justify that if the circuit complexity of $\BF_n$ is small, then $\Pi_N$ is helpful for speeding up $\BF_n$. At first glance, this seems to follow from the \emph{reconstruction argument} of~\cite{NisanW94}: Given a truth table $f$ and an adversary $\Pi_N$ breaking the PRG constructed from $f$, we can efficiently compute a small circuit for $f$. However, the reconstruction procedure of~\cite{NisanW94} requires some advice bits depending on $f$, thus we do not obtain a genuine speed-up algorithm for $\BF_n$. (This is also the same reason that \cite{NisanW94, IW97} requires a lower bound against \emph{non-uniform circuits} instead of a \emph{uniform probabilistic} ($\BPP)$ algorithm.)

For the task of constructing prime numbers, \cite{CLORS23} gets around this issue using the hardness-randomness tradeoff with \emph{uniform reconstruction} in~\cite{CT21b}. Applying a highly non-trivial arithmetization of $\BF_n$ in~\cite{gkr15}, the reconstruction algorithm can compute the needed advice bits ``on the fly''. We omit the details here, but the end result is that there is indeed a $\BPP$ algorithm (without any non-uniform advice) that computes (each output bit of) $\BF_n$ given oracle access to $\Pi_N$; in this sense, we say that $\Pi_N$ is ``helpful'' for speeding up $\BF_n$.

\subsubsection{Towards Solving Range Avoidance}
As discussed before, we use the \Jerabek--Korten reduction as the hardness-randomness tradeoff. In what follows, we let $T$ be the length of the ``hard truth table'' $f$, let $n$ be the size of the $\Avoid$ instance $G$ we are solving, and think of $T$ as an arbitrary parameter between $\poly(n)$ and $2^{O(n)}$.
\begin{theorem}[{Informal; \cite{Jerabek04, Korten21}}]\label{thm: JK informal}
    There is an algorithm $\JK(G, f)$ running in deterministic $\poly(n, T)$ time with access to an $\NP$ oracle such that, if the circuit complexity of $f$ is $\ge \poly(n, \log T)$, then $\JK(G, f)$ solves $\Avoid$ on the instance $G$.
\end{theorem}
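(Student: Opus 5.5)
We will prove the statement with the standard GGM-tree (Goldreich--Goldwasser--Micali) construction, as in \cite{Korten21}. First we stretch $G$. From $G:\{0,1\}^n\to\{0,1\}^{n+1}$ we obtain $G':\{0,1\}^n\to\{0,1\}^{2n}$ by iterating $G$ about $n$ times, each time keeping one output bit and feeding the remaining $n$ bits back in. A preimage-finding procedure for $G'$ then yields one for $G$: any $2n$-bit string outside $\Range(G')$ can be traced back, with at most $n$ $\NP$ queries, to an $(n+1)$-bit string outside $\Range(G)$. Next, set $k=\lceil\log(T/n)\rceil$ and define $\mathrm{GGM}_{G'}:\{0,1\}^n\to\{0,1\}^{T}$. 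It places the seed at the root of a depth-$k$ binary tree and expands each node label $u$ into its two children $G'(u)_{\mathrm{left}}$ and $G'(u)_{\mathrm{right}}$. The output is the concatenation of the leaf labels, truncated or padded to the length of $f$.

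\textbf{Key observation.} Every bit of $\mathrm{GGM}_{G'}(s)$ is computable by a circuit of size $\poly(n,\log T)$ with the seed $s$ hardwired. Given an index $i$, the circuit walks the root-to-leaf path determined by the bits of $i$ and evaluates $G'$ $k$ times along the way. Hence if $f$ has circuit complexity above this $\poly(n,\log T)$ bound, then $f\notin\Range(\mathrm{GGM}_{G'})$.

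\textbf{The algorithm $\JK(G,f)$.} We view $f$ as a labeling of the leaves and process the tree bottom-up.
\begin{enumerate}
    \item At each internal node $v$, the labels $(\ell_L,\ell_R)$ of its two children give a $2n$-bit string.
    \item We ask the $\NP$ oracle whether some $u$ satisfies $G'(u)=(\ell_L,\ell_R)$.
    \item If yes, we find the lexicographically first such $u$ by binary search with $O(n)$ adaptive $\NP$ queries, and label $v$ with $u$.
    \item If no, we output $(\ell_L,\ell_R)\notin\Range(G')$ and convert it into a non-image of $G$ via the reduction above.
\end{enumerate}
We process nodes in a fixed order, say left to right, level by level. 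The algorithm makes $O(T)$ node visits and uses $\poly(n)$ $\NP$ queries per visit, so it runs in $\poly(n,T)$ time.

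\textbf{Correctness.} Suppose no node fails. Then the root receives a label $s$ with $\mathrm{GGM}_{G'}(s)=f$, which contradicts the hardness of $f$. So some node must fail, and the algorithm outputs a valid non-image.

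\textbf{Main obstacle.} The delicate step is the stretch reduction from $G$ to $G'$ and its inverse, ensuring that a $2n$-bit non-image of $G'$ is correctly converted into an $(n+1)$-bit non-image of $G$. We will handle this by the same bottom-up trick along the chain of $n$ iterations. Starting from the final $n$-bit state, we repeatedly ask for a preimage of (kept bit, state) under $G$. The first step where no preimage exists gives the non-image; one such step must exist, since otherwise the string would lie in $\Range(G')$. The remaining work is bookkeeping so that the circuit-size bound is $\poly(n,\log T)$ and does not depend on $T$.
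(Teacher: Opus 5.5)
Your proposal is correct and follows essentially the same route as the paper. Both arguments process the GGM tree over the (padded) truth table $f$ bottom-up, replacing each $2n$-bit block by a lexicographically first preimage, so that either some block is a non-image of $G$ or the resulting root seed yields a circuit of size $O(|G|\log T)$ for $f$.

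The differences are minor. The paper's formal version (\autoref{thm: Jerabek-Korten}) takes $G$ that already maps $n$ bits to $2n$ bits, and it batches the queries level by level into $O(n+\log(T/n))$ parallel rounds, which the later argument needs but this informal statement does not. Your chain-of-iterations stretching step is correct, with one small miscount: tracing a non-image back costs $n$ preimage searches, i.e.\ $O(n^2)$ $\NP$ queries rather than $n$.
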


This is indeed a hardness-randomness tradeoff: Given a truth table $f$ with large enough circuit complexity (and arbitrary length $T$), $\JK$ solves a derandomization task (namely $\Avoid$). However, as discussed before, we need a hardness-randomness tradeoff with \emph{uniform reconstruction}. That is, imagine that $\JK(G, f)$ \emph{fails} to solve $\Avoid$ on $G$, in what complexity class can we compute $f$ \emph{uniformly}?

\paragraph{A na\"ive attempt.} Suppose that $f$ is (some encoding of) the computational history of a $\TIME[T]^\NP$ computation. If $\JK(G, f)$ fails to solve $\Avoid$ on $G$, then by \autoref{thm: JK informal}, $f$ admits a small circuit. Our goal here is to compute $f$ in a uniform complexity class. Now, this sounds familiar to the \emph{Karp--Lipton theorem}~\cite{KarpL80}.

The Karp--Lipton theorems are a family of theorems showing that non-uniform collapses imply uniform collapses. For example:
\begin{quote}
	If $\EXP \subset \P/_\poly$ then $\EXP = \Sigma_2\P$ (\cite{KarpL80}, attributed to Albert Meyer).
\end{quote}

Such theorems are usually proved by exploiting small circuits which encodes the computational history of a complex computation. Indeed, the $\Sigma_2\P$ algorithm in the above theorem guesses a small circuit encoding the computational history of the $\EXP$ computation and uses a universal quantifier to verify that all of the exponentially many computation steps are executed correctly.

We apply a similar idea in our case. Every problem in $\TIME[T]^\NP$ reduces to finding the lexicographically smallest\footnote{We actually consider lexicographically \emph{largest} satisfying assignment in our formal proof, but it is easy to see that they are equivalent.} satisfying assignment of a size-$\poly(T)$ (satisfiable) $3$-CNF~\cite{Krentel88} and we can define the computational history $f$ to simply be this lex-first satisfying assignment. If $f$ has circuit complexity at most $s = \poly(n, \log T)$, then we can compute each bit of $f$ in $\Sigma_4\TIME[\poly(s)]$: use an existential quantifier to guess a size-$s$ circuit $C$ (whose truth table is supposedly $f$), use a universal quantifier to guess a size-$s$ circuit $C'$ (competing against $C$), use an existential quantifier to guess the smallest index $i$ such that $C(i) < C'(i)$, and finally use a universal quantifier to verify that for every $j < i$, $C(j) = C'(j)$. (We have omitted many details here such as verifying $C$ and $C'$ encode satisfying assignments.) It follows that we obtain a reconstruction algorithm in $\Sigma_4\P$, which implies a near-maximum circuit lower bound for $\Sigma_4\E$. Unfortunately, this is worse than the near-maximum circuit lower bound proved by simple diagonalization~\cite{Kannan82, MVW99}, which results in a hard function in no more than $\Sigma_3\E$.

We remark that by applying a suitable PCP~\cite{AroraS98, AroraLMSS98} on the $3$-CNF produced by~\cite{Krentel88} and using an efficient comparator for Reed--Muller codewords~\cite{Hirahara15}, it is possible to obtain an $\S_2\P$ reconstruction algorithm. However, this still only results in a near-maximum circuit lower bound for $\S_2\E$, which is no better than previous results~\cite{CHR24, Li24}.\footnote{In fact, this \emph{is} the proof of $\S_2\E/_1\not\subseteq\SIZE[2^n/n]$ presented in~\cite{CHR24}. The only difference is that instead of a general PCP, \cite{CHR24} applied a specific Reed--Muller encoding tailored to the $\JK$ procedure. This allowed~\cite{CHR24} to prove their lower bounds in a relativizing way.} %

\paragraph{The crucial insight.} We observe that the $\JK$ procedure is weaker than a full-fledged $\P^\NP$ algorithm. In particular, it only makes $r := \poly(n, \log T)$ many adaptive rounds of $\NP$ queries (see \autoref{thm: Jerabek-Korten}). It can be shown, using nearly the same argument as~\cite{Krentel88}, that any $\TIME[T]^\NP$ computation with only $r$ adaptive rounds of $\NP$ queries reduces to finding a satisfying assignment of a size-$\poly(T)$ $3$-CNF with the \emph{lexicographically smallest length-$r'$ prefix}, where $r' := r\log T$. Since $r' \le \poly(n, \log T)$ is small, our uniform reconstruction algorithm can afford to enumerate these $r'$ bits one by one.

In particular, assuming the computational history has circuit complexity at most $s$, the uniform reconstruction can be done in $\TIME[\poly(r', s)]^{\Sigma_2\P}$. We maintain the current prefix $\pfx$ which is initially empty; after $r'$ rounds, $\pfx$ should be equal to the lexicographically smallest length-$r'$ prefix of any satisfying assignment. One can test whether there is a satisfying assignment with (circuit complexity $\le s$ and) $\pfx$ as a prefix in $\Sigma_2\P$: use an existential quantifier to guess a size-$s$ circuit $C$ and use a universal quantifier to verify that $C$ is a satisfying assignment with prefix $\pfx$. In each round, if there is a satisfying assignment with ($\pfx$ concatenated with $0$) as a prefix, then we append $0$ to $\pfx$; otherwise we append $1$.

This gives a uniform reconstruction procedure in $\FP^{\Sigma_2\P}$. Applying a standard PCP theorem~\cite{AroraS98, AroraLMSS98} to the $3$-CNF, one can verify whether $C$ is a satisfying assignment by random coin tosses, which improves the complexity of the reconstruction procedure to $\FP^{\prMA}$. Moreover, a careful examination of our proof reveals that, on all good input lengths, the reconstruction only makes smart queries to the $\prMA$ oracle. (We can use a bit of advice to signal which input lengths are good.)

We develop the necessary machinery for manipulating $\P^\NP$ computations with a bounded number of adaptive rounds of $\NP$ queries, including a \cite{Krentel88}-style characterization, a PCP theorem, and the reconstruction argument. Note that we actually deal with the class $\P^\NP\RoundLength{r}{s}$, where each $\NP$ query has witness length at most $s$. The main reason to deal with the number of rounds and query lengths separately is that, using the standard search-to-decision reduction for $\SAT$, we need $s\cdot r$ adaptive rounds of $\NP$ queries to compute the ``computational history'' of $\P^\NP\RoundLength{r}{s}$ (see \autoref{lemma: computational history of P NP a}); fortunately this does not affect our overall proofs.

Finally, our techniques fall short of proving a near-maximum lower bound for smaller classes such as (the promise version of) $\MAEXP$. In \autoref{sec: conclusion}, we discuss the possibility of extending our results to these smaller classes.

\section{Preliminaries}

We use $x\circ y$ to denote the concatenation of two binary strings $x$ and $y$.

A function $f: \N \to \N$ is \emph{nice} if there is a deterministic algorithm that given $n$ as input, outputs the value $f(n)$ in time $\polylog(n, f(n))$. This is just a somewhat arbitrary definition of functions that are ``not too pathological''; most functions that appear as complexity bounds are nice, such as $f(n) = n^c$, $f(n) = 2^{cn}$, or $f(n) = 2^{n^c}$ for constants $c > 0$.

\subsection{Smart Reductions to Promise Problems} \label{sec: smart reductions}
A \emph{promise problem}~\cite{EvenSY84} is a pair of languages $(\Pi_\Yes, \Pi_\No)$ with $\Pi_\Yes \cap \Pi_\No = \varnothing$. An algorithm $\calA$ \emph{solves} the promise problem if $\calA$ accepts every input in $\Pi_\Yes$ and rejects every input in $\Pi_\No$. The algorithm is allowed to behave arbitrarily on inputs not in the promise ($x\not\in \Pi_\Yes\cup\Pi_\No$).

Let $\Pi = (\Pi_\Yes, \Pi_\No)$ be a promise problem and $L$ be a language. We say that there is a \emph{smart} reduction from $L$ to $\Pi$~\cite{GrollmannS88}, denoted as $L \in \smart\P^\Pi$, if there is a deterministic polynomial-time oracle algorithm $\calA$ with oracle access to $\Pi$ such that, for every input $x$, $\calA^\Pi(x)$ outputs the value of $L(x)$ and never makes any query outside the promise $\Pi_\Yes \cup \Pi_\No$.

We say that $L \in \smart\P^\Pi/_{a(n)}$ if the oracle algorithm $\calA$ can access an advice string $\alpha_n \in \{0, 1\}^{a(n)}$ that only depends on the input length $n$. The algorithm $\calA$ is required to be correct and only make smart queries when given the correct advice string; however, when given an incorrect advice string, $\calA$ is allowed to behave arbitrarily.

\subsection{Merlin-Arthur Classes}
We recall the definition of $\prMA$:
\begin{definition}
    A promise problem $\Pi = (\Pi_\Yes, \Pi_\No)$ is in $\pr\MA$ if there is a deterministic polynomial-time algorithm $\calA(x, w, r)$ such that $|w|, |r| \le \poly(|x|)$ and the following holds:
    \begin{itemize}
        \item {\bf Completeness:} For every input $x\in \Pi_\Yes$, there exists some $w$ such that
        \[\Pr_r[\calA(x, w, r)\text{ accepts}] = 1.\]
        \item {\bf Soundness:} For every input $x \in \Pi_\No$ and every $w$,
        \[\Pr_r[\calA(x, w, r)\text{ accepts}] \le 1/2.\]
    \end{itemize}
    Similarly, we say $\Pi$ is in $\pr\MA_\E$ (the \emph{exponential-time} analogue of $\pr\MA$) if $\calA$ is allowed to run in $2^{O(|x|)}$ time and $|w|, |r| \le 2^{O(|x|)}$.
\end{definition}

One can replace the soundness parameter $1/2$ with any small constant by repetition. One can also replace the completeness parameter with any constant larger than the soundness parameter (in our case $1/2$) without altering the definition of $\prMA$~\cite{ZachosF87, GoldreichZ11}.

When we consider oracle access to the class $\pr\MA$, it is without loss of generality to have the following specific $\pr\MA$-complete problem in mind: We are given a circuit $A(w, r)$ as input, and
\begin{align*}
    \Pi_\Yes :=&\, \{A: \exists w, \Pr_r[A(w, r) = 1] = 1\},\\
    \Pi_\No :=&\, \{A: \forall w, \Pr_r[A(w, r) = 1] \le 1/2\}.
\end{align*}

In \autoref{sec: conclusion}, we also need the definition of $(\MA\cap\coMA)/_{a(n)}$. Note that this is different from $\MA/_{a(n)} \cap \coMA/_{a(n)}$ in that there needs to be a \emph{single} machine with the same sequence of advice strings that decides the problem in $\MA$ and $\coMA$ simultaneously; see also Section 3.5 of the journal version of \cite{Chen25}.
\begin{definition}
    Let $a(n)$ be a nice function. A language $L$ is in $(\MA\cap\coMA)/_{a(n)}$ if there exists a deterministic polynomial-time algorithm $\calA(x, w, r, \alpha)$ and a sequence of advice strings $\{\alpha_n\}_{n\in \N}$ with each $|\alpha_n| = a(n)$, such that the following hold:
    \begin{itemize}
        \item $|w|, |r| \le \poly(|x|)$ and $\calA(x, w, r, \alpha)$ outputs a value in $\{0, 1, \bot\}$. Think of $w$ as Merlin's proof of the statement ``$L(x) = b$'' for some bit $b$, $r$ as Arthur's randomness, and $\alpha$ as the advice string; if $\calA$ accepts Merlin's proof then it outputs $b$, otherwise it outputs $\bot$.
        \item {\bf Completeness:} For every $x \in \{0, 1\}^\star$, there exists a string $w$ such that
        \[\Pr_r[\calA(x, w, r, \alpha_{|x|}) = L(x)] = 1.\]
        \item {\bf Soundness:} For every $x \in \{0, 1\}^\star$ and $w$,
        \[\Pr_r[\calA(x, w, r, \alpha_{|x|}) = 1 - L(x)] \le 1/2.\]
    \end{itemize}
\end{definition}
The exponential-time analogue of the above class, namely $(\MA_\E\cap\coMA_\E)/_{a(n)}$, is defined similarly.

\subsection{Self-Corrector for Low-Degree Polynomials}
Let $\calD$, $\calA$ be finite sets, we say that two functions $f, g: \calD \to \calA$ are \emph{$\delta$-close} if
\[\Pr_{x\gets \calD}[f(x) \ne g(x)] \le \delta.\]

We need a random self-correction procedure for low-degree polynomials:
\begin{theorem}[Low-Degree Self-Corrector~{\cite{GemmellS92, Sud95}}]\label{thm: low-degree correction}
    There is a probabilistic oracle algorithm $\Corr$ such that the following holds. Let $\F$ be a finite field, $m, \Delta \in \N$ such that $|\F| > \omega(\Delta^2)$. Let $P: \F^m \to \F$ be a polynomial of total degree at most $\Delta$, and $g: \F^m \to \F$ be any function that is $1/4$-close to $P$, then for all $\vec{x} \in \F^m$, $\Corr^g(\F, m, \Delta, \vec{x})$ runs in time $\poly(\Delta, \log|\F|, m)$ and outputs $P(\vec{x})$ with probability at least $2/3$. Moreover, if $g = P$, then for all $\vec{x} \in \F^m$, $\Corr^g(\F, m, \Delta, \vec{x})$ outputs $P(\vec{x})$ with probability $1$.
\end{theorem}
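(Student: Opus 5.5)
The plan is the standard ``random line'' self-correction of Gemmell--Sudan, using Reed--Solomon unique decoding so that we can tolerate a constant fraction of corrupted queries rather than paying a union bound over $\Delta+1$ queries.

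\textbf{Algorithm.} On input $\vec{x}\in\F^m$, $\Corr^g$ proceeds as follows.
\begin{enumerate}
    \item Pick a uniformly random direction $\vec{y}\in\F^m$.
    \item Fix $k := \max(6\Delta, C_0)$ for a large absolute constant $C_0$, and pick $k$ distinct nonzero field elements $t_1,\dots,t_k\in\F$. This is possible because $|\F|>\omega(\Delta^2)\ge k+1$ for large enough $\Delta$; small $\Delta$ is absorbed into constants.
    \item Query $g$ at the points $\vec{x}+t_i\vec{y}$, obtaining values $a_i := g(\vec{x}+t_i\vec{y})$.
    \item The restriction $q(t) := P(\vec{x}+t\vec{y})$ is a univariate polynomial of degree at most $\Delta$. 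Run Berlekamp--Welch on the pairs $(t_i,a_i)$ to find the unique degree-$\le\Delta$ polynomial agreeing with at least $k-e$ of the pairs, where $e<(k-\Delta)/2$.
    \item Output its value at $t=0$. If decoding fails, output anything.
\end{enumerate}
All field operations and the decoding take $\poly(k,\log|\F|,m)=\poly(\Delta,\log|\F|,m)$ time.

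\textbf{Perfect completeness when $g=P$.} Every $a_i=q(t_i)$, so there are no errors. Berlekamp--Welch therefore returns $q$, and $q(0)=P(\vec{x})$, with probability $1$.

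\textbf{Correctness when $g$ is $1/4$-close to $P$.} For each fixed $t_i\neq 0$, the point $\vec{x}+t_i\vec{y}$ is uniform in $\F^m$. Moreover, for $i\ne j$ the pair of points is pairwise independent, since $(t_i,t_j)$ are distinct and nonzero and the map $\vec{y}\mapsto(\vec{x}+t_i\vec{y},\vec{x}+t_j\vec{y})$ is injective. Let $X_i$ be the indicator that $a_i\ne q(t_i)$.
\begin{itemize}
    \item Each $X_i$ satisfies $\mathbb{E}[X_i]\le 1/4$.
    \item By pairwise independence, $\mathrm{Var}(\sum_i X_i)\le k/4$.
    \item By Chebyshev, $\Pr[\sum_i X_i\ge k/3]\le (k/4)/(k/12)^2 = 36/k\le 1/3$ once $k\ge 108$.
\end{itemize}
So with probability at least $2/3$ the number of errors is below $k/3$. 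Since $k\ge 6\Delta$, we have $k/3<(k-\Delta)/2$, so we are within the unique-decoding radius. In that event Berlekamp--Welch recovers $q$ exactly, and the output is $q(0)=P(\vec{x})$.

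\textbf{Main obstacle.} The delicate step is only the parameter balancing: the number of line points $k$ must be large enough both for the concentration bound and for the unique-decoding radius to exceed the error count, and the field must contain that many distinct nonzero points. The hypothesis $|\F|>\omega(\Delta^2)$ more than suffices for this. It is also exactly what the more query-efficient variant of \cite{Sud95}, which uses a degree-$2$ curve or interpolation with fewer points, would need.
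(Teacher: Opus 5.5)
Your proposal has a genuine gap at the pairwise-independence step.

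With $\vec{x}$ fixed and only $\vec{y}$ random, $\vec{x}+t_j\vec{y}=\vec{x}+(t_j/t_i)\bigl((\vec{x}+t_i\vec{y})-\vec{x}\bigr)$ is a deterministic function of $\vec{x}+t_i\vec{y}$. Injectivity of $\vec{y}\mapsto(\vec{x}+t_i\vec{y},\vec{x}+t_j\vec{y})$ only says that distinct directions give distinct pairs. Pairwise independence would need the pair to be uniform on $\F^m\times\F^m$, which is impossible with only $|\F|^m$ choices of $\vec{y}$. So $\mathrm{Var}(\sum_i X_i)\le k/4$ is unjustified. All you actually have is linearity of expectation plus Markov, which gives $\Pr[\sum_i X_i\ge k/2]\le 1/2$, not the $1/3$ you need.

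This is not just a weak analysis; the approach itself fails. The lines through $\vec{x}$ partition $\F^m\setminus\{\vec{x}\}$. So an adversary can set $g=P+1$ on a generic $2/3$-fraction of the points of each of a $3/8$-fraction of these lines, for total error density $1/4$. A random line through $\vec{x}$ is then bad with probability about $3/8>1/3$. On a bad line, for almost every scaling of the direction, the true restriction $q$ agrees with only about $k/3$ of your $k$ samples. That is outside the unique-decoding radius, so Berlekamp--Welch cannot return $q$. Hence random lines through $\vec{x}$ cannot give success probability $2/3$ at error rate $1/4$.

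The missing idea, and what Gemmell and Sudan actually use, is a random degree-$2$ curve $\vec{x}+t\vec{y}_1+t^2\vec{y}_2$ with $\vec{y}_1,\vec{y}_2$ independent and uniform.

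\emph{Pairwise independence.} For distinct nonzero $t_i,t_j$, the two points are obtained from $(\vec{y}_1,\vec{y}_2)$, shifted by $\vec{x}$, via the coordinatewise matrix $\begin{pmatrix}t_i&t_i^2\\ t_j&t_j^2\end{pmatrix}$. Its determinant $t_it_j(t_j-t_i)$ is nonzero, so the pair is uniform on $\F^m\times\F^m$ and the $X_i$ are genuinely pairwise independent.

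\emph{Parameters.} The restriction of $P$ to the curve has degree at most $2\Delta$. So take $k\ge\max(12\Delta,108)$, which ensures $k/3<5k/12\le(k-2\Delta)/2$. The field-size hypothesis still supplies enough distinct nonzero points.

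With these changes, your Chebyshev bound, the Berlekamp--Welch step, perfect completeness when $g=P$, and the running-time bound all go through unchanged. The curve is therefore needed for pairwise independence, not for query efficiency as your last paragraph suggests. For comparison, the paper gives no proof of this statement; it simply imports it from Gemmell--Sudan and Sudan's thesis.
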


\section{Bounded-Adaptive Queries to an \texorpdfstring{$\NP$}{NP} Oracle}
Crucial to our near-maximum circuit lower bounds is the investigation of a particular subclass of $\P^\NP$, where we make \emph{multiple} rounds of $\NP$ queries with \emph{short} witnesses. Fix parameters $t(n), r(n), s(n)$, we say an algorithm runs in
\[\TIME[t(n)]^\NP\RoundLength{r(n)}{s(n)},\]
if it is a deterministic algorithm running in $t(n)$ time with access to an $\NP$ oracle, such that:
\begin{itemize}
    \item {\bf Round complexity:} The algorithm makes at most $r(n)$ rounds of $\NP$ queries. Each round can issue an unbounded number of queries in parallel, and the queries can depend on the answers of $\NP$ queries in previous rounds.
    \item {\bf Witness length:} Every $\NP$ query can be solved by a nondeterministic Turing machine guessing at most $s(n)$ nondeterministic bits. That is, they can be reduced to $\SAT$ instances with at most $s(n)$ many inputs (in deterministic $t(n)$ time).
\end{itemize}

Since $t(n)$ is an upper bound on the time complexity of a $\TIME[t(n)]^\NP\RoundLength{r(n)}{s(n)}$ machine, throughout this paper we will assume that $r(n), s(n) \le t(n)$. We also assume that the number of $\NP$ queries made in each round, as well as the sizes of the $\NP$ queries, are at most $t(n)$.

This class is important for two reasons:
\begin{enumerate}
    \item The \Jerabek--Korten reduction~\cite{Jerabek04, Korten21}, an important ingredient for our circuit lower bounds (and also for~\cite{CHR24, Li24}), runs in $\P^\NP\RoundLength{r(n)}{s(n)}$ for some modest parameters $r(n)$ and $s(n)$. It turns out that our main proof in \autoref{sec: near-max lower bounds} needs to keep track of $r(n)$ and $s(n)$ carefully. This complexity upper bound of the \Jerabek--Korten reduction is proven in~\autoref{sec: Jerabek-Korten}.
    \item Our main proof also needs an ``encoded computational history'' of this class, which is a slight adaptation of the PCP theorem~\cite{AroraS98, AroraLMSS98}. This class is capable of computing the encoded computational history of itself with a moderate overhead; in fact, the encoded computational history of $\P^\NP\RoundLength{r(n)}{s(n)}$ can be computed in $\P^\NP\RoundLength{r(n)\cdot s(n)}{s(n)}$. This is established in~\autoref{sec: encoded computational history}.
\end{enumerate}

\subsection{The \Jerabek--Korten Reduction}\label{sec: Jerabek-Korten}

A core ingredient in the previous near-maximum circuit lower bounds~\cite{CHR24, Li24} and our new lower bounds is the \Jerabek--Korten reduction from $\Avoid$ to constructing a hard truth table. This reduction first appeared in~\cite{Jerabek04} in the language of bounded arithmetic, and was rephrased in the language of total search problems and computational complexity in~\cite{Korten21}.

Given an $\Avoid$ instance $G: \{0, 1\}^n \to \{0, 1\}^{2n}$ and the truth table of a function $f$ with sufficiently high circuit complexity, the reduction $\JK(G, f)$ runs in deterministic $\poly(|G|, |f|)$ time with access to an $\NP$ oracle and outputs a string $y \in \{0, 1\}^{2n}\setminus\Range(G)$. In the following theorem, we show that the number of \emph{rounds of parallel} $\NP$ oracle queries in this reduction can be made $O(n + \log |f|)$, and the witness length of each $\NP$ query is at most $n$.

\begin{theorem}\label{thm: Jerabek-Korten}
    There is an algorithm $\JK$ that takes an $\Avoid$ instance $G: \{0, 1\}^n \to \{0, 1\}^{2n}$ (with $|G| \geq n$) and a truth table $f \in \{0, 1\}^T$ as inputs, and outputs either 
    \begin{itemize}
        \item the message \textnormal{\texttt{``easy''}} and a circuit $C$ of size $O(|G|\log T)$ that computes the truth table $f$, or
        \item the message \textnormal{\texttt{``hard''}} and a string $y \in \{0, 1\}^{2n}\setminus \Range(G)$.
    \end{itemize}
    The algorithm $\JK$ runs in $\TIME[\poly(T, |G|)]^\NP\RoundLength{O(n + \log(T/n))}{n}$.
\end{theorem}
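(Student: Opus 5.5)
We follow Korten's formulation of the \Jerabek--Korten reduction, which is a tree construction. Let $T = n\cdot 2^k$ (padding $f$ if needed; padding only changes the circuit size by a constant factor). View $G$ as splitting a string $x \in \{0,1\}^n$ into a left half $G_0(x)$ and a right half $G_1(x)$, each of $n$ bits. Define $\mathrm{GGM}_G : \{0,1\}^n \to \{0,1\}^{T}$ as the complete binary tree of depth $k$. The root is labeled by the input. The children of a node labeled $x$ are labeled $G_0(x)$ and $G_1(x)$. The output is the concatenation of the $2^k$ leaf labels.

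The key fact is that $\mathrm{GGM}_G(x)$ has a circuit of size $O(|G|\cdot k)$ given $x$ hardwired. To output bit $i$, such a circuit walks down the tree along the bits of the leaf index, applying $G$ $k$ times, and then selects the bit. Hence if $f = \mathrm{GGM}_G(x)$ for some $x$, we obtain a circuit of size $O(|G|\log(T/n))$ computing $f$.

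Conversely, if $f$ is not in the range of $\mathrm{GGM}_G$, we locate a witness node. We label the tree bottom-up using $f$. The leaves get the $n$-bit blocks of $f$. Level by level, from depth $k-1$ up to $0$, an internal node $v$ with children labeled $(a,b)$ is processed as follows:
\begin{itemize}
\item if $a\circ b \notin \Range(G)$, then $a\circ b$ is a valid output and we stop with \texttt{``hard''};
\item otherwise we label $v$ with the lexicographically smallest preimage $x$ satisfying $G(x)=a\circ b$.
\end{itemize}
If every node gets a label, then the root label $x$ satisfies $\mathrm{GGM}_G(x)=f$ by induction, and we output \texttt{``easy''} with the circuit described above. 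Correctness is therefore immediate. The work lies in bounding the oracle rounds.

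\textbf{Round bound.} All nodes on one level can be processed in parallel, since they depend only on labels at the level below. For a single node, the test ``$a\circ b\in\Range(G)$'' is a single $\NP$ query whose witness is $x\in\{0,1\}^n$. Computing the lex-first preimage naively uses $n$ adaptive rounds of prefix queries, each with witness length at most $n$. That gives $O(nk)$ rounds, which is too many. This is the main obstacle: we need $O(n+k)$ rounds.

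\textbf{Pipelining.} The fix is to pipeline the bit-by-bit search across levels. Node $v$ at height $h$ cannot start before its children are fully determined, but a naive schedule would wait $n$ rounds per level. Instead, we define the queries so that the round in which bit $j$ of a node's label becomes known depends only on bits already fixed below it.

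Concretely, consider the $\NP$ query ``there exists $x$ extending prefix $p$ with $G(x)=a\circ b$''. It needs all of $a$ and $b$, and the first bit of $G(x)$ depends on all of $x$. So exact pipelining fails with this query. I would next try the following reformulation, which I believe is Korten's/Li's trick: use $\NP$ queries whose witness is $x$ itself and that check consistency with the entire subtree below, namely ``there exists $x$ with prefix $p$ such that $\mathrm{GGM}$ restricted to the subtree of $v$ applied to $x$ matches the corresponding segment of $f$''. This is checkable in $\poly(T,|G|)$ time with witness length exactly $n$, and it does not need the children's labels at all.

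With this query, the lex-first $x$ for every node at every level can be found simultaneously. This takes $n$ rounds of prefix search, run in parallel over all $2^{k+1}$ nodes. This needs care: we want the lex-first preimage to coincide with the canonical bottom-up labeling. We therefore define the labels top-down-free: the label of $v$ is the lex-first $x$ generating $v$'s segment of $f$, if one exists.

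It remains to find the node that fails. Let $v$ be the lowest failing node, meaning its segment is unrepresentable while both children's segments are representable, with labels $a$ and $b$. Then $a\circ b\notin\Range(G)$: any preimage would generate $v$'s segment. A final $O(k)$ rounds, or even one round, of $\NP$ queries over all nodes determine which segments are representable. In fact, the existence answers are part of the same parallel search, so we get $O(n)$ total rounds, well within $O(n+\log(T/n))$.

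The main risk is confirming that each such query is a legitimate $\TIME[\poly(T,|G|)]$-reducible $\SAT$ instance with $n$ witness bits. I expect it is, since the verifier simply evaluates the subtree. If needed, the fallback is the diagonal pipelining schedule, in which level $h$ starts after $h$ rounds, giving $O(n+k)$ rounds.
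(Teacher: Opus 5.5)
Your argument is correct, and the part that actually does the work, the subtree-consistency query, takes a genuinely different route from the paper's.

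\textbf{The paper's route.} The paper keeps Korten's bottom-up stages. Each level's labels are the lexicographically first $G$-preimages of the concatenated labels from the level below, so the $k=\lceil\log_2(T/n)\rceil$ stages are inherently sequential. It reduces the cost per stage from $n$ rounds to $\lceil n/b\rceil+1$: with $b=\lceil\log(T\cdot|G|)\rceil$, it reveals $b$ bits of every preimage per round by issuing $2^b$ parallel queries, one per candidate extension. This costs a $\poly(T,|G|)$ factor in time and gives $O((n/b+1)k)=O(n+\log(T/n))$ rounds, because $k\le b$.

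\textbf{Your route.} Your query ``$\exists x\in\{0,1\}^n$ with prefix $p$ such that $\mathrm{GGM}^{(h)}_G(x)=f_v$'' removes the dependency between levels altogether. It is a Circuit-SAT instance with $n$ inputs that evaluates at most $2T/n$ copies of $G$. Hence all nodes can be searched in parallel in $n+1$ rounds with witness length $n$. The minimal-unrepresentable-node argument is sound for \emph{any} representatives of the children: leaves are always representable, and if $G(x)=a\circ b$ with $a,b$ representing the two children, then $x$ represents $v$. So it is harmless that your top-down-free labels, and possibly your \texttt{``hard''} output, differ from those of the bottom-up procedure.

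\textbf{What each buys.} Your route gives a round bound of $n+1$, independent of $T$, with no batching trick. Given an oracle that returns satisfying assignments, it even needs only a single non-adaptive round. The paper's route keeps each query small and stays faithful to the original Je\v{r}\'{a}bek--Korten procedure. Neither property matters downstream, since $\JK$ is only used as a black-box bounded-round $\P^\NP$ computation.

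Your hedging about whether the query is a legitimate $n$-witness instance is unnecessary. You should drop the ``diagonal pipelining'' fallback. As you yourself observed, a parent's preimage search needs its children's complete labels, so that schedule is not justified as stated. Nothing in your main argument relies on it.
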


\begin{proof} We shall follow the \Jerabek--Korten procedure~\cite{Jerabek04,Korten21} closely, and make some minor modifications to optimize the number of query rounds. First, if $T \leq n$, then the theorem is trivial and no oracle queries are needed: we can just output \texttt{``easy''} along with a circuit $C$ of size $O(T) \leq O(n)$ and $\log_2 T$ inputs, where $C$ has the truth table of $f$ hard-coded in a standard way. 

So assume $T > n$. Let $k = \lceil \log_2 (T/n) \rceil$ and let $T' = n 2^k$. Let $f' = f 0^{T'-T}$; that is, $f'$ is $f$ padded with $T' - T$ zeroes. Note that $|f'| < 2|f|$.

Our algorithm proceeds in $k$ stages, starting with stage $k$ and decreasing down to stage $1$. Initially we set $y_k = f'$, a string of $n \cdot 2^k$ bits. 

At the beginning of stage $i$, the current string $y_i$ under consideration has length $n \cdot 2^i$. We partition $y_i$ into contiguous $2n$-bit blocks $z_1,\ldots,z_{2^{i-1}}$, and form $\SAT$ queries $\phi_1,\ldots,\phi_{2^{i-1}}$, where $\phi_j$ is satisfiable if and only if there is an $x_j$ such that $G(x_j) = z_j$. We are encoding a nondeterministic computation that takes at most $|G| \cdot \polylog |G|$ time, guesses $x_j$ and evaluates the circuit $G$. Using the standard translation of nondeterministic time $t(n)$ into SAT queries of length $t(n) \cdot \polylog( t(n))$ (which holds for all reasonable models of computation~\cite{Schnorr78,Cook88,GurevichS89}) we can construct each $\phi_j$ so that $|\phi_j| \leq |G|\cdot \polylog(|G|)$ for all $j$. 

If $\SAT$ answers \textsc{No} on some $\phi_j$, then we immediately return \texttt{``hard''} and the corresponding string $z_j$ as a solution to $\Avoid$. Otherwise, for every $\Phi_j$, let $x_j \in \{0, 1\}^n$ denote its lexicographically smallest satisfying assignment. The simplest way to compute $x_j$ would cost $n$ parallel rounds per stage: one round for each variable value. Letting $b \geq 1$ be a parameter, we observe that one can make only $m := \lceil n/b \rceil$ extra rounds per stage in a way that multiplies the overall running time by an $O(2^b)$ factor. For simplicity, let us round up the number of variables in each $\phi_j$ to be exactly $b m \leq 2n$ by adding some dummy variables (later, we will just remove them so that the total number is $n$ again). Suppose for some $\ell \in \{0,\ldots,m-1\}$ and for every $j=1,\ldots,2^{i-1}$, the $(\ell\cdot b)$-bit prefix $p_j$ of $x_j$ has already been determined. Then in the next round, we form a collection of $2^b \cdot 2^{i-1}$ parallel $\SAT$ queries $\psi_{q,\ell,j}$ over all $q \in \{0,1\}^b$, such that $\psi_{q,\ell,j}$ is satisfiable if and only if there is an $x' \in \{0, 1\}^n$ such that $G(x') = z_j$, the first $\ell \cdot b$ bits of $x'$ equals $p_j$, and the next $b$ bits of the prefix of $x'$ equals $q$. Similarly to the $\phi_j$, we can encode each $\psi_{q,\ell,j}$ so that they each have size at most $|G|\cdot \polylog(|G|)$ as well. The yes/no answers from all $2^b \cdot 2^{i-1}$ queries $\psi_{q,\ell,j}$ directly determine the next $b$ bits of each $x_j$, and we update each prefix $p_j$ accordingly by appending the lexicographically first $q$ such that $\psi_{q,\ell,j}$ is satisfiable. Observe that the running time in each stage is multiplied by a factor of $2^b$.

After we have determined all $n$ bits of all $x_j$, we form a new string $y_{i-1} = x_1 \cdots x_{2^{i-1}}$ which has length $n \cdot 2^{i-1}$, which we pass to stage $i-1$, repeating the above process.

When we reach stage $1$, the resulting $y_0$ has length $n$. In this case, we return \texttt{``easy''}, since we can (by a standard argument, see \cite{Korten21,GoldreichGM86}) construct a circuit of size $O(|G| \cdot \log T)$ whose truth table is exactly $f'$, and therefore we can construct a circuit for $f$ as well. 

As we take $t = \lceil n/b \rceil + 1$ rounds in each stage, the total number of parallel rounds is $O(t \cdot k)$. Letting $b = \lceil \log(T\cdot |G|) \rceil$, we multiply the running time by $T\cdot |G|$ (a polynomial factor) and the number of parallel rounds becomes $O((\lceil n/b \rceil + 1)k) \leq O((\lceil n/\lceil \log(T\cdot |G|) \rceil + 1\rceil) \log(T/n))  \leq O(n + \log(T/n))$. It is easy to see that every $\SAT$ query has witness length at most $n$.
\end{proof}

\subsection{Encoded Computational History}\label{sec: encoded computational history}
We consider the \emph{encoded computational history} of $\P^\NP\RoundLength{r(n)}{s(n)}$ machines. 

The computational history of an $\NP$ machine on an input $x$ is straightforward to define: we can simply provide an accepting computation path of the machine (a list of transitions taken by the machine from the initial configuration to the accept state). Applying the PCP theorem, we obtain an \emph{encoded} computational history of the $\NP$ machine that we can check efficiently with randomness.

For a $\P^{\NP}$ machine, we can reasonably define the computation history on an input $x$ to be the history of the $\P$ machine's transitions along with a record of the queries asked. For each query that answers \textsc{yes}, we also provide accepting computation paths from the $\NP$ machine in the history just after the query is made; for each \textsc{no} query, we insert an all-zero string. Observe that from this definition, finding the lexicographically largest accepting computation history determines the correct query answers (because we will always prefer an accepting computation history over an all-zeros string). Similarly, we obtain the \emph{encoded} computational history by applying a PCP theorem to the raw computational history, and this facilitates fast lexicographic comparisons~\cite{Hirahara15}. 

In what follows, we generalize the above observation to $\P^\NP\RoundLength{r(n)}{s(n)}$. We also define its \emph{encoded} computational history by applying the PCP theorem.

\def\prefix{\normalfont{\mathsf{pfx}}}

\begin{lemma}\label{lemma: computational history of P NP a}
    Let $M$ be a $\TIME[t(n)]^\NP\RoundLength{r(n)}{s(n)}$ machine that outputs $\ell(n)$ bits. For every input $x$, there is a circuit $C_x$ computable in deterministic $\poly(t(n))$ time given $x$, such that the following holds.
    
    Let $\alpha \in \{0, 1\}^{\poly(t(n))}$ denote the lexicographically largest satisfying assignment of $C_x$. Then $\alpha$ can be computed in $\TIME[\poly(t(n))]^\NP\RoundLength{s(n)\cdot r(n)}{s(n)}$. Moreover, letting $r'(n) := r(n)\lceil \log (t(n)+1)\rceil$, for every $\alpha' \in \{0, 1\}^{\poly(t(n))}$ such that $C_x(\alpha') = 1$ and the first $r'(n)$ bits of $\alpha$ and $\alpha'$ are equal, the substring of $\alpha$ from the $(r'(n)+1)$-th bit to the $(r'(n) + \ell(n))$-th bit is equal to the output of $M(x)$.
\end{lemma}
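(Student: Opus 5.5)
We build $C_x$ as a verifier for a ``computational history'' whose first $r'(n)$ bits encode the answers to all $\NP$ queries. Fix $x$ of length $n$ and write $t=t(n)$, $r=r(n)$, $s=s(n)$. In round $i\in[r]$ the machine $M$ asks at most $t$ queries $\phi_{i,1},\dots,\phi_{i,t}$, each a $\SAT$ instance on $s$ variables. These queries are determined deterministically by the answers from rounds $1,\dots,i-1$.

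\textbf{The layout of the assignment.} We let $\alpha$ consist of three parts, in this order.
\begin{enumerate}
\item A prefix of $r'(n)=r\lceil\log(t+1)\rceil$ bits: one $\lceil\log(t+1)\rceil$-bit number $c_i\in\{0,\dots,t\}$ per round, stored most significant bit first. The number $c_i$ is claimed to equal the count of \textsc{Yes} answers in round $i$.
\item The output $M(x)$, occupying the next $\ell(n)$ bits.
\item The remaining data: claimed answer bits $a_{i,j}$ and witnesses $w_{i,j}\in\{0,1\}^s$ for all queries.
\end{enumerate}

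The circuit $C_x(\alpha)$ performs the following checks.
\begin{enumerate}
\item It simulates $M$, using the claimed answers $a_{i,j}$ to generate each round's queries from the previous rounds.
\item For every $a_{i,j}=1$, it checks that $w_{i,j}$ satisfies $\phi_{i,j}$.
\item For every round $i$, it checks $\sum_j a_{i,j}=c_i$.
\item It checks that the output field equals the simulated output.
\end{enumerate}
Every check is a $\poly(t)$-time computation, so $C_x$ is built in $\poly(t)$ time by the standard Cook--Levin circuit translation.

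\textbf{Key claim.} If $C_x(\alpha')=1$ and the prefix of $\alpha'$ agrees with that of the lexicographically largest satisfying $\alpha$, then all claimed answers in $\alpha'$ are correct. Hence the output field of $\alpha'$ equals $M(x)$. (We read the lemma's ``substring of $\alpha$'' as holding for $\alpha'$ as well; it holds for $\alpha$ trivially.)

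We prove by induction on rounds that the prefix of the lexicographically largest $\alpha$ has $c_i=$ the true number of \textsc{Yes} answers. Assume rounds $<i$ are correct in every satisfying assignment sharing the prefix so far. Then round $i$'s queries are the true ones. The soundness check 2 forbids claiming \textsc{Yes} on an unsatisfiable query, so any satisfying assignment has $c_i\le$ the true count. Setting all true \textsc{Yes} answers, with genuine witnesses, and continuing honestly gives a satisfying assignment with $c_i=$ the true count. Since $c_i$ sits earlier lexicographically than all later data and is stored most significant bit first, the lexicographic maximum picks the true count.

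Given $c_i$ equal to the true count, any $\alpha'$ with that value must set exactly the true \textsc{Yes} set. It cannot include a false \textsc{Yes} because of check 2, and with the count fixed it cannot omit a true \textsc{Yes}. So round $i$'s answers are correct in $\alpha'$, and the induction proceeds. The output field is then forced to equal $M(x)$.

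\textbf{Computing $\alpha$ in $\TIME[\poly(t)]^\NP\RoundLength{s\cdot r}{s}$.} Here $s\cdot r$ counts rounds of parallel queries (as in the $\RoundLength{\cdot}{\cdot}$ notation), not individual queries.

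We process the machine round by round, knowing the true answers to earlier rounds and hence the current queries. In one $\NP$ round we ask all $\phi_{i,j}$ in parallel (witness length $s$), which gives the true answers and thus $c_i$. For each \textsc{Yes} query we need the witness that maximizes $\alpha$ lexicographically, i.e.\ the lexicographically largest satisfying assignment of $\phi_{i,j}$. We find it by search-to-decision: $s$ adaptive rounds, each asking in parallel for all $j$ ``is there a satisfying assignment extending this prefix with next bit $1$?'', each with witness length $\le s$. That gives $O(s)$ rounds per machine round, hence $O(s\cdot r)$ in total. To get exactly $s\cdot r$, we merge the decision round into the first search round, since asking about extensions by $0$ and by $1$ in parallel also reveals satisfiability.

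For \textsc{No} queries the witness field is not constrained by $C_x$, so we must pin it down. We make $C_x$ require $w_{i,j}=0^s$ whenever $a_{i,j}=0$. Then $\alpha$ is unique given the answers and the lex-max witnesses, and everything is assembled in $\poly(t)$ time.

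\textbf{Main obstacle.} The delicate step is making the lexicographic maximum coincide with honest answers while keeping the prefix short. A plain vector of answer bits would need $t$ bits per round; storing counts, whose maximization forces all true \textsc{Yes} answers, is what achieves $r'(n)$ bits. We must also order the fields so that later data cannot override earlier choices, and so that the lex-max $\alpha$ remains computable with only $s\cdot r$ rounds.
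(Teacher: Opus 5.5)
Your proof is correct and follows essentially the paper's approach: the $r'(n)$-bit prefix stores the per-round counts of \textsc{Yes} answers, and maximizing these counts in round order forces the true answers and hence the true output. The lexicographically largest witnesses are then found bit by bit in $s(n)$ parallel rounds per machine round. The differences from the paper are cosmetic. The paper has no explicit answer bits: a query counts as \textsc{Yes} exactly when its witness field satisfies it, so \textsc{No} fields become $1^{s}$ in the lexicographic maximum, which makes your $0^s$ pinning unnecessary though harmless. The paper also detects satisfiability by checking the greedy candidate after the $s$ rounds, rather than merging a decision round into the first search round.
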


\def\hist{\mathsf{hist}}

\begin{proof}
    Let $C_x$ denote the following circuit. Its input consists of $\hist = (y_1, \dots, y_{r(n)}, out, z_{1, 1}, \dots, z_{r(n), t(n)})$ (in this order), with the following intended meanings:
    \begin{itemize}
        \item for each $i\in [r(n)]$, $y_i \in [0, t(n)]$ is the number of $\NP$ queries made in the $i$-th round of $M(x)$ that are accepted;
        \item $out \in \{0, 1\}^{\ell(n)}$ is the output of $M(x)$; and
        \item for each $i\in [r(n)]$ and $j\in [t(n)]$, $z_{i, j} \in \{0, 1\}^s$ is an $\NP$ witness for the $j$-th query made in the $i$-th round of $M(x)$.
    \end{itemize}
    We call such a tuple a ``computational history'' of $M(x)$; as we will demonstrate later, the lexicographically largest $\hist$ accepted by $C_x$ is the ``correct'' computational history of $M(x)$.

    Given any string $\hist$, we can simulate the computation of $M(x)$ according to $\hist$ in $t(n)$ time: Let $q_{i,j}(\hist)$ denote the $j$-th $\NP$ query made in the $i$-th round, then we answer \textsc{Yes} to this query if the entry $z_{i,j}$ (in $\hist$) satisfies $q_{i,j}(\hist)$, and we answer \textsc{No} otherwise. We stress that $q_{i,j}$ depends on the witnesses in previous rounds (i.e., $\{z_{i', j'}\}_{i' < i}$). We define $C_x(\hist)$ accepts if and only if in the above simulation:
    \begin{itemize}
        \item for each $i\in [r(n)]$, the number of accepted $\NP$ queries in the $i$-th round is exactly $y_i$; and
        \item the final output of $M(x)$ is equal to $out$.
    \end{itemize}

    Let $\hist^\star = (y_1^\star, \dots, y_{r(n)}^\star, out^\star, z_{1, 1}^\star, \dots, z_{r(n), t(n)}^\star)$ denote the ``correct'' computational history of $M(x)$. More precisely:
    \begin{itemize}
        \item for each $i\in [r(n)]$, $y_i^\star$ is the number of $\NP$ queries in the $i$-th round of $M(x)$ that are accepted;
        \item $out^\star$ is the final output of $M(x)$; and
        \item for each $i\in [r(n)]$ and $j\in [t(n)]$, if the $j$-th $\NP$ query in the $i$-th round of $M(x)$ is accepted, then $z_{i, j}^\star$ is the lexicographically largest witness of this query; otherwise $z_{i, j}^\star = 1^s$.
    \end{itemize}

    It is not hard to see that:
    \begin{claim}\label{claim: histstar is the correct comp history}
        $\hist^\star$ is the lexicographically largest satisfying assignment of $C_x$.
    \end{claim}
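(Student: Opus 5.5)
The plan is to first check that $\hist^\star$ is accepted by $C_x$, and then show by induction over the rounds that any accepted $\hist \ne \hist^\star$ is lexicographically smaller. Acceptance is immediate. When $C_x$ simulates $M(x)$ using $\hist^\star$, the witnesses $z^\star_{i,j}$ for accepted queries are genuine witnesses, so those queries are answered \textsc{Yes}. For a rejected query, no witness exists at all, so $1^s$ cannot satisfy it and the query is answered \textsc{No}. By induction on $i$, the queries generated in each round are therefore exactly the true queries of $M(x)$. Hence the counts equal $y_i^\star$, the output equals $out^\star$, and $C_x(\hist^\star)=1$.

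For maximality, take any $\hist$ with $C_x(\hist)=1$ and compare it with $\hist^\star$ block by block, in the order $y_1,\dots,y_{r(n)},out,z_{1,1},\dots$. The key observation is soundness of \textsc{Yes} answers. If the first $i-1$ rounds of the simulation under $\hist$ use the true queries and true answers, then the round-$i$ queries $q_{i,j}(\hist)$ coincide with the true ones. These queries depend only on earlier answers, and the sets of answers are determined by which witnesses satisfy. A query answered \textsc{Yes} under $\hist$ has a real witness, so the number of \textsc{Yes} answers satisfies $y_i \le y_i^\star$.

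So suppose inductively that rounds $1,\dots,i-1$ are simulated correctly and $y_{i'}=y_{i'}^\star$ for $i'<i$. If $y_i<y_i^\star$, then $\hist$ is already smaller at block $y_i$, since all earlier blocks agree. Otherwise $y_i=y_i^\star$. Combined with soundness, this forces exactly the truly-accepted queries to be answered \textsc{Yes}, so round $i$ is simulated correctly. If every $y_i$ equals $y_i^\star$, the whole simulation is correct and $out=out^\star$. Then on the $z$-blocks, each accepted query has $z_{i,j}$ a witness, hence $z_{i,j}\le z_{i,j}^\star$ because $z^\star_{i,j}$ is the largest witness. Each rejected query has $z_{i,j}\le 1^s=z_{i,j}^\star$ trivially. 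Either $\hist=\hist^\star$, or the first differing block has $\hist$ smaller.

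One subtlety needs care. Lexicographic order must compare the $y_i$ blocks as fixed-width binary numbers of $\lceil\log(t(n)+1)\rceil$ bits each, so that numeric order agrees with lexicographic order. Also, once $\hist$ is strictly smaller at some block, later blocks are irrelevant.

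The main obstacle is the step where $y_i=y_i^\star$ forces correct answers. This needs the monotonicity fact that answers under $\hist$ can only err from \textsc{Yes} to \textsc{No}, never the reverse. That in turn relies on the queries of round $i$ being computed from correct earlier answers, which is exactly what the induction hypothesis provides. I will make that dependency explicit: $q_{i,j}(\hist)$ is a function of the earlier answers, and those answers agree with the true ones.
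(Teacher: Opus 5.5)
Your proposal is correct and follows essentially the same argument as the paper. The paper also argues by induction over rounds that agreement on $y_1,\dots,y_{i-1}$ forces identical round-$i$ queries. It uses the fact that the number of queries satisfied by the $z_{i,j}$ can never exceed the true count $y_i^\star$, and it settles the $out$ and $z_{i,j}$ blocks via the maximality of $z^\star_{i,j}$ (resp.\ $1^s$). The only difference is cosmetic: the paper derives a contradiction from a lexicographically larger accepted $\mathsf{hist}'$, while you directly show that every accepted $\mathsf{hist}$ is at most $\mathsf{hist}^\star$. Your explicit checks that $C_x(\mathsf{hist}^\star)=1$ and that the $y_i$ blocks are fixed-width are small additions the paper leaves implicit.
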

    \begin{claimproof}
        Clearly, $C_x(\hist^\star) = 1$. Assume that $\hist' = (y_1', \dots, y_{r(n)}', out', z'_{1, 1}, \dots, z'_{r(n), t(n)})$ is any string lexicographically larger than $\hist^\star$ such that $C_x(\hist') = 1$, we now derive a contradiction.

        Let $i\in [r(n)]$ and $j\in [t(n)]$. It is easy to show using induction that if $y_{i'}' = y_{i'}^\star$ for every $i' < i$, then $q_{i,j}(\hist') = q_{i,j}(\hist^\star)$, i.e., the two simulations of $M(x)$ on $\hist^\star$ and $\hist'$ makes the same $\NP$ query in the $i$-th round. The base case (where $i = 1$) is trivial. When $i > 1$, since $y_i^\star$ is the number of satisfiable $\NP$ queries among $\{q_{i-1,j}(\hist^\star)\}_{j\in [t(n)]} = \{q_{i-1,j}(\hist')\}_{j\in [t(n)]}$, it must be the case that the answers to every $q_{i-1,j}(\hist')$ are the same as those in $q_{i-1,j}(\hist^\star)$. Therefore, the $\NP$ queries made by $M(x)$ in the $i$-th round are the same when simulated using $\hist^\star$ or using $\hist'$.
        
        Suppose that there is some $i\in [r(n)]$ such that $y_i' \ne y_i^\star$. Let $i$ be the smallest such index, then for every $j < i$, we have $y_j' = y_j^\star$. By the induction argument above, we have $q_{i,j}(\hist') = q_{i, j}(\hist^\star)$ for every $j\in [t(n)]$. However, there are only $y_i^\star < y_i'$ many satisfiable queries in the $i$-th round, therefore the number of $j\in [t(n)]$ such that $z_{i,j}'$ satisfies $q_{i,j}(\hist')$ cannot be equal to $y_i'$. This contradicts our assumption that $C_x(\hist')$ accepts.

        It follows that $y_i' = y_i^\star$ for every $i\in [r(n)]$. The above induction implies that the $\NP$ queries and answers of $M(x)$ are exactly the same, when simulated using $\hist'$ or using $\hist^\star$. It follows that the output of $M(x)$ on $\hist'$ is exactly $out$, hence if $out' \ne out$ then $C_x(\hist') = 0$.
        
        Finally, recall that $z_{i, j}^\star$ is the lexicographically largest witness of $q_{i,j}(\hist^\star)$ (when satisfiable) or $1^s$ (when unsatisfiable). Therefore if some $z_{i, j}' > z_{i, j}^\star$, it must be the case that $q_{i,j}(\hist')$ is satisfiable but $z_{i, j}'$ does not satisfy it. Hence the number of satisfied queries in the $i$-th round cannot be equal to $y_i' = y_i^\star$, a contradiction to our assumption that $C_x(\hist')$ accepts.
    \end{claimproof}

    That is, $\alpha := \hist^\star$ is the lexicographically largest satisfying assignment of $C_x$. We can compute $\hist^\star$ by simulating $M(x)$ using the $\NP$ oracle. Note that we also need to find the lexicographically largest satisfying assignments of each $\NP$ query (i.e., $z_{i, j}^\star$). This can be done by a binary search procedure with a blow-up of the round complexity by a multiplicative factor of $s(n)$.

    \begin{algorithm}[H]
        \caption{Computing $\hist^\star$ in $\TIME[\poly(t(n))]^\NP\RoundLength{r(n)\cdot s(n)}{s(n)}$}
        \begin{algorithmic}[1]
            \For {$i\gets 1$ to $r(n)$}
                \State {Let $q_{i, 1}, \dots, q_{i, t(n)}$ be the $\NP$ queries issued in the $i$-th round}
                \State {$z_{i, j} \gets \varepsilon$ for every $j\in [t(n)]$}\Comment{$\varepsilon$ is the empty string (of length $0$)}
                \For {$k\gets 1$ to $s(n)$}
                    \For {$j\gets 1$ to $t(n)$}
                    
                    \Comment{We can issue the following $t(n)$ $\NP$ queries in parallel; each query has witness length at most $s(n)$}
                        \If {there is a satisfying assignment of $q_{i, j}$ with prefix $(z_{i, j}\circ 1)$}
                            \State {$z_{i,j}\gets z_{i,j}\circ 1$}
                        \Else
                            \State {$z_{i,j}\gets z_{i,j}\circ 0$}
                        \EndIf
                    \EndFor
                \EndFor
                \For {$j\gets 1$ to $t(n)$}
                    \State {Each $q_{i,j}$ is answered according to whether $z_{i,j}$ satisfies it}
                    \If {$q_{i,j}$ is not satisfied}
                        \State{$z_{i,j}\gets 1^s$}
                    \EndIf
                \EndFor
            \EndFor
        \end{algorithmic}
    \end{algorithm}

    Finally, the length of $(y_1, \dots, y_{r(n)})$ is $r'(n) \le r(n)\lceil \log (t(n)+1)\rceil$. In the proof of \autoref{claim: histstar is the correct comp history} we have already seen that for any $\hist'$ that agrees with $\hist^\star$ on the first $r'(n)$ bits (i.e., agrees on $y_1, \dots, y_{r(n)}$), if $C_x(\hist') = 1$, then the output part of $\hist'$ should be equal to $out^\star$. This establishes the ``Moreover'' part of the lemma.
\end{proof}

\paragraph{A PCP theorem.} We need a PCP theorem where the PCP \emph{contains an encoded version of the raw witness as a prefix}; this PCP theorem will be applied to the circuit $C_x$ as defined in \autoref{lemma: computational history of P NP a}. Fix an error-correcting code (looking ahead, we will use the Reed--Muller code). In our discussion, every PCP proof $\Pi = (\tilde{w}, \Pi')$ consists of two parts $\tilde{w}$ and $\Pi'$, where $\tilde{w}$ is an encoding of a witness $w$ and $\Pi'$ denotes auxiliary proof oracles.

Roughly speaking, we need the following completeness and soundness guarantees:
\begin{itemize}
    \item {\bf Completeness:} Every valid witness $w$ corresponds to a PCP proof $(\tilde{w}, \Pi')$ accepted with probability $1$, where $\tilde{w}$ is the encoding of $w$. 
    \item {\bf Soundness:} If $\tilde{w}$ is far from encodings of valid witnesses, then for every $\Pi'$, the PCP proof $(\tilde{w}, \Pi')$ is rejected with noticeable probability.
\end{itemize}

Such guarantees are satisfied by the classical Reed--Muller based PCP constructions~\cite{AroraS98, AroraLMSS98}. A nice exposition of these PCPs can be found in Harsha's PhD thesis~\cite{harsha2004robust} and we will, in particular, verify that the robust PCP for $\SAT$ (\textsc{robust-PCP-Circuit-SAT}) in~\cite[Section 5.4.1]{harsha2004robust} satisfies the above guarantees.

To be more precise, let us consider PCPs for the language $\CircuitSAT$. Let $C$ be a circuit of size $n$, we say that $w \in \{0, 1\}^n$ is a \emph{valid witness} for ``$C \in \CircuitSAT$'' if $w$ consists of a satisfying assignment of $C$ together with the values of intermediate gates of $C$ on this assignment. Set $m := \mleft\lceil\frac{\log n}{\log\log n}\mright\rceil$, $\F$ be a finite field of size at least $\alpha\cdot (mn^{1/m})^3 \le \polylog(n)$ where $\alpha \ge 1$ is a large enough universal constant, and $H \subseteq \F$ be a subset of size $n^{1/m}$ that contains $\{0, 1\}$. (Here we assume, without loss of generality, that $n^{1/m}$ is an integer.) Fix a bijection between $[n]$ and $H^m$, then every string $w \in \{0, 1\}^n$ can be seen as a function $w: H^m \to \{0, 1\}$, hence extends to a polynomial $\tilde{w}: \F^m \to \F$ with individual degree at most $n^{1/m}$. We call $\tilde{w}$ the \emph{Reed--Muller encoding} of $w$.

\begin{theorem}\label{thm: PCP for SAT with witness prefix}
    For every small enough constant $\delta > 0$, there exists a PCP verifier $V$ for $\CircuitSAT$ such that for every circuit $C$ of size $n$:
    \begin{itemize}
        \item {\bf Completeness:} For every valid witness $w$ for ``$C \in \CircuitSAT$'', there exists a PCP proof $\Pi = (\tilde{w}, \Pi')$ such that $\tilde{w}$ is the Reed--Muller encoding of $w$ and
        \[\Pr_z[V^\Pi(C, z)\text{ accepts}] = 1.\]
        Moreover, such a proof $\Pi$ can be computed in deterministic $\poly(n)$ time given $C$ and $w$ as inputs.
        \item {\bf Soundness:} For every PCP proof $\Pi = (\tilde{w}, \Pi')$, if $\tilde{w}$ is $\delta$-far from any polynomial $\tilde{w}':\F^m \to \F$ of total degree at most $m\cdot |H|$ such that $\tilde{w}'|_{H^m}$ is a valid witness for ``$C \in \CircuitSAT$'', then 
        \[\Pr_z[V^\Pi(C, z)\text{ accepts}] \le 1/2.\]
        \item {\bf Complexity:} The PCP proof has length $|\Pi|\le \poly(n)$. The verifier $V$ takes a random string of length $|z| = O(\log n)$ and makes $\polylog(n)$ queries to $\Pi$.
    \end{itemize}
\end{theorem}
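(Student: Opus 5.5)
This theorem is essentially a repackaging of the classical Reed--Muller-based PCP (the \textsc{robust-PCP-Circuit-SAT} construction of \cite[Section 5.4.1]{harsha2004robust}, following \cite{AroraS98, AroraLMSS98}). So the plan is to take that verifier essentially as is and check that it has the three stated properties. The one thing to watch is that the proof oracle contains the Reed--Muller encoding $\tilde{w}$ of the \emph{whole} valid witness (assignment plus gate values) as a separate, designated part, and that soundness is phrased in terms of distance of $\tilde{w}$ from ``good'' low-degree polynomials. That is exactly how the standard construction is organized, so the work is mostly bookkeeping.

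\textbf{Construction.} First arithmetize $C$. Index the $n$ wires of $C$ by $H^m$. Express ``$w$ is a valid witness'' as a constraint polynomial $\psi(\vec{x}) = P(\vec{x}, \tilde{w}(\gamma_1(\vec{x})), \dots, \tilde{w}(\gamma_k(\vec{x})))$, which vanishes on $H^{m'}$ (with $m' = O(m)$) iff every gate constraint (and the output-equals-$1$ constraint) holds. Here $k=O(1)$, the $\gamma_i$ are low-degree maps describing the circuit's wiring, and $P$ is a constant-degree gate-check polynomial that also enforces Booleanity via $\tilde{w}(x)(\tilde{w}(x)-1)$. The auxiliary oracle $\Pi'$ contains three things:
\begin{itemize}
\item a low-degree-test oracle (the lines/planes table) for $\tilde{w}$;
\item the sum-check/zero-testing polynomials certifying that $\psi$ vanishes on $H^{m'}$, either through the quotient decomposition $\psi = \sum_i (x_i\text{-vanishing poly})\cdot q_i$ or through a sum-check table;
\item tables for the $q_i$ with their own low-degree tests.
\end{itemize}

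\textbf{Completeness and complexity.} Given a valid $w$, all of these objects are honest low-degree polynomials computable in $\poly(|\F|^m) = \poly(n)$ time, since $|\F|^m = \polylog(n)^{\log n/\log\log n} = \poly(n)$. The verifier uses $O(m\log|\F|) = O(\log n)$ random bits and makes $\poly(m,|H|,\log|\F|) \le \polylog(n)$ queries. With honest proofs every test passes with probability $1$.

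\textbf{Soundness (main obstacle).} The verifier runs a low-degree test on $\tilde{w}$ with enough repetitions to reject with probability $\ge 1/2$ any $\tilde{w}$ that is $\delta$-far from every total-degree-$m|H|$ polynomial. So assume $\tilde{w}$ is $\delta$-close to some degree-$m|H|$ polynomial $g$. By hypothesis $g|_{H^m}$ is \emph{not} a valid witness, so the composed $\psi_g$ does not vanish on $H^{m'}$. The verifier evaluates $\psi$ at random points by reading $\tilde{w}$ through the self-corrector $\Corr$ (\autoref{thm: low-degree correction}, applicable since $|\F| \ge \alpha(m|H|)^3 \gg (m|H|)^2$). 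This makes each read equal to $g$ at the queried point with probability $\ge 2/3$, amplified by repetition. The zero-on-subcube test then rejects with constant probability, by Schwartz--Zippel, since the degree of $\psi_g$ is $O(m|H|) \ll |\F|$, together with the low-degree tests on the $q_i$. After a constant number of repetitions the acceptance probability drops to at most $1/2$.

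I expect the delicate part to be the error accounting. The low-degree test's closeness parameter, the self-correction failure probability, and the Schwartz--Zippel slack $O(m|H|)/|\F|$ all have to combine into rejection probability $\ge 1/2$ for a single fixed small $\delta$. I would resolve this by citing the soundness analysis of \cite{harsha2004robust} directly and only verifying that its ``decoded assignment'' is exactly the closest polynomial $g$ to the witness table $\tilde{w}$.
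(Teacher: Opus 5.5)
Your proposal is correct and follows the paper's route. The paper likewise takes Harsha's \textsc{robust-PCP-Circuit-SAT} as is, identifies $\tilde{w}=\tilde{A}$ and $\Pi'=(P,\Pi_{\mathsf{ZoS}})$, and gets:
\begin{itemize}
\item completeness and proof length from the honest construction;
\item soundness from Harsha's Lemma 5.4.4, amplified to $1/2$ by $O(\delta^{-1})$ repetitions;
\item $O(m\log|\F|)=O(\log n)$ randomness and $O(m|\F|\log|\F|)=\polylog(n)$ queries.
\end{itemize}

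One caveat: your sketched arithmetization is not Harsha's verifier. Yours composes $\tilde{w}$ with wiring maps $\gamma_i$, reads through the self-corrector, and uses separate quotient tables $q_i$. Harsha instead supplies the constraint polynomial $P:\F^{3m+3}\to\F$ as its own oracle tied to $\tilde{A}$, together with a single bundled zero-on-subcube oracle. So when you invoke his soundness lemma, apply it to his verifier verbatim rather than to your re-derived one. Your version would in any case need the $q_i$ bundled into one vector-valued oracle tested with shared randomness, to keep the randomness at $O(\log n)$.
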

\def\ZoS{\mathsf{ZoS}}
\begin{proof}[Proof Sketch]
    The PCP in \cite[Section 5]{harsha2004robust} consists of functions $\tilde{A}: \F^m \to \F$, $P: \F^{3m+3} \to \F$, and $\Pi_\ZoS: \F^{3m+3} \to \F^{6m+6}$, where $\tilde{A}$ is supposed to be the Reed--Muller encoding of a valid witness, $P$ is an auxiliary proof oracle for \textsc{robust-PCP-Circuit-SAT}, and $\Pi_\ZoS$ is an auxiliary proof oracle for \textsc{Zero-on-Subcube}. We denote $\tilde{w} = \tilde{A}$ and $\Pi' = (P, \Pi_\ZoS)$.
    \begin{itemize}
        \item {\bf Completeness:} The discussion before \cite[Proposition 5.4.1]{harsha2004robust} shows how to compute $\tilde{A}$ and $P$ in polynomial time such that $\tilde{A}|_{H^m} = w$. Moreover, $P$ is a Yes instance for \textsc{Zero-on-Subcube} if and only if $w$ is a valid witness for ``$C \in \CircuitSAT$''. Then, \cite[Section 5.3.2]{harsha2004robust} shows how to compute $\Pi_{\ZoS}$ in polynomial time given $P$.
        \item {\bf Soundness:} This follows from \cite[Lemma 5.4.4]{harsha2004robust}. Note that $|\F| \ge \alpha\cdot (mn^{1/m})^3$ for some large enough constant $\alpha$, and $\delta$ is a small enough constant, hence the hypotheses of \cite[Lemma 5.4.4]{harsha2004robust} are satisfied and this lemma is applicable. The soundness parameter can be amplified to $1/2$ by repeating the verification procedure $O(\delta^{-1})$ times.
        \item {\bf Complexity:} As calculated after \cite[Remark 5.4.3]{harsha2004robust}, the query complexity is $O(m|\F|\log|\F|) \le \polylog(n)$ and the randomness complexity is $O(m\log |\F|) \le O(m\log (mn^{1/m})) \le O(\log n)$. The length of the PCP proof is $\poly(|\F|^m) \le \poly(n)$.\qedhere
    \end{itemize}

\end{proof}

In what follows, we will deal with circuits taking $t'(n) \le \poly(t(n))$ inputs. Therefore, we need to substitute $n$ by $t'(n)$ in the instantiation of Reed--Muller codes. This in particular means that $m := \lceil \frac{\log t'(n)}{\log\log t'(n)}\rceil$, $\F$ is a finite field of size $\Theta(m\cdot t'(n)^{1/m})^3$, and $H\subseteq \F$ is a subset of size $t'(n)^{1/m}$.

\paragraph{The ``encoded history''.} Let $M$ be $\TIME[t(n)]^\NP\RoundLength{r(n)}{s(n)}$ machine, $x \in \{0, 1\}^n$, and let $C_x$ denote the circuit guaranteed in \autoref{lemma: computational history of P NP a}. Let $V$ be a PCP verifier for $\CircuitSAT$ as described in \autoref{thm: PCP for SAT with witness prefix}, then $V$ accepts PCP proofs of the form $\Pi = (\tilde{w}, \Pi')$ where $\tilde{w}$ is the Reed--Muller encoding of a valid witness for $C_x$ and $\Pi'$ denotes auxiliary proof oracles. Note that the Reed--Muller code is systematic (i.e., the original data appears as a prefix of the encoding), hence $\tilde{w}$ contains a satisfying assignment of $C_x$ as a prefix. This satisfying assignment, in turn (by \autoref{lemma: computational history of P NP a}), contains $(\pfx, out)$ as a prefix, where $\pfx \in \{0, 1\}^{r'(n)}$ and $out$ is (purportedly) the output of $M(x)$.

\begin{theorem}\label{thm: PCP}
    Let $M$ be a $\TIME[t(n)]^\NP\RoundLength{r(n)}{s(n)}$ machine that outputs $\ell(n)$ bits, and $r'(n) := r(n)\cdot \lceil\log (t(n)+1)\rceil$. For any small enough constant $\delta > 0$, there is a ``PCP verifier'' $V^\Pi(x, z)$ such that for every $x \in \{0, 1\}^n$, there exists a string $\prefix_x \in \{0, 1\}^{r'(n)}$ such that the following holds:
    \begin{itemize}
        \item {\bf Completeness:} There exists a PCP proof $\Pi = (\tilde{w}, \Pi')$ such that $\Pr_z[V^\Pi(x, z)\text{ accepts}] = 1$, $\prefix_x$ is a prefix of $\Pi$, and $\tilde{w}$ is a polynomial of total degree $\le m|H|$. Such a proof can be computed in
        \[\TIME[\poly(t(n))]^\NP\RoundLength{r(n)\cdot s(n)}{s(n)}.\]
        \item {\bf Soundness:} Let $\Pi = (\tilde{w}, \Pi')$ be any PCP proof such that $\Pr_z[V^\Pi(x, z)\text{ accepts}] > 1/2$. Then $\tilde{w}$ is $\delta$-close to some polynomial $\tilde{w}': \F^m \to \F$ of total degree $m|H|$ such that the first $r'(n)$ bits of $\tilde{w}'$ are $\le \pfx_x$ (in lexicographic order). Moreover, if these first $r'(n)$ bits are exactly equal to $\pfx_x$, then the $(r'(n)+1 \sim r'(n)+\ell(n))$-th bits of $\tilde{w}'$ are equal to the output of $M(x)$.
        \item {\bf Complexity:} The PCP proof has length $|\Pi| \le \poly(t(n))$. The verifier $V$ takes a random string of length $|z| = O(\log t(n))$ and makes $\polylog(t(n))$ queries to $\Pi$.
    \end{itemize}
\end{theorem}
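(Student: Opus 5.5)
We combine \autoref{lemma: computational history of P NP a} with \autoref{thm: PCP for SAT with witness prefix}. Given $x$, first compute in $\poly(t(n))$ time the circuit $C_x$ from \autoref{lemma: computational history of P NP a}. Then let $\hat C_x$ be the same circuit regarded as a $\CircuitSAT$ instance of size $t'(n)\le\poly(t(n))$, where valid witnesses are an input $\hist$ together with all intermediate gate values. We order the witness so that $\hist$ occupies the first coordinates under the fixed bijection $[t'(n)]\leftrightarrow H^m$. The verifier $V^\Pi(x,z)$ builds $\hat C_x$ and runs the PCP verifier of \autoref{thm: PCP for SAT with witness prefix} on it, with the same randomness and query complexity. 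This gives the Complexity bullet directly, since $\log t'(n)=O(\log t(n))$ and the proof length is $\poly(t'(n))$. We define $\pfx_x$ as the first $r'(n)$ bits of $\hist^\star=\alpha$, the lexicographically largest satisfying assignment of $C_x$. Throughout, ``the first bits of a polynomial'' means its evaluations on the first points of $H^m$ under the bijection. This is consistent because the Reed--Muller code is systematic and $\{0,1\}\subseteq H$.

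\textbf{Completeness.} Take $w$ to be $\hist^\star$ together with its gate values. By \autoref{lemma: computational history of P NP a}, $\hist^\star$ can be computed in $\TIME[\poly(t(n))]^\NP\RoundLength{r(n)s(n)}{s(n)}$. The gate values are then obtained by a deterministic evaluation of $C_x$. The proof $\Pi=(\tilde w,\Pi')$ is produced in deterministic $\poly(t(n))$ time by the completeness part of \autoref{thm: PCP for SAT with witness prefix}, so no extra oracle rounds are needed. The low-degree extension $\tilde w$ has individual degree below $|H|$, hence total degree at most $m|H|$. Its prefix (under the systematic layout) is $\hist^\star$, so $\pfx_x$ is a prefix of $\Pi$. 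The verifier accepts with probability $1$.

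\textbf{Soundness.} If $V$ accepts $\Pi$ with probability greater than $1/2$, then by the contrapositive of soundness in \autoref{thm: PCP for SAT with witness prefix}, $\tilde w$ is $\delta$-close to some $\tilde w'$ of total degree at most $m|H|$ whose restriction to $H^m$ is a valid witness. In particular, that restriction contains a satisfying assignment $\hist'$ of $C_x$. By Claim~\ref{claim: histstar is the correct comp history}, $\hist'\le\hist^\star$ lexicographically, so its first $r'(n)$ bits are $\le\pfx_x$. If the first $r'(n)$ bits are equal to $\pfx_x$, the ``moreover'' part of \autoref{lemma: computational history of P NP a} shows that the next $\ell(n)$ bits of $\hist'$ are the output of $M(x)$.

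The main point to be careful about, though it is not a deep obstacle, is bookkeeping. The lexicographic order of $\hist$ must agree with the order of the first points of $H^m$ under the bijection, so that the ``first $r'(n)$ bits of $\tilde w'$'' mean exactly $\hist'$'s prefix. We also need $\tilde w'|_{H^m}$ to be Boolean on those coordinates; this follows because the witness is valid and therefore $\{0,1\}$-valued.
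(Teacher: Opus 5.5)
Your proposal is correct and follows the paper's proof essentially step for step. You build $C_x$ from the computational-history lemma and run the Circuit-SAT PCP verifier on it, taking $\pfx_x$ to be the length-$r'(n)$ prefix of the lexicographically largest satisfying assignment $\mathsf{hist}^\star$; completeness, soundness (via maximality of $\mathsf{hist}^\star$ and the ``moreover'' clause of the lemma) and complexity then follow directly, and your extra bookkeeping on placing $\mathsf{hist}$ first under the bijection with $H^m$ and on $\tilde w'|_{H^m}$ being Boolean only spells out what the paper leaves implicit in its appeal to systematicity.
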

\begin{proof}
    Let $V^\Pi(x, z)$ be the verifier that given an input $x$, computes the circuit $C_x$ as in \autoref{lemma: computational history of P NP a} and runs $V_1^\Pi(C_x, z)$, where $V_1$ is the PCP verifier in \autoref{thm: PCP for SAT with witness prefix}. For each input $x$, let $\alpha_x$ denote the lexicographically largest satisfying assignment of $C_x$, and let $\pfx_x$ denote the length-$r'(n)$ prefix of $\alpha_x$.

    \begin{itemize}
        \item {\bf Completeness:} Let $w$ be the witness for ``$C_x \in \CircuitSAT$'' induced by the assignment $\alpha_x$. Then, by the {\bf Completeness} bullet in \autoref{thm: PCP for SAT with witness prefix}, there exists a PCP proof $\Pi = (\tilde{w}, \Pi')$ such that $\tilde{w}$ is the Reed--Muller encoding of $w$ and
        \[\Pr_z[V^\Pi(x, z)\text{ accepts}] = \Pr_z[V_1^\Pi(C_x, z)\text{ accepts}] = 1.\]
        Moreover, $\Pi$ can be computed in deterministic $\poly(t(n))$ time given $w$ (and $C_x$). Since $w$ can be computed in
        \[\TIME[\poly(t(n))]^\NP\RoundLength{s(n)\cdot r(n)}{s(n)},\]
        given $x$, $\Pi$ can be computed within asymptotically the same resource bound.

        \item {\bf Soundness:} Since $\Pr_z[V^\Pi(x, z)\text{ accepts}] > 1/2$, by the {\bf Soundness} bullet in \autoref{thm: PCP for SAT with witness prefix}, $\tilde{w}$ is $\delta$-close to some polynomial $\tilde{w}': \F^m \to \F$ of total degree at most $m\cdot |H|$ such that $\tilde{w}'|_{H^m}$ is a valid witness for ``$C_x \in \CircuitSAT$''. Since $\alpha_x$ is the lexicographically largest such witness, $\tilde{w}'|_{H^m}$ is lexicographically $\le \alpha_x$. Moreover, by \autoref{lemma: computational history of P NP a}, if the length-$r'(n)$ prefix of $\tilde{w}'$ coincides with $\pfx_x$, then the $(r'(n) + 1 \sim r'(n) + \ell(n))$-th bits of $\tilde{w}'$ is equal to the output of $M(x)$.
        \item {\bf Complexity:} Let $t'(n) \le \poly(t(n))$ denote the size of $C_x$. The {\bf Complexity} bullet in \autoref{thm: PCP for SAT with witness prefix} implies that $V$ accepts a proof of length $\poly(t'(n)) \le \poly(t(n))$, takes a random string of length $O(\log t'(n)) \le O(\log t(n))$, and makes $\polylog(t'(n)) \le \polylog(t(n))$ queries to $\Pi$.\qedhere
    \end{itemize}
\end{proof}

\section{A Near-Maximum Circuit Lower Bound}\label{sec: near-max lower bounds}
In this section, we prove \autoref{thm: main}.

\subsection{Instance-Wise Hardness-Randomness Tradeoff for \texorpdfstring{$\Avoid$}{Avoid}}

We start with the following theorem, which is essentially an \emph{instance-wise hardness-randomness tradeoff} for solving $\Avoid$ in $\P^\NP\RoundLength{r(n)}{s(n)}$ for some modest parameters $r(n)$ and $s(n)$. (See also \autoref{remark: instance-wise HvR}). Crucially, this hardness-randomness tradeoff has a \emph{uniform reconstruction procedure} computable in $\smart\P^\prMA$.

\begin{theorem}\label{thm: instance-wise HvR}
    Let $r(n), s(n) \le t(n)$ be nice functions and let $f: \{0, 1\}^n \to \{0, 1\}^{t(n)}$ be a multi-output function computable in $\TIME[t(n)]^\NP\RoundLength{r(n)}{s(n)}$. For every nice function $k(n) \le t(n)$, there are algorithms $\Solve_f$ and $\Recon_f$, which can be efficiently computed from the description of a $\SAT$-oracle machine for $f$, such that the following holds:
    \begin{itemize}
        \item $\Solve_f(x, G)$ takes $x \in \{0, 1\}^n$ and an $\Avoid$ instance $G: \{0, 1\}^{k(n)} \to \{0, 1\}^{2k(n)}$ of size at most $k(n)^4$ as inputs, outputs a string $y \in \{0, 1\}^{2k(n)}$, and runs in
        \[\TIME[\poly(t(n))]^\NP\RoundLength{r^\Solve(n) := r(n)s(n) + O(k(n) + \log t(n))}{s^\Solve(n) := \max\{s(n), k(n)\}}.\]
        \item $\Recon_f(x, i)$ takes $x \in \{0, 1\}^n$ and $i \in [t(n)]$ as  input, runs in deterministic $\poly(k(n), r(n), \log t(n))$ time with access to a $\prMA$ oracle $\Pi_f$, and outputs a bit.
        
        (As a minor remark, $\Recon_f(x, i)$ does not need access to the bad $\Avoid$ instance $G$.)
        
        \item Let $x \in \{0, 1\}^n$ and $G$ be an $\Avoid$ instance such that $\Solve_f(x, G)$ fails to output a string outside $\Range(G)$. Then for every $i \in [t(n)]$, the output of $\Recon_f(x, i)$ is the $i$-th bit of $f(x)$, and $\Recon_f(x, i)$ only makes smart $\pr\MA$ oracle queries.
    \end{itemize}
\end{theorem}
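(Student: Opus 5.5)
We apply \autoref{thm: PCP} to the machine $M$ computing $f$, with $\ell(n)=t(n)$. This gives a verifier $V$, a canonical prefix $\pfx_x\in\{0,1\}^{r'(n)}$ with $r'(n)=r(n)\lceil\log(t(n)+1)\rceil$, and a canonical PCP proof $\Pi_x$. The proof $\Pi_x$ has length $T:=\poly(t(n))$, begins with $\pfx_x$, and contains $f(x)$ in positions $r'(n)+1,\dots,r'(n)+t(n)$. It is computable in $\TIME[\poly(t(n))]^\NP\RoundLength{r(n)s(n)}{s(n)}$.

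\textbf{The solver.} $\Solve_f(x,G)$ first computes $\Pi_x$, viewed as a truth table $F$ of length $T$. It then runs $\JK(G,F)$ from \autoref{thm: Jerabek-Korten}, which adds $O(k(n)+\log T)=O(k(n)+\log t(n))$ rounds with witness length $k(n)$. If $\JK$ outputs \texttt{``hard''}, $\Solve_f$ outputs $y$; otherwise it outputs something arbitrary, say $0^{2k(n)}$. Sequential composition gives the claimed round bound $r^\Solve$ and witness length $\max\{s(n),k(n)\}$. If $\Solve_f$ fails, then $\JK$ must have returned \texttt{``easy''}, so $\Pi_x$ has a circuit of size $s^\star:=O(k(n)^4\log t(n))$. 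This also covers all components of $\Pi_x$, including $\tilde w$ as a table of field elements.

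\textbf{The reconstruction.} $\Recon_f(x,i)$ recovers $\pfx_x$ bit by bit. Maintain a prefix $p$ of length $j<r'(n)$ that agrees with $\pfx_x$. For $b\in\{0,1\}$, query the $\prMA$ problem $Q(x,p\circ b)$. Here Merlin sends a circuit $D$ of size at most $s^\star$, which is interpreted as a proof $\Pi_D$. Arthur runs $V^{\Pi_D}(x,z)$ and also checks that the prefix of the polynomial $\tilde w'$ decoded from $D$ equals $p\circ b$. To make the prefix check robust, Arthur reads those prefix bits through the self-corrector $\Corr$ of \autoref{thm: low-degree correction}, applied to the $\tilde w$-part of $D$ at the points of $H^m$ indexing the prefix. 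If the query with $b=1$ is accepted, set $p\gets p\circ 1$; otherwise set $p\gets p\circ 0$. Once $p=\pfx_x$, the oracle is asked one more question, $Q'(x,\pfx_x,i,c)$: does some small circuit $D$ pass verification with prefix $\pfx_x$ and self-corrected bit $r'(n)+i$ equal to $c$? By the ``moreover'' clause of \autoref{thm: PCP}, this bit is $f(x)_i$. The total running time is $\poly(r'(n),s^\star,\polylog t(n))=\poly(k(n),r(n),\log t(n))$.

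\textbf{Smartness and correctness.} This is the main obstacle. Each query must be either a Yes instance (some $D$ is accepted with probability $1$) or a No instance (every $D$ is accepted with probability at most, say, $1/3$). \emph{Yes case:} if $p\circ b$ is a prefix of $\pfx_x$, the circuit for $\Pi_x$ is accepted with probability $1$, because $V$ has perfect completeness and $\Corr$ is exact on true polynomials. \emph{No case:} suppose some $D$ is accepted with probability above $1/3$. After suitable amplification, the soundness clause of \autoref{thm: PCP} says $\tilde w_D$ is $\delta$-close to some $\tilde w'$ with prefix $\le\pfx_x$. With $\delta<1/4$, $\Corr$ then reads off $\tilde w'$ with high probability, so the prefix check forces the true prefix of $\tilde w'$ to be $p\circ b$. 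Since $p$ equals the first $j$ bits of $\pfx_x$, and $p\circ b\neq$ those $j+1$ bits, this forces $b=1$ while $\pfx_x$ has bit $0$ there. That contradicts prefix $\le\pfx_x$.

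Consequently, the $b=1$ query is always inside the promise: it is Yes exactly when the next bit of $\pfx_x$ is $1$, and No otherwise. So we query only $b=1$, which keeps every query smart. The final $Q'$ query is analyzed the same way using the ``moreover'' clause: for the correct $c$ it is Yes, and for the wrong $c$ it is No. The remaining work is bookkeeping: combining the rejection probabilities of $V$ and $\Corr$ into the $1$ versus $1/2$ promise gap by repetition, and handling the encoding of field elements inside $D$.
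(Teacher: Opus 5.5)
Your proposal is correct and follows the paper's proof essentially step for step. $\Solve_f$ computes the canonical PCP proof $\Pi_x$ of \autoref{thm: PCP} and runs $\JK$ on it. On failure, $\Recon_f$ recovers $\pfx_x$ bit by bit using the single smart query ``some circuit of size $O(k(n)^4\log t(n))$ passes $V$ and its self-corrected prefix extends $p\circ 1$,'' then makes one final query for bit $r'(n)+i$. The Yes/No analysis likewise rests on perfect completeness, exactness of $\Corr$ on true polynomials, and the lexicographic bound in the soundness clause.

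One small slip: in your No-case sentence, the contradiction with ``prefix $\le\pfx_x$'' arises only when $b=1$ and the next bit of $\pfx_x$ is $0$; nothing ``forces'' $b=1$, and for $b=0$ there is no contradiction. That is exactly why, as you then correctly conclude, only the $b=1$ query is guaranteed to be smart.
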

\begin{proof}
    We first describe the procedure $\Solve_f(x, G)$. Let $r'(n) := O(r(n)\log t(n))$, we first apply the PCP theorem (\autoref{thm: PCP}) on the function $f$ and obtain a ``PCP verifier'' $V^\Pi(x, z)$. For every input $x \in \{0, 1\}^n$, we also obtain a PCP proof $\Pi_x\in \{0, 1\}^{\poly(t(n))}$ as in the {\bf Completeness} item of \autoref{thm: PCP}. Then, $\Solve_f(x, G)$ invokes $\JK(G, \Pi_x)$ (\autoref{thm: Jerabek-Korten}), using $\Pi_x$ as the ``hard truth table'' to solve the $\Avoid$ instance $G$. We say that $\Solve_f(x, G)$ \emph{succeeds} if $\JK(G, \Pi_x)$ outputs the message \texttt{``hard''} and a non-output of $G$.
    
    We can compute $\Pi_x$ in $\TIME[\poly(t(n))]^\NP\RoundLength{r(n)\cdot s(n)}{s(n)}$. Then, $\JK(G, \Pi_x)$ runs in $\TIME[\poly(t(n), |G|)]^\NP\RoundLength{O(k(n)+\log t(n))}{k(n)}$. Since $|G| \le k(n)^4\le\poly(t(n))$, $\Solve_f(x, G)$ can be computed in
    \[\TIME[\poly(t(n))]^\NP\RoundLength{r(n)\cdot s(n)+O(k(n)+\log t(n))}{\max\{s(n), k(n)\}}.\]
    Next we describe the procedure $\Recon_f(x, i)$; assuming $\Solve_f(x, G)$ does not succeed, the goal of this procedure is to output the $i$-th bit of $f(x)$ in $\poly(k(n), r(n), \log t(n))$ time with smart access to a $\prMA$ oracle. For a PCP proof $\Pi = (\tilde{w}, \Pi')$, let $\tilde{w}': \F^m \to \F$ denote the polynomial of total degree at most $m|H|$ that is closest to $\tilde{w}$, and define $\widetilde{\pfx}(\Pi)$ and $\widetilde{out}(\Pi)$ to be the first $r'(n)$ bits and the $(r'(n) + 1 \sim r'(n) + t(n))$-th bits of $\tilde{w}'$, respectively. (Recall the parameters here: $m = O\mleft(\frac{\log t(n)}{\log\log t(n)}\mright)$, $|H| := \poly(t(n)^{1/m}) \le \polylog(t(n))$ and $|\F| = \Theta((m|H|)^3)$.) We also say that $\Pi$ is \emph{sound} if $\Pr_z[V^\Pi(x, z)\text{ accepts}] > 1/2$. By the {\bf Soundness} item of \autoref{thm: PCP}, if $\Pi$ is any sound PCP proof that maximizes $\widetilde{\pfx}(\Pi)$ (over all sound PCP proofs), then $\widetilde{out}(\Pi)$ is equal to $f(x)$. Hence, the task of $\Recon_f(x, i)$ is to find (a succinct description of) such a PCP proof $\Pi$ and output the $i$-th bit in $\widetilde{out}(\Pi)$.
    
    Note that whenever $\Solve_f(x, G)$ does not succeed, by \autoref{thm: Jerabek-Korten}, the circuit complexity of $\Pi_x$ is at most $k'(n) := O(k(n)^4\log t(n))$. The algorithm $\Recon_f(x, i)$ proceeds in $r'(n)$ rounds. It maintains a string $\pfx$ (initially the empty string). In each round, it asks the following $\pr\MA$ query $q_{\pfx}$:
    \begin{itemize}
        \item Merlin provides a circuit $C:\{0, 1\}^{\log|\Pi_x|}\to\{0, 1\}$ of size at most $k'(n)$.
        \item Arthur samples randomness $z\gets O(\log t(n))$ (for the PCP verifier) and $z_{\Corr} \gets \{0, 1\}^{\polylog(t(n))}$ (for the \hyperref[thm: low-degree correction]{Low-Degree Self-Corrector}).
        \item If $V^C(x, z)$ rejects then Arthur rejects. Otherwise, let $\Pi_C$ be the truth table of $C$ (which is a PCP proof), Arthur accepts if and only if $\pfx\circ 1$ is a prefix of $\widetilde{\pfx}(\Pi_C)$ (where each bit of $\widetilde{\pfx}(\Pi_C)$ is evaluated by the \hyperref[thm: low-degree correction]{Low-Degree Self-Corrector}).
    \end{itemize}

    If $q_\pfx$ accepts, then we let $\pfx \gets \pfx\circ 1$; otherwise we let $\pfx \gets \pfx\circ 0$. Then we proceed to the next round. At the end of $r'(n)$ rounds, we obtain a string $\pfx \in \{0, 1\}^{r'(n)}$. We make the final query $q_{\pfx, i}$, which is the same query as $q_{\pfx}$ except that Arthur additionally rejects if the $i$-th bit $\widetilde{out}(\Pi_C)$ is $0$ (again, we access $\widetilde{out}(\Pi_C)$ using the \hyperref[thm: low-degree correction]{Low-Degree Self-Corrector}). If $q_{\pfx, i}$ accepts, then $\Recon_f(x, i)$ outputs $1$; otherwise $\Recon_f(x, i)$ outputs $0$.

    Clearly, $\Recon_f(x, i)$ runs in $\poly(k(n), r(n), \log t(n))$ time and the size of each query $q_\pfx$ and $q_{\pfx, i}$ is also upper bounded by $\poly(k(n), r(n), \log t(n))$.

    Assuming $\Solve_f(x, G)$ does not succeed, we show that every query made by $\Recon_f(x, i)$ is smart (i.e., in the promise of $\pr\MA$). Recall that $\Pi_x$ is the PCP proof for $x$ guaranteed in the {\bf Completeness} item of \autoref{thm: PCP}. We first use an induction to show that $\pfx$ (i.e., the string finally obtained in $\Solve_f(x, G)$) is equal to the length-$r'(n)$ prefix of $\Pi_x$. For notation convenience, let $\pfx_{\le j}$ denote the length-$j$ prefix of $\pfx$, hence the $\pr\MA$ query made in the $j$-th round is $q_{\pfx_{\le j-1}}$. Let $j < r'(n)$, suppose that after $j$ rounds, $\pfx$ is equal to the length-$j$ prefix of $\Pi_x$ (this is clearly true for $j = 0$), we now consider the $(j+1)$-th round:
    \begin{itemize}
        \item If the $(j+1)$-th bit of $\Pi_x$ is $1$, then in $q_{\pfx_{\le j}}$, there exists a response of Merlin that makes Arthur accept with probability $1$. Hence, $q_{\pfx_{\le j}}$ satisfies the promise of \textsc{Yes} instances of $\pr\MA$.
        
        To see this, let $C$ be the size-$k'(n)$ circuit whose truth table is $\Pi_x$. Then $\Pr_z[V^C(x, z)\text{ accepts}] = 1$. Since $\Pi_x = (\tilde{w}_x, \Pi'_x)$ for some polynomial $\tilde{w}_x$ of total degree at most $m|H|$, the \hyperref[thm: low-degree correction]{Low-Degree Self-Corrector} always returns the correct value in $\widetilde{\pfx}(\Pi_x)$ and $\widetilde{out}(\Pi_x)$ with probability $1$. Finally, $\pfx_{\le j}\circ 1$ is a prefix of $\widetilde{\pfx}(\Pi_x)$. It follows that Arthur accepts with probability $1$ given Merlin's proof $C$.

        \item If the $(j+1)$-th bit of $\Pi_x$ is $0$, then in $q_{\pfx_{\le j}}$, no matter what circuit Merlin provides to Arthur, Arthur rejects w.p. $\ge 1/2$. Hence, $q_{\pfx_{\le j}}$ satisfies the promise of \textsc{No} instances of $\pr\MA$.

        To see this, let $C$ be any circuit and $\Pi_C = (\tilde{w}_C, \Pi'_C)$ be its truth table (which corresponds to a PCP proof). If $\Pr_z[V^C(x, z)\text{ accepts}] \le 1/2$, then Arthur rejects with probability $\ge 1/2$. Otherwise, by the {\bf Soundness} of \autoref{thm: PCP}, $\tilde{w}_C$ is $1/4$-close to a polynomial of total degree $m|H|$, and $\widetilde{\pfx}(\Pi_C)$ is no larger than the length-$r'(n)$ prefix of $\Pi_x$, which is strictly smaller than $\pfx_{\le j}\circ 1$. Since $|\F| > \omega(|H|^2)$, the \hyperref[thm: low-degree correction]{Low-Degree Self-Corrector} is always correct with high probability (the correct probability can be boosted to $1-1/\poly(t(n))$ by repetition), which means that Arthur will detect that $\pfx_{\le j}\circ 1$ is not a prefix of $\widetilde{\pfx}(\Pi_C)$ with high probability and hence reject.
    \end{itemize}
    It follows that the $(j+1)$-th $\pr\MA$ query satisfies the $\pr\MA$ promise, and it is a \textsc{Yes} instance if and only if the $(j+1)$-th bit of $\Pi_x$ is $1$. Therefore, after $r'(n)$ rounds, $\pfx$ is equal to the length-$r'(n)$ prefix of $\Pi_x$.

    Similarly, we can show that the final query $q_{\pfx, i}$ is inside the $\pr\MA$ promise and is a \textsc{Yes} instance if and only if the $i$-th bit of $f(x)$ is $1$. The proof is very similar so we only provide a sketch:\begin{itemize}
        \item If the $i$-th bit of $f(x)$ is indeed $1$, then Arthur accepts with probability $1$ when Merlin sends a small circuit whose truth table is $\Pi_x$.
        \item If the $i$-th bit of $f(x)$ is $0$, then for every PCP proof $\Pi$, either $\Pr_z[V^C(x, z)\text{ accepts}] \le 1/2$, or $\widetilde{\pfx}(\Pi)$ is strictly smaller than the length-$r'(n)$ prefix of $\Pi_x$, or $\widetilde{\pfx}(\Pi)$ is equal to that prefix. In the first two cases, Arthur rejects w.p.~$\ge 1/2$. In the last case, by the {\bf Soundness} item of \autoref{thm: PCP}, we have $\widetilde{out}(\Pi) = f(x)$, hence the $i$-th bit of $\widetilde{out}(\Pi)$ is $0$ and Arthur rejects w.h.p.~as well.
    \end{itemize}

    In conclusion, when $\Solve_f(x, G)$ does not succeed, $\Recon_f(x, i)$ outputs the $i$-th bit of $f(x)$ in deterministic $\poly(k(n), r(n), \log t(n))$ time with smart access to a $\pr\MA$ oracle. This finishes the proof.
\end{proof}

\begin{remark}\label{remark: instance-wise HvR}
    \autoref{thm: instance-wise HvR} is an \emph{instance-wise hardness-randomness tradeoff} for solving $\Avoid$ in the following sense. Given any multi-output function $f$ computable in a certain deterministic class $\calA$ that is almost-all-input hard against a certain randomized class $\calB$, we can solve the range avoidance problem in a related deterministic class $\calC$. The statement of \autoref{thm: instance-wise HvR} implies such an instance-wise hardness-randomness tradeoff where
    \begin{itemize}
        \item $\calA = \TIME[t(n)]^\NP\RoundLength{r(n)}{s(n)}$,
        \item $\calB = \smart\TIME[\poly(k(n), r(n), \log t(n))]^{\prMA}$, and
        \item $\calC = \TIME[\poly(t(n))]^\NP\RoundLength{r(n)s(n) + O(k(n) + \log t(n))}{\max\{s(n), k(n)\}}$.
    \end{itemize}

    It is instructive to compare this statement to that of~\cite{CT21b}. In~\cite{CT21b} it was proven that given any multi-output function computable in a certain deterministic class $\calA$ that is almost-all-input hard against a certain randomized class $\calB$, we can solve $\Gap\SAT$ (the canonical $\pr\RP$-complete problem) in a related deterministic class $\calC$, where:
    \begin{itemize}
        \item $\calA$ is the class of logspace-uniform circuits of depth $d$ and time $T$;
        \item $\calB$ is the class of randomized algorithms running in time $\poly(d, \log T, n)$; and
        \item $\calC$ is the class of deterministic algorithms running time $\poly(T, n)$.
    \end{itemize}
\end{remark}

\subsection{The Lower Bound}

Fix a $\P$-uniform family of $\Avoid$ instances $\{G_n\}_{n=2^k,k \in \N}$, where for each $n = 2^k$ a power of $2$, $G_n: \{0, 1\}^n \to \{0, 1\}^{2n}$ is a circuit of size at most $n^4$. We show that there exists a $\smart\FP^\MA/_1$ algorithm solving $\Avoid$ on this family infinitely often. Looking ahead, combined with the \JK~reduction from $\Avoid$ to finding hard truth tables, our algorithm will be able to solve general $\Avoid$ instances (without uniformity constraints) on infinitely many input lengths.

\begin{theorem}\label{thm: iterative win-win}
    There is an infinitely-often $\smart\FP^\prMA/_1$ algorithm $\calA$ solving $\Avoid$. In particular, there are infinitely many input lengths $\{n_i = 2^{k_i}\}$ such that $\calA(1^{n_i})$ outputs a string $y \in \{0, 1\}^{2n_i}\setminus\Range(G_{n_i})$, and for every input length $n = 2^k$, $\calA(1^n)$ never makes any query outside the $\prMA$ promise.
\end{theorem}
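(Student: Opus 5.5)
We run the iterative win-win of \cite{CLORS23}, using \autoref{thm: instance-wise HvR} as the hardness-randomness tradeoff with uniform reconstruction. Choose a tower of input lengths $n_0 < n_1 < \cdots < n_L$, all powers of $2$, with $n_{j+1} = n_j^{c}$ for a large constant $c$. The first algorithm $f_0$ is brute force on $G_{n_0}$: enumerate all $2^{2n_0}$ candidate strings and output the lexicographically first non-output, using one $\NP$ round with witness length $n_0$. Thus $f_0 \in \TIME[t_0]^\NP\RoundLength{r_0}{s_0}$ with $t_0 = 2^{O(n_0)}$, $r_0 = 1$, $s_0 = n_0$.

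Inductively, given $f_j \in \TIME[t_j]^\NP\RoundLength{r_j}{s_j}$, which on input $1^{n_j}$ outputs a non-output of $G_{n_j}$, set $k = n_{j+1}$ and define $f_{j+1}(1^{n_{j+1}}) := \Solve_{f_j}(1^{n_j}, G_{n_{j+1}})$. The $G_n$ are $\P$-uniform, so this is well defined. By \autoref{thm: instance-wise HvR}, $f_{j+1}$ runs in the class
\[\TIME[\poly(t_j)]^\NP\RoundLength{r_js_j + O(n_{j+1}+\log t_j)}{\max\{s_j,n_{j+1}\}}.\]
The recursion for the parameters is then $t_{j+1} = t_j^{O(1)}$, $s_{j+1} = n_{j+1}$, and $r_{j+1} \le r_j n_j + O(n_{j+1} + \log t_j)$. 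Since $\log t_j = O(n_0)\cdot O(1)^j$ while $n_j = n_0^{c^j}$, the term $\log t_j$ is eventually dominated by $n_{j+1}$. Also $r_jn_j \le \poly(n_j) \le n_{j+1}$ once $c$ is large, so $r_j, s_j \le \poly(n_j)$ throughout. Take $L$ with $n_L \ge t_0$, i.e.\ $L \approx \log\log n_0$. At level $L$ we have $t_L = 2^{O(n_0)\cdot O(1)^L} \le \poly(n_L)$, provided $c$ is large compared with the $O(1)$ exponent. (This parameter bookkeeping is routine.)

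There are two cases. If some level $j < L$ has $\Solve_{f_j}(1^{n_j}, G_{n_{j+1}})$ failing, let $j$ be the smallest such level. Then all of $f_0,\dots,f_j$ correctly output non-outputs, in particular $f_j$ solves $G_{n_j}$. By \autoref{thm: instance-wise HvR}, each output bit $i$ of $f_j(1^{n_j})$ equals $\Recon_{f_j}(1^{n_j}, i)$, which runs in time $\poly(n_{j+1}, r_j, \log t_j) = \poly(n_j)$ and makes only smart $\prMA$ queries. Repeating this for all $2n_j$ bits solves $\Avoid$ on $G_{n_j}$ in $\smart\FP^{\prMA}$. If no level fails, then $f_L$ solves $G_{n_L}$ in $\TIME[\poly(n_L)]^\NP$, and $\NP$ queries are trivially smart $\prMA$ queries, so we just run it.

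The advice bit marks the good length. For each tower we designate one length $n^\star$: either $n_j$ (the first failing level) or $n_L$. On $n^\star$ the advice bit says which mode to use, i.e.\ whether to run $\Recon$ of $f_{j}$ (with $j$ determined by $n^\star$'s position in the tower) or to run $f_L$ directly. On all other power-of-two lengths, the advice tells $\calA$ to output a fixed string and make no queries. Two points need care here. First, the single bit must be enough: the length $n$ itself determines its tower and level index, so only the mode must be encoded. If mode ambiguity remains, we place the designation so that it is determined by $n$ alone. Second, smartness must hold on every length $n = 2^k$. It holds because queries are issued only at the designated length, and there the correct advice guarantees the hypothesis of \autoref{thm: instance-wise HvR}. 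We take disjoint towers, e.g.\ $n_0$ ranging over a sparse sequence such as $n_0^{(i+1)} > n_L^{(i)}$, so infinitely many good lengths arise. The main obstacle I expect is the parameter calculation: the round count $r_j s_j$ must not blow up, which needs $s_{j} = n_j$ and $r_j \le \poly(n_j)$ at every level, and the running time of $\Recon$ must be $\poly(n_j)$ despite the $\log t_j$ term.
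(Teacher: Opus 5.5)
Your plan follows the paper's proof closely. It uses the same tower $n_{j+1}=n_j^{c}$ starting from a brute-force solver, the same $\Solve_{f_j}$/$\Recon_{f_j}$ win-win from \autoref{thm: instance-wise HvR}, the same ``least failing level'' case split, and one advice bit per tower to mark a good length. Two pieces of the bookkeeping are wrong as written, though, and need the paper's versions.

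\emph{The stopping level.} Write $t_{j+1}=t_j^{D}$, $t_0=2^{an_0}$ and $n_j=n_0^{c^j}$, so that $\log t_L/\log n_L = a n_0 D^L/(c^L\log n_0)$. Take the least $L$ with $n_L\ge t_0$; this is $L=\Theta(\log n_0)$, not $\log\log n_0$. Minimality gives $c^L\log n_0 < c\,a n_0$, so the ratio exceeds $D^L/c \ge (a n_0/\log n_0)^{\log_c D}/c$, which is unbounded. Hence $t_L$ is superpolynomial in $n_L$ however large the constant $c$ is.

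The fix is to take $L$ to be the least index with $t_L\le n_{L+1}$; this is the paper's $i_\star$, the least $i$ with $2^{cD^in_0}\le n_{i+1}$. Such an $L$ exists because $c>D$, and it gives $t_L\le n_L^{c}$. Minimality also gives $t_j>n_{j+1}=k$ for all $j<L$, which is exactly the hypothesis $k(n)\le t(n)$ of \autoref{thm: instance-wise HvR}; the paper enforces this by setting $T_i=\max\{2^{cD^in_0},n_{i+1}\}$.

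Relatedly, prove the round bound by induction as a linear bound $r_j\le 3c'n_j$ with a fixed constant $c'$, as the paper does. A bound $r_j\le\poly(n_j)$ with an unspecified polynomial does not survive the $r_js_j$ term over $\Theta(\log n_0)$ levels.

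\emph{The advice bit.} As you describe it, the bit chooses between $\Recon$-mode and direct mode at $n^\star$, and also signals ``output a fixed string'' at every other length. That is three values, and one bit only gives two. There are two ways to repair this.
\begin{itemize}
\item Let the bit mean only ``this is the designated length,'' and read the mode off the level index: at level $L$ run $f_L$, at level $j<L$ run $\Recon_{f_j}$. This is consistent because $n^\star=n_L$ exactly when no level $j<L$ failed.
\item Do as the paper does. The direct algorithm makes only $\SAT$ queries and is therefore smart on every length. So advice $0$ can mean ``run $\BF_i$ if $i\ge i_\star$, else output $0^{2n}$,'' and advice $1$ means ``run $\Recon_i$.''
\end{itemize}
Your remark that the designation can be ``determined by $n$ alone'' points toward the first repair, but it needs to be made explicit. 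Your smartness argument (``queries are issued only at the designated length'') is valid only under that reading.
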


\begin{proof}
    Let $C \geq 3$ be a sufficiently large integer constant in the following. Let $\{n_i\}$, $\{T_i\}$, $\{r_i\}$, $\{s_i\}$ be parameters to be fixed (corresponding to input lengths, running times, number of parallel rounds, and query lengths, respectively). Let $\{G_n\}_{n = 2^k, k \in \N}$ denote a $\P$-uniform family of $\Avoid$ instances where for each $n=2^k$ a power of $2$, $G_n: \{0, 1\}^n \to \{0, 1\}^{2n}$ is a circuit of size at most $n^4$. Let $n_0 = 2^q$ be some fixed input length, let $T_0 = 2^{c n_0}$ for a universal constant $c \geq 1$, and let $r_0 = 0$, $s_0 = 0$. Let $\BF_0$ denote the {\bf Brute Force} algorithm for solving $\Avoid$ on $G_{n_0}$, which simply enumerates every string in $\{0, 1\}^{n_0}$, computes the whole range of $G_{n_0}$, and outputs the lexicographically first string not in $\Range(G_{n_0})$. Then, $\BF_0$ runs in deterministic $T_0$ time and makes no queries to the $\SAT$ oracle, so $r_0 = 0$ and $s_0 = 0$ suffice.

    Fix a universal constant $D \ge 1$ large enough so that the running time of $\Solve$ (from \autoref{thm: instance-wise HvR}) is asymptotically bounded by $t(n)^D$, and set $C$ so that $C > D + 1$. Let $n_0=2^q$ be large enough that $2^{c n_0} \ge n_0^C$. For each $i \ge 0$, set
    \[ n_{i+1} := n_i^C \qquad\text{and}\qquad  T_i := \max\{2^{cD^i n_0}, n_{i+1}\},\] and note that by our choice of $n_0$, our definition of $T_i$ is consistent with our earlier definition of $T_0 = 2^{c n_0}$.
    
    We inductively define an algorithm $\BF_i$ for the instance $G_{n_i}$. At stage $i$, fix $x_0 := 0^{n_0}$ and let $f_i: \{0, 1\}^{n_0} \to \{0, 1\}^{T_i}$ be the function that ignores its input and outputs the $2n_i$-bit string produced by $\BF_i$ on $G_{n_i}$, padded with extra zeros so that the total output has length $T_i$. Since $T_i \ge 2n_i$ and $\BF_i$ runs in $\TIME[T_i]^\NP\RoundLength{r_i}{s_i}$, the function $f_i$ can also be implemented in time $O(T_i)$ with $r_i$ rounds of $\NP$ queries with witness length $s_i$. Also, since $T_i \ge n_{i+1}$ and $|G_{n_{i+1}}| \le n_{i+1}^4$,  \autoref{thm: instance-wise HvR} applies to $f_i$ with $k(n_0) = n_{i+1}$. (Technically, we need $f_i$ to be defined on all inputs; we can simply have $f_i$ output all-zeroes on inputs that are not of length $n_0$.) From the theorem, we obtain:
    \begin{itemize}
        \item an algorithm $\BF_{i+1} := \Solve_{f_i}(x_0, G_{n_{i+1}})$ that attempts to solve $\Avoid$ on $G_{n_{i+1}}$ and runs in
        \[\TIME[T_i^D]^\NP\RoundLength{r_{i+1} := r_is_i + O(n_{i+1} + \log T_i)}{s_{i+1} := \max\{s_i, n_{i+1}\}}, \text{as well as}
        \]
        \item an algorithm $\Recon_i(j) := \Recon_{f_i}(x_0, j)$ that runs in deterministic $\poly(n_{i+1}, r_i, \log T_i)$ time with access to a $\prMA$ oracle. If $\BF_{i+1}$ fails to solve $\Avoid$ on $G_{n_{i+1}}$, then for every $j \in [T_i]$, $\Recon_i(j)$ outputs the $j$-th bit of $f_i(x_0)$ while making only smart $\prMA$ queries.
    \end{itemize}
    By our choice of $C$ and $D$, the running time of $\BF_{i+1} = \Solve_{f_i}(x_0, G_{n_{i+1}})$ is bounded from above by \[T_i^D = \max\{2^{cD^{i+1}n_0}, n_{i+1}^D\} \le \max\{2^{cD^{i+1}n_0}, n_{i+1}^C\} = \max\{2^{cD^{i+1}n_0}, n_{i+2}\} = T_{i+1}.\] Therefore $T_{i+1}$ is a time upper bound for $\BF_{i+1}$.

    Let us bound the other parameters. First, we observe that $s_i = n_i$ for every $i \ge 1$, since $s_0 = 0$ and $s_{i+1} = \max\{s_i, n_{i+1}\}$. Second, we claim that $r_i \le O(n_i)$ for every $i \ge 1$. Indeed, let $c'\ge 1$ be a constant such that $r_{i+1} = r_is_i + c'\cdot (n_{i+1} + \log T_i)$, then we can inductively prove that $r_i \le 3c'n_i$ for every $i\ge 0$. This is because $r_0 = 0$, and if $r_i \le 3c'n_i$ then
    \[
        r_{i+1} \le r_is_i + c'\cdot (n_{i+1} + \log T_i)
        \le 3c'\cdot n_i^2 + c'\cdot (n_{i+1} + \log T_i)
        \le 3c'\cdot n_{i+1},
    \]
    because $s_i = n_i$, $C \ge 3$ (which means $n_i^2 \le o(n_{i+1})$), and $\log T_i \le O(D^i n_0 + \log n_{i+1}) \le n_{i+1}$ for all sufficiently large $n_0$. Hence one call to $\Recon_i(j)$ takes $\poly(n_i)$ time with a $\prMA$ oracle, and reconstructing all first $2n_i$ bits of $f_i$ takes $\poly(n_i)$ time with a $\prMA$ oracle. 

    Let $i_\star$ be the least index such that $2^{cD^{i_\star} n_0} \le n_{i_\star+1}$. Such an $i_\star$ exists because $C > D$ and $n_{i+1} = n_0^{C^{i+1}}$. By definition of $i_\star$, for every $i \ge i_\star$ we have $T_i = \max\{2^{cD^i n_0}, n_{i+1}\} = n_{i+1}$. To see this, note that once $2^{c D^i n_0} \leq n_{i+1}$ is true for some $i$, it is true for all larger $i$ (because $D < C$, so $2^{c D^{i+1} n_0} \le n_{i+1}^C = n_{i+2}$). Therefore $T_i = n_{i+1} = n_i^C$, so $\BF_i$ can be simulated in deterministic $\poly(n_i)$ time with oracle access to $\SAT$.

    We claim that the above sequence of input lengths $\{n_i\}$ contributes at least one ``good'' input length on which $\Avoid$ is correctly solved in polynomial time. If there is some $i < i_\star$ for which $\BF_{i+1}$ fails on $G_{n_{i+1}}$, let $i$ be the least such index. Then $\BF_i$ succeeds on $G_{n_i}$, and since $\BF_{i+1}$ fails, the reconstruction guarantee above implies that for every $j \in [2n_i]$, $\Recon_i(j)$ outputs the $j$-th bit of $f_i(x_0)$, and these first $2n_i$ bits are the output of $\BF_i$ on $G_{n_i}$. Therefore, by running $\Recon_i(1), \ldots, \Recon_i(2n_i)$, we obtain a string in $\{0, 1\}^{2n_i} \setminus \Range(G_{n_i})$ in deterministic $\poly(n_i)$ time with smart $\prMA$ queries. On the other hand, if no such $i < i_\star$ exists, then $\BF_{i_\star}$ succeeds on $G_{n_{i_\star}}$, and it runs in deterministic $\poly(n_{i_\star})$ time with oracle access to $\SAT$ by the previous paragraph.

    We now run the above construction independently with different starting input lengths $n_0$: namely, we consider all $n_0 = 2^q$ such that $q$ is not divisible by $C$. Observe that \emph{every} sufficiently large $n = 2^k$ belongs to exactly one such sequence $\{n_i\}$, obtained by writing $k = qC^i$ for $q$ not divisible by $C$. (For those finitely many $n_0 = 2^q$ that are not large enough in the analysis above, our final algorithm will simply output $0^{2n}$ and make no oracle queries; note this won't affect the infinitely-often conclusion of the theorem.)

    Using the $\P$-uniformity of $\{G_n\}$ and the uniformity of the transformations in \autoref{thm: instance-wise HvR}, we obtain two uniform polynomial-time oracle algorithms $\calA_0$ and $\calA_1$ with oracle access to $\prMA$:
    \begin{itemize}
        \item On input $1^n$, $\calA_0$ locates the (unique) sequence of inputs $\{n_i\}$ containing $n$, the corresponding $n_0$, and the index $i$ such that $n_i = n$ in this sequence. If $i \geq i_\star$ (recall $i_{\star}$ is defined as the least index such that $2^{cD^{i_\star} n_0} \le n_{i_\star+1}$), then $\calA_0$ simulates $\BF_i$ using queries to $\SAT$. Otherwise, $\calA_0$ outputs $0^{2n}$.
        \item On input $1^n$, $\calA_1$ locates the sequence of inputs $\{n_i\}$  containing $n$, runs the corresponding reconstruction algorithm $\Recon_i(j)$ for each $j \in [2n]$, and outputs the resulting $2n$-bit string.
    \end{itemize}
    Both algorithms run in deterministic polynomial time. Moreover, every query made by $\calA_0$ is automatically inside the promise of $\prMA$, while $\calA_1$ is smart exactly on those lengths for which the solver $\BF_{i+1}$ fails.

    Finally, for each input length sequence $\{n_i\}$ defined by some $n_0 = 2^q$, let $n(q)$ be some good input length in that sequence. If $n(q)$ comes from the output of $\calA_0$ then we set the advice bit $\alpha_{n(q)} := 0$, otherwise we set $\alpha_{n(q)} := 1$. On all other lengths $n$, we set $\alpha_n := 0$. We define $\calA(1^n, \alpha_n) := \calA_{\alpha_n}(1^n)$. Then $\calA$ runs in polynomial time and is smart on every power-of-two input length. Moreover, for every sufficiently large $q$ there is a distinct input length $n=n(q)$ such that $\calA$ outputs a string in $\{0, 1\}^{2n} \setminus \Range(G_n)$. Observe that for all distinct $q\neq q'$ that are large enough, the corresponding good lengths $n(q)$ and $n(q')$ must be distinct. Since there are infinitely many such $q$, there are infinitely many good input lengths. This concludes the proof.
    \end{proof}

For each $n = 2^k$, define $G_n:\{0, 1\}^n \to \{0, 1\}^{2n}$ to be the truth table generator; that is, the input of $G_n$ is a circuit $C: \{0, 1\}^{k+1} \to \{0, 1\}$ described in $n$ bits, and the output is the truth table of $C$ of length $2^{k+1} = 2n$. As shown in~\cite{Korten21}, the $\Avoid$ problem for general circuits reduces to the $\Avoid$ problem on the family $\{G_n\}$ via an $\FP^\NP$ reduction.

\def\AKorten{{\calA_{\rm Korten}}}
\begin{lemma}[{\cite[Theorem 7]{Korten21}, see also \cite[Lemma 5.14]{CHR24}}]\label{lemma: Korten reduction}
    There is an $\FP^\NP$ algorithm $\AKorten$ such that:
    \begin{enumerate}
        \item $\AKorten$ takes an $s$-size circuit $C: \{0, 1\}^n \to \{0, 1\}^{n+1}$ and a truth table $f \in \{0, 1\}^{2^k}$ as inputs, where $2^k \ge s^3$ and $n \le s$.
        \item If the circuit complexity of $f$ is at least $c_1\cdot k\cdot s$ for a sufficiently large universal constant $c_1$, then $\AKorten(C, f)$ outputs a string $y \in \{0, 1\}^{n+1} \setminus \Range(C)$.
    \end{enumerate}
\end{lemma}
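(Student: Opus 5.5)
This is Korten's reduction, and I would prove it with the standard ``iterated hash-tree'' construction, in the same spirit as the proof of \autoref{thm: Jerabek-Korten}. The first step turns the stretch-by-one circuit $C:\{0,1\}^n\to\{0,1\}^{n+1}$ into a stretch-by-factor-two circuit $C^\star:\{0,1\}^{n}\to\{0,1\}^{2n}$ by composing $C$ with itself $n$ times. Apply $C$ to get $n+1$ bits. Keep the last bit as output, feed the first $n$ bits back into $C$, and repeat $n$ times, collecting $n$ output bits. Finally append the last $n$-bit state. The resulting circuit has size $O(ns)\le O(s^2)$.

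The key property is that a non-output of $C^\star$ can be converted, using an $\NP$ oracle, into a non-output of $C$. Given $y\notin\Range(C^\star)$, we walk backwards through the chain of $n$ applications. At each step we ask the $\NP$ oracle whether the current $(n+1)$-bit target, formed from the current state concatenated with the corresponding output bit, has a $C$-preimage. If it has one, we find it by search-to-decision and continue backward from it. If the backward walk succeeded at every step, we would have produced a preimage of $y$ under $C^\star$, which is impossible. Hence at some step the oracle says ``no'', and that $(n+1)$-bit target is a non-output of $C$. This whole procedure is $\FP^\NP$.

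Next I would run $\JK(C^\star, f)$ from \autoref{thm: Jerabek-Korten}, using $f$ as the hard truth table with $T=2^k$. If it returns \texttt{``hard''} together with some $y\notin\Range(C^\star)$, I convert $y$ as above and output the resulting string in $\{0,1\}^{n+1}\setminus\Range(C)$. Otherwise it returns \texttt{``easy''} with a circuit computing $f$ of size $O(|C^\star|\log T)=O(ns\cdot k)$. The algorithm $\AKorten$ is in $\FP^\NP$ because $T=2^k$ is polynomial in the input length.

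The main obstacle is the parameter bookkeeping. The circuit for $f$ produced in the easy case must have size below $c_1ks$, yet the naive bound $O(ns\cdot k)$ loses a factor of $n$. My first attempt to fix this is a sharper accounting of the \JK\ tree: rather than viewing the circuit as ``$k$ levels, each of size $|C^\star|$'', I would bound it by $O(k\cdot s)$. The idea is to evaluate $C^\star$ one bit at a time, using the index bits to select which of the $n$ stages of $C$ to run. This is the argument given in \cite[Lemma 5.14]{CHR24}, which encodes the stretch in an index-dependent way so that only $O(s)$ gates are needed per level. The hypothesis $2^k\ge s^3$ is there to guarantee $\log T$ dominates the lower-order terms and that $f$ is long enough for the tree to have depth $k-\log n$. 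If that sharper accounting does not go through directly, I would fall back to the simpler route of composing $C$ in a balanced GGM-style tree of depth $\log(2^k/n)$, so that each leaf bit costs $O(s)$ gates per level. This gives size $O(ks)$ at the cost of a more careful backward search over the tree, which still takes polynomially many $\NP$ queries.
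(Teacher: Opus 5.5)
\textbf{The gap is the stated threshold.} Your core construction is sound. Chaining $C$ into $C^\star:\{0,1\}^n\to\{0,1\}^{2n}$ of size $O(ns)$, running $\JK(C^\star,f)$, and walking backwards with the $\NP$ oracle to turn a non-output of $C^\star$ into a non-output of $C$ is correct and runs in $\FP^\NP$. But it proves item~2 only with threshold $O(k\cdot n\cdot s)\le O(ks^2)$, not $c_1ks$, and neither step you offer removes the factor $n$.

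\textbf{Why the first fix fails.} In the GGM tree over $C^\star$, each level takes as input an entire $n$-bit child: either $(b_1,\dots,b_n)$ or the final state $x_n$. Both depend on all $n$ chained copies of $C$. A circuit cannot ``select which stage to run,'' because each stage's input is the previous stage's output. So every level still costs $\Theta(ns)$ gates. Appealing to \cite[Lemma 5.14]{CHR24} does not help: that lemma is the very statement you are proving, not an argument for it. The index-dependent mechanism you attribute to it fails for the reason just given.

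\textbf{Why the fallback fails.} $C$ adds only one bit, so one application to an $n$-bit node cannot produce two disjoint $n$-bit children. If the children overlap, the $\JK$ bottom-up inversion breaks down. That inversion needs each level to be $C$ applied to disjoint blocks whose preimages concatenate into the previous level. Along this route, any tree driven by a stretch-$(n+1)/n$ map needs $\Theta(n\log(T/n))=\Theta(nk)$ chained copies of $C$ per root-to-leaf path; here $\log(T/n)=\Theta(k)$ because $2^k\ge s^3\ge n^3$. So the easy-case circuit has size $\Theta(nks)$. A ``more careful backward search'' changes the inversion procedure, not the size of the forward circuit that witnesses easiness.

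\textbf{What you can claim.} As written, the proposal does not establish the lemma with threshold $c_1ks$. What your argument does prove is the version with $c_1ks$ replaced by $c_1ks^2$, using $n\le s$. That weaker form is all the paper actually uses. In the theorem that follows, $f$ has length $2^k\ge s^4$ and circuit complexity $\Omega(2^k/k)=\Omega(s^4/\log s)$, which exceeds $c_1ks^2$ for large $s$. You should state and prove that version rather than claim $c_1ks$. Getting exactly $c_1ks$ would require a genuinely new way to avoid paying $n$ copies of $C$ per tree level, and the proposal does not supply one.
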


Therefore, we have:
\begin{theorem}
    There is an infinitely often $\smart\FP^\prMA/_1$ algorithm $\calA$ such that for infinitely many $s \in \N$, for all $s$-size circuit $C: \{0, 1\}^n \to \{0, 1\}^{n+1}$ when $n \le s$, $\calA(C)$ outputs a string $y_C \in \{0, 1\}^{n+1}\setminus\Range(C)$.
\end{theorem}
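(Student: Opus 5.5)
The plan is to combine \autoref{thm: iterative win-win} (applied to the truth-table generator family $\{G_n\}$) with \autoref{lemma: Korten reduction}. The truth-table generators $G_n$, for $n = 2^k$, are $\P$-uniform and have size at most $n^4$ (evaluating an $n$-bit circuit description on all $2n$ inputs takes $\poly(n)$ time). So \autoref{thm: iterative win-win} gives a $\smart\FP^\prMA/_1$ algorithm $\calA'$ and infinitely many powers of two $n_i = 2^{k_i}$ with the following property: on the correct advice, $\calA'(1^{n_i})$ outputs $f_i \in \{0,1\}^{2n_i}\setminus\Range(G_{n_i})$. Such an $f_i$ is the truth table of a function on $k_i+1$ bits that has no circuit of description length $n_i$, and hence requires circuit size $\Omega(n_i/k_i)$. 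Moreover, $\calA'$ is smart on every power-of-two length.

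Next, I would choose the good lengths $s$. For each good $n_i$, let $s_i$ be the largest integer with $c_1 (k_i+1) s_i \le \Omega(n_i/k_i)$ and $2^{k_i+1} \ge s_i^3$. Both constraints allow $s_i = n_i^{\Omega(1)}$, for instance $s_i \approx n_i^{1/3}$. Since the $n_i$ are distinct and tend to infinity, the $s_i$ can be made distinct, giving infinitely many good $s$.

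The algorithm $\calA$ works as follows on an $s$-size circuit $C\colon\{0,1\}^n\to\{0,1\}^{n+1}$ with $n\le s$. First it computes the unique $k$ with $s = s(k)$ (a polynomial-time computable map), sets $N = 2^k$, and runs $\calA'(1^N)$ with the advice bit to obtain $f$. Then it runs $\AKorten(C, f)$ and outputs the result. Since $N = \poly(s)$, the whole procedure runs in polynomial time.

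For smartness, $\calA'$ is smart on its own lengths. $\AKorten$ is an $\FP^\NP$ algorithm, and an $\NP$ oracle is simulable by $\prMA$ queries that are total (the $\SAT$ instances are always in the promise, taking Arthur's verifier to ignore $r$). So every query made by $\calA$ is smart. The advice bit for input size $s$ is defined to be $\alpha_N$; inputs of other sizes get an arbitrary bit, and on those sizes $\calA$ outputs a default string with no queries.

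The main obstacle is the bookkeeping between the input size of $C$ (which is what determines the advice index) and $s$. I would take $\calA$'s advice indexed by $s$ and assume the circuit encoding has length determined by $s$, or else pad. I would also need to ensure that the map $s \mapsto N$ is injective on good lengths, so that the one advice bit of $\calA'$ transfers faithfully; injectivity follows from each $n$ lying in a unique sequence.
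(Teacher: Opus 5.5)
Your proposal is correct and follows the paper's proof. Run the iterative win-win algorithm $\calA'$ on the truth-table generators at a power-of-two length $N=\poly(s)$ to obtain a hard truth table, then feed it to $\calA_{\mathrm{Korten}}$; the paper just sets up the parameter map in the other direction, taking $k:=\lceil 4\log s\rceil$ and $f:=\calA'(1^{2^{k-1}})$ for every $s$.

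Your explicit handling of the smartness of $\calA_{\mathrm{Korten}}$'s $\NP$ queries and of advice indexing by $s$ fills in points the paper leaves implicit. One correction: all you need is that $N$ be a function of $\calA$'s input length, and non-injectivity of $s\mapsto N$ would be harmless. The injectivity of your $k\mapsto s(k)$, which makes ``the unique $k$'' well defined, comes from its monotonicity, not from the uniqueness of the win-win sequences.
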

\begin{proof}
    Let $\calA'$ denote the $\smart\FP^\prMA/_1$ algorithm in \autoref{thm: iterative win-win} for the family of truth table generators $\{G_n\}$. There are infinitely many integers $k$ such that $\calA'(1^{2^k})$ outputs a truth table of length $2^{k+1}$ that cannot be computed by circuits of size $0.1\cdot 2^k / k$.

    Given a circuit $C$ of size $s$, let $k := \lceil 4\log s\rceil$ and $f \in \{0, 1\}^{2^k}$ be the output of $\calA'(1^{2^{k-1}})$. Our algorithm $\calA(C)$ outputs $\AKorten(C, f)$. By \autoref{lemma: Korten reduction}, our algorithm is correct on infinitely many $s\in\N$.
\end{proof}

Just as in~\cite[Section 4.3]{CHLR26}, the ability to solve the Range Avoidance problem in $\smart\FP^{\prMA}/_1$ allows us to prove a number of ``near-maximum'' complexity lower bounds for $\smart\E^{\prMA}/_1$, via direct reductions to $\Avoid$.

\begin{corollary}
    The following ``near-maximum'' complexity lower bounds hold for $\smart\E^{\prMA}/_1$:
    \begin{enumerate}
        \item $\smart\E^{\prMA}/_1\not\subseteq\SIZE[2^n / n]$.
        \item Let $\delta_1, \delta_2 > 0$ be constants such that $\delta_1 + 2\delta_2 < 1$. Then there is a language $L \in \smart\E^{\prMA}/_1$ that cannot be $(1/2 + 2^{-\delta_2 n})$-approximated by size-$2^{\delta_1 n}$ circuits.
        \item For any constant $k\ge 1$ and nice function $\alpha(n) \ge \omega(1)$, $\smart\E^{\prMA}/_1\not\subseteq\TIME[2^{kn}]/_{2^n - \alpha(n)}$.
    \end{enumerate}
\end{corollary}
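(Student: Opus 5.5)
Each of the three items comes from the same template, as in \cite[Section 4.3]{CHLR26}. First phrase the desired hard object as a non-output of a $\P$-uniform circuit family. Then apply the preceding theorem, which gives an infinitely-often $\smart\FP^\prMA/_1$ algorithm for $\Avoid$ on all small circuits with a single canonical answer. Finally, define the language $L$ on input length $n$ by running that algorithm on the appropriate circuit and outputting the bit of the result indexed by the input. Since the circuits below have size $2^{O(n)}$, the $\FP^\prMA$ algorithm runs in $2^{O(n)}$ time. So $L\in\E^{\prMA}$, using one advice bit per input length. We must check that the advice depends only on $|x|$: the parameter $s$ (circuit size) is determined by $n$, and the good lengths of the $\Avoid$ algorithm map injectively to lengths $n$. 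Smartness is inherited directly, since on the correct advice every query made by the $\Avoid$ algorithm is in the promise. Single-valuedness (the remark after the range avoidance theorem) ensures $L$ is a well-defined language. We also need the $\Avoid$ algorithm's good values of $s$ to be hit by the sizes we use. I would rely on the form proved there, where $k=\lceil 4\log s\rceil$ is chosen from $s$. The cleaner route is to apply \autoref{thm: iterative win-win} directly to a suitable $\P$-uniform generator family at power-of-two lengths, padding $n$ if necessary. The infinitely many good lengths then translate into infinitely many $n$ where $L_n$ is hard, which is enough for non-containment.

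For item 1, use the truth-table generator $G$ mapping descriptions of circuits of size $2^n/n$ on $n$ inputs, which take fewer than $2^n$ bits, to $2^n$-bit truth tables. Pad the input to $2^n-1$ bits so that $G$ stretches by one bit. Any non-output is a truth table of circuit complexity above $2^n/n$; set $L(x)$ to be its $x$-th bit. Shannon-type description length bounds ($O(s\log s)$ bits for size $s$) make the stretch valid, possibly with a constant adjustment absorbed as in the preceding theorem; alternatively, \autoref{thm: iterative win-win} already yields hardness $0.1\cdot 2^k/k$ directly, and the constant can be tuned.

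For item 2, use the standard encoding of approximators. A truth table $f$ of length $2^n$ that agrees with some size-$2^{\delta_1 n}$ circuit on a $1/2+2^{-\delta_2 n}$ fraction of inputs is determined by the circuit together with the set of disagreements. By a Hamming-ball count, that set is described in $(1-\Omega(2^{-2\delta_2 n}))2^n$ bits. The circuit takes $\tilde O(2^{\delta_1 n})$ bits, and $\delta_1<1-2\delta_2$ means this is absorbed into the saving $2^{(1-2\delta_2)n}$. So the map from (circuit, compressed error pattern) to $f$ is a polynomial-size (in $2^n$) circuit with fewer than $2^n$ input bits. Since the ranking and unranking of bounded-weight strings is efficiently computable, this map is uniform. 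I expect this step to be the main obstacle. The difficulty is making the decoder an explicit polynomial-size circuit with the right stretch, handling the ``$+1$'' precision so that the length stays below $2^n$.

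For item 3, use the generator that maps an advice string $a$ of length $2^n-\alpha(n)$ and a machine index of $O(1)$ bits to the truth table on $\{0,1\}^n$ computed by that machine in time $2^{kn}$ with advice $a$. Its input has length at most $2^n-\alpha(n)+O(\log n)<2^n$. It is a circuit of size $2^{O(kn)}$, which is still polynomial in $2^n$ for constant $k$, so the $\E$ running time bound holds. A non-output differs from every such truth table, so every fixed $\TIME[2^{kn}]/_{2^n-\alpha(n)}$ machine errs infinitely often, on the good lengths that are large enough for $\omega(1)$ to dominate the machine's index length.
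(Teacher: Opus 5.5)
Your template is the one the paper uses: it defers to \cite[Section 4.3]{CHLR26}, i.e.\ you encode each hard object as a non-output of an explicit small-stretch generator and output the canonical non-output found by the $\smart\FP^{\prMA}/_1$ $\Avoid$ algorithm. Your Hamming-ball count for item 2 is also fine. Two steps, however, fail as written.

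First, item 1 asks for $\SIZE[2^n/n]$ exactly. An $O(s\log s)$-bit description does not fit all size-$2^n/n$ circuits into fewer than $2^n$ bits. The straightforward encoding (two input pointers and a gate type per gate) takes about $2s\log(n+s)\approx 2\cdot 2^n$ bits, so it only gives bounds like $\SIZE[2^n/(2n)]$. Likewise, the $0.1\cdot 2^k/k$ hardness from the preceding theorem only gives $\Omega(2^n/n)$, and no tuning of constants recovers the leading factor $1$. The missing ingredient is an explicit encoding of size-$s$ circuits with $s\log s+O(s)$ bits, decodable in $\poly(2^n)$ time. An example is the Frandsen--Miltersen encoding with $(s+1)(7+\log(n+s))$ bits, used in the earlier near-maximum lower bounds. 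At $s=2^n/n$ this is $2^n-\Omega(2^n\log n/n)$, because the slack $s\log n$ beats the $O(s)$ term.

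Second, you rightly notice that the good sizes of the $\Avoid$ algorithm must be hit by your generator sizes $s(n)$, but neither proposed resolution works.
\begin{itemize}
\item The final theorem only promises infinitely many good $s$; its proof gives one good length per chain $\{qC^i\}_{i\ge0}$. Nothing forces these to meet the sparse set $\{\lceil 4\log s(n)\rceil\}$, and one advice bit per $n$ cannot point to a good size.
\item Applying \autoref{thm: iterative win-win} directly does not typecheck. It requires doubling circuits $G_N:\{0,1\}^N\to\{0,1\}^{2N}$ of size at most $N^4$. All three of your generators necessarily have small stretch, since each set of objects to avoid has more than $2^{2^{n-1}}$ elements. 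Also, item 3's generator has size $2^{O(kn)}$.
\end{itemize}
What is needed is stretch amplification:
\begin{enumerate}
\item Truncate $C_n$'s output to one bit more than its input, and pad the input with pass-through bits, to get $\hat C_n:\{0,1\}^{2^n}\to\{0,1\}^{2^n+1}$.
\item Iterate $\hat C_n$ $2^n$ times (feed back the first $2^n$ output bits, collect the last bit each time) to get a doubling circuit $G_{2^n}$ of size $\poly(2^n)$.
\item Run the iterative win-win on $\{G_{2^n}\}$; its proof works for any polynomial size bound $N^d$.
\item Recover a non-output of $\hat C_n$, hence of $C_n$, by walking backwards with lexicographically first preimages.
\end{enumerate}
This back-reduction is in $\FP^\NP$, and $\NP$ queries always lie inside the $\prMA$ promise. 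So the composition stays smart and single-valued, and the index $2^n$ is now in bijection with the input length $n$ of $L$.

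A minor point on item 3: the machine-index length must be a growing nice function below $\alpha(n)$, e.g.\ $\lfloor\alpha(n)/2\rfloor$. With $O(1)$ bits you cannot cover all machines, and $O(\log n)$ bits may exceed $\alpha(n)$.
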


Finally, we note that our $\Avoid$ algorithm implies a $\smart\FP^{\prMA}/_1$ algorithm for explicit constructions of Ramsey graphs, two-source extractors, rigid matrices, optimal linear codes meeting the GV bound, etc. These follow directly from known uniform reductions to $\Avoid$~\cite{Korten21,DBLP:conf/approx/GuruswamiLW22,DBLP:conf/focs/RenSW22}.

\section{Discussion}\label{sec: conclusion}

We believe the approach of this paper has the potential to obtain near-maximum circuit lower bounds for classes such as $\pr\MA_\E$. In this section, we discuss the immediate technical obstacles we are facing and provide some speculations on future directions. (We also note that our approach is radically different from the half-exponential lower bound in~\autoref{appendix: E to prMA halfexp}. In particular, non-relativizing techniques such as the PCP theorem are crucial to our proofs.)

A significant bottleneck in our proof is the \emph{highly adaptive} search-to-decision reduction for $\SAT$: the $\JK$ reduction (\autoref{thm: Jerabek-Korten}) essentially makes $O(\log T)$ rounds of non-adaptive queries to $\searchSAT$, and the computational history of $\P^\NP\RoundLength{r(n)}{s(n)}$ (\autoref{lemma: computational history of P NP a}) is also computable in $r(n)$ rounds of non-adaptive queries to $\searchSAT$. Replacing the $\searchSAT$ oracle with a (decisional) $\SAT$ oracle incurs an $O(n/\log T)$ and $O(s(n))$ multiplicative overhead on the number of rounds, respectively. In fact, it is not hard to see from our proof that an $\FP^\NP_{tt}$ algorithm for $\searchSAT$ would imply near-maximum circuit lower bounds for smaller classes; here, $\FP^\NP_{tt}$ is the class of search problems computable by a deterministic polynomial-time algorithm with \emph{non-adaptive} access to $\NP$ oracles.

\begin{theorem}\label{thm: better searchSAT implies better lower bounds}
    Assuming there is an $\FP^\NP_{tt}$ algorithm for $\searchSAT$. Recall $\E^{\prMA[2^{\eps n}]}$ is the class of languages computable in exponential time with $2^{\eps n}$ many adaptive queries to a $\prMA$ oracle. Then, for every constant $\eps > 0$,
    \[\smart\E^{\prMA[2^{\eps n}]}/_1 \not\subseteq \SIZE[2^n / (2n)]\quad\text{and}\quad(\MA_\E\cap\coMA_\E)/_{2^{\eps n}}\not\subseteq \SIZE[2^n / (2n)].\]
\end{theorem}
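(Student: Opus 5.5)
The plan is to rerun the proof of \autoref{thm: iterative win-win} with one change: under the hypothesis, search-to-decision for $\SAT$ no longer costs a factor of $s$ in the number of rounds. With this change, the round complexity $r_i$ stays polylogarithmic rather than growing like $n_i$. I would then read off the number of adaptive $\prMA$ queries, and the length of the needed advice, directly from this round count.

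\textbf{Step 1 (round accounting).} I would redo \autoref{thm: Jerabek-Korten} and \autoref{lemma: computational history of P NP a} using a single round of the assumed $\FP^\NP_{tt}$ algorithm for $\searchSAT$ in place of bit-by-bit prefix search. This gives lexicographically extremal witnesses: first pin down the answers of a round, then ask for the extremal witness of the satisfiable queries, which is again a search problem solvable non-adaptively. The consequences are:
\begin{itemize}
\item $\JK$ should then run with $O(\log T)$ rounds;
\item the encoded history of $\TIME[t]^\NP\RoundLength{r}{s}$ should be computable with $O(r)$ rounds instead of $rs$;
\item the recurrence for the iteration becomes $r_{i+1}=O(r_i+\log T_i)$.
\end{itemize}
Because $T_i\le \max\{2^{cD^in_0},n_{i+1}\}$ and we only care about $i\le i_\star$, we get $\log T_i\le O(\log n_{i+1})=O(\log n_i)$. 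By induction, $r_i=O(\log n_i)$, uniformly over the relevant $i$, after tuning constants. The prefix length in \autoref{thm: instance-wise HvR} is then $r'_i=r_i\log T_i=O(\log^2 n_i)$.

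\textbf{Step 2 (the smart $\E^{\prMA[2^{\eps n}]}/_1$ bound).} I would instantiate with the truth-table generator, so that $n_i=2^{m}$ and the hard function on $m+1$ inputs is the output of stage $i$. One bit of the hard function is computed as follows.
\begin{itemize}
\item In the \emph{reconstruction branch} (advice bit $1$), a single call $\Recon_i(j)$ suffices. It makes $r'_i+1=O(m^2)$ adaptive, smart $\prMA$ queries in time $\poly(n_i)=2^{O(m)}$, and $O(m^2)\le 2^{\eps m}$.
\item In the \emph{brute-force branch} (advice bit $0$, $i\ge i_\star$), $\BF_i$ is a $\poly(n_i)$-time $\NP$-oracle computation with $O(\log n_i)$ rounds. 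Each parallel round can be replaced by $O(\log n_i)$ adaptive $\NP$ queries: binary search on the number of satisfiable queries, each probe asking ``at least $y$ of these are satisfiable'' as one $\NP$ query. A second pass of the same kind recovers the individual answers. (A cleaner alternative is to simulate the $\FP^\NP_{tt}$ search directly.) Either way the total should be $\polylog(n_i)\le 2^{\eps m}$ adaptive queries. All of these are $\NP$ queries, which are trivially inside the $\prMA$ promise.
\end{itemize}
The Korten step \autoref{lemma: Korten reduction} is not needed here, since $\{G_n\}$ is already the truth-table generator. The parameters should give size bound $2^{m}/(2m)$ after the usual halving.

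\textbf{Step 3 (the $(\MA_\E\cap\coMA_\E)/_{2^{\eps n}}$ bound).} Here I would use advice to replace the adaptive query sequence. The advice at the good length consists of the branch bit and the string $\pfx$ of length $O(m^2)\le 2^{\eps m}$: in the reconstruction branch this is the PCP prefix, and in the brute-force branch it is the vector $(y_1,\dots,y_r)$ from \autoref{lemma: computational history of P NP a}, applied to $\BF_i$. The protocol is: Merlin sends a claimed PCP proof $\Pi$, which is a small circuit in the first branch and an explicit exponential-length proof in the second (affordable in $\MA_\E$). Arthur then
\begin{enumerate}
\item runs $V$;
\item checks via the self-corrector $\Corr$ that $\widetilde{\pfx}(\Pi)$ equals the advice string;
\item outputs the self-corrected bit of $\widetilde{out}(\Pi)$.
\end{enumerate}
Completeness follows from the Completeness item of \autoref{thm: PCP}. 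Soundness in \emph{both} directions follows from its ``Moreover'' clause: any accepted proof with the correct prefix must carry the correct output. Hence the protocol is simultaneously $\MA_\E$ and $\coMA_\E$ with the same advice.

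The main obstacle is Step 1. One must check carefully that the $\FP^\NP_{tt}$ $\searchSAT$ algorithm really produces the \emph{lexicographically extremal} witnesses that \autoref{claim: histstar is the correct comp history} relies on, or else modify the history definition so that any canonical witness works. One must also confirm that $\JK$'s halving stages each collapse to $O(1)$ non-adaptive rounds. If lex-extremality fails, I would first try redefining $z^\star_{i,j}$ as the output of the fixed $\FP^\NP_{tt}$ searcher, and add a consistency check of that searcher's queries into $C_x$.
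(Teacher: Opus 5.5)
Your overall route matches the paper's. Under the hypothesis, the encoded history costs $r$ rather than $rs$ rounds and $\JK$ costs $O(\log T)$ rounds. $\Recon$ then makes $O(r\log T)+1$ adaptive smart $\prMA$ queries. For $(\MA_\E\cap\coMA_\E)/_{2^{\eps n}}$ one hardwires the prefix and tests ``bit $=b$'' in the last query.

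\textbf{The gap: your round accounting in Step~1.} This is exactly where $\eps$ has to enter. The inequality $\log T_i\le O(\log n_{i+1})$ is reversed on the range you need. By definition of $i_\star$, every $i<i_\star$ has $T_i=2^{cD^in_0}>n_{i+1}$, so $\log T_i=cD^in_0\gg\log n_{i+1}$. Concretely, $r_1=\Theta(n_0)$. When $1<i_\star$, $r'_1=r_1\lceil\log(T_1+1)\rceil=\Theta(n_0^2)=\Theta(n_1^{2/C})$, not $O(\log^2 n_1)$. The reconstruction branch is used at some $i<i_\star$, possibly $i=1$. So the number of $\prMA$ queries there, and the advice length $r'_i$ in Step~3, can be $n_i^{\Theta(1/C)}$, not $\polylog(n_i)$. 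Your $O(\log n_i)$ bound on $r_i$ holds only for $i\ge i_\star$, i.e.\ in the brute-force branch. With $C$ constrained only by $C>D+1$, as in \autoref{thm: iterative win-win}, the bound $r'_1+1\le 2^{\eps m}$ fails once $\eps<2/C$. The fix, which the paper uses, is to choose $C$ large as a function of $1/\eps$ and $D$. Since $n_i=n_0^{C^i}$ while $r_i,\log T_i=O(D^in_0+\log n_{i+1})$, this gives $r_i,\log T_i\le n_i^{\eps/2}$ for $i\ge1$, hence $O(r_i\log T_i)\le n_i^{\eps}$.

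\textbf{The obstacle you flag.} Your first suggested route is not valid. Computing a lexicographically extremal witness is $\FP^\NP$-complete \cite{Krentel88}, so $\searchSAT\in\FP^\NP_{tt}$ does not give it non-adaptively. No extremality is needed, however, and no consistency check of the searcher inside $C_x$ is required:
\begin{itemize}
\item Simulate $M(x)$ with the correct answers and arbitrary valid witnesses. The resulting history has $y$-part $(y^\star_1,\ldots,y^\star_r)$.
\item The proof of \autoref{claim: histstar is the correct comp history} shows this is the lexicographically largest length-$r'$ prefix of any satisfying assignment of $C_x$. At the first deviation, an accepted history must have $y'_i<y^\star_i$. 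The ``Moreover'' clause likewise concerns only this prefix.
\item The proof of \autoref{thm: instance-wise HvR} uses only three facts about $\Pi_x$: it is an honest PCP proof, it attains this maximal prefix, and it is the string fed to $\JK$.
\item $\JK$ itself works with any preimages.
\end{itemize}

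\textbf{Brute-force branch.} Do not ``recover the individual answers''. That can cost about $\log\binom{t}{y_k}$ queries per round, far above $2^{\eps m}$. Instead, find $(y_1,\ldots,y_r)$ by binary search; each probe for round $k$ nondeterministically re-derives earlier rounds from the known counts. Then obtain the output bit with one final $\NP$ query via the ``Moreover'' clause. This is $\Recon$ with $\NP$ in place of $\prMA$.
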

We treat \autoref{thm: better searchSAT implies better lower bounds} as an invitation to the open questions (\autoref{q: HvR for searchSAT} and \autoref{q: HvR for AVOID}) that we will discuss later, rather than a main result. Therefore we only provide a proof sketch:
\begin{proof}[Proof Sketch]
    Suppose that $\searchSAT \in \FP^\NP_{tt}$. Then the round complexity in our technical lemmas can be radically reduced:
    \begin{itemize}
        \item The $\JK$ reduction (\autoref{thm: Jerabek-Korten}) runs in $\TIME[\poly(|T|, |G|)]^\NP\RoundLength{O(\log T)}{n}$.
        \item The computational history of $\TIME[t(n)]^\NP\RoundLength{r(n)}{s(n)}$ (\autoref{lemma: computational history of P NP a}), hence the valid PCP in the {\bf Completeness} case of \autoref{thm: PCP}, can be computed in $\TIME[\poly(t(n))]^\NP\RoundLength{r(n)}{s(n)}$.
        
        (Here, a minor technical detail is that the computational history $\Pi$ we find in \autoref{lemma: computational history of P NP a} is no longer the lexicographically largest one, since our $\FP^\NP_{tt}$ algorithms for $\searchSAT$ does not necessarily return the lexicographically largest satisfying assignment. However, $\Pi$ is still a valid computational history with a lexicographically largest possible length-$r'(n)$ prefix. One can also verify that the proof of~\autoref{thm: instance-wise HvR} only requires $\Pi_x$ to have the largest possible length-$r'(n)$ prefix.)
        \item Hence, in our instance-wise hardness-randomness tradeoff (\autoref{thm: instance-wise HvR}), given a multi-output function computable in $\TIME[t(n)]^\NP\RoundLength{r(n)}{s(n)}$, $\Solve$ can be computed in
        \[\TIME[\poly(t(n))]^\NP\RoundLength{r(n) + O(\log t(n))}{\max\{s(n), k(n)\}}.\]
        Recall that $\Recon$ makes $O(r(n)\log t(n))$ many queries to the $\prMA$ oracle.
        \item The parameters $r_i$ (round complexity) in our $\Avoid$ algorithm (\autoref{thm: iterative win-win}) satisfies that $r_0 = 0$ and $r_{i+1} = r_i + O(\log T_i) = r_i + O(cD^i n_0 + \log n_{i+1})$. Since $n_i = n_0^{C^i}$, by setting $C$ to be a large enough constant relative to $1/\eps$ and $D$, we will always have $r_i \le n_i^{\eps/2}$ and $\log T_i \le n_i^{\eps / 2}$ (for $i\ge 1$). Hence, each output bit of our final $\Avoid$ algorithm on input length $n_i$ can be computed in $\poly(n_i)$ time with $O(r_i\log T_i) \le (n_i)^\eps$ many adaptive queries to the $\prMA$ oracle. 
    \end{itemize}

    Let $L$ be the language defined by our $\Avoid$ algorithm, where the input instances $\{G_N: \{0, 1\}^{N/2} \to \{0, 1\}^N\}_{N = 2^n, n \in \N}$ are truth table generators~\cite{KleinbergKMP21, CHLR26}; that is, on input $i \in \{0, 1\}^n$, $L(i)$ is the $i$-th output bit of our $\Avoid$ algorithm on $G_{2^n}$. Since our $\Avoid$ algorithm is correct on infinitely many input lengths, $L\not\in\SIZE[\frac{2^n}{2n}]$. On the other hand, $L(i)$ can be computed by a deterministic $\poly(2^n)$-time algorithm with $(2^n)^\eps = 2^{\eps n}$ many adaptive and smart queries to a $\prMA$ oracle, hence $L \in \smart\E^{\prMA[2^{\eps n}]}/_1$.

    Furthermore, we claim that $L \in (\MA_\E \cap \coMA_\E)/_{2^{\eps n}}$. The reason is that in the algorithm $\Recon_f(x, i)$ in~\autoref{thm: instance-wise HvR}, only the last $\prMA$ query depends on $i$. Hence, we can simply hardwire the answers of all other queries as an advice string of length $2^{\eps n}$ and run the last query. Moreover, the last query is of the following form: Merlin sends Arthur a circuit $C$ that succinctly represents a PCP proof $\Pi_C$, Arthur verifies that $\Pi_C$ is accepted by the PCP verifier, has the correct prefix, and a certain bit of $\Pi_C$ (depending on $i$) is equal to $1$; and $L(i) = 1$ if and only if this query accepts. This immediately gives an $\MA_\E/_{2^{\eps n}}$ algorithm for computing $L$. To obtain an $(\MA_\E\cap\coMA_\E)/_{2^{\eps n}}$ algorithm, we simply change ``a certain bit of $\Pi_C$ is equal to $1$'' to this bit being equal to $b$ in Arthur's acceptance criterion, where $b \in \{0, 1\}$ is an input bit.
\end{proof}

It is known that $\searchSAT$ admits an efficient \emph{randomized} algorithm with parallel access to the $\NP$ oracle by using the isolation lemma~\cite{MulmuleyVV87, ValiantV86, Ben-DavidCGL92}. Hence, the assumption made in \autoref{thm: better searchSAT implies better lower bounds} seems plausible. However, \emph{derandomizing} this algorithm appears strictly harder than proving exponential circuit lower bounds for $\pr\MA_\E$. Indeed, we only know that $\searchSAT \in \FP^\NP_{tt}$ under lower bounds against \emph{single-valued nondeterministic} (SVN) circuits~\cite{ArvindK01, KlivansM02, MiltersenV05, ShaltielU06}. It is unclear whether such a strong derandomization assumption is necessary, and we feel that it is important to understand the \emph{minimum assumption} needed to put $\searchSAT$ into $\FP^\NP_{tt}$.

\begin{question}\label{q: HvR for searchSAT}
    What is the weakest circuit lower bound assumption that implies $\searchSAT \in \FP^\NP_{tt}$? Can we design an $\FP^\NP_{tt}$ algorithm for $\searchSAT$ under the assumption that $\E^\NP_{tt}$ requires exponential size \emph{circuits} (instead of, e.g., \emph{SVN circuits})?
\end{question}

It seems equally important, if not more, to understand the minimum assumption for solving the Range Avoidance problem in $\FP^\NP_{tt}$. By~\cite{Korten21}, the assumption that $\E^\NP$ requires exponential size circuits (\emph{not} SVN circuits!)~implies an $\FP^\NP$ algorithm for $\Avoid$; moreover, this is provably the minimum assumption. Puzzlingly, to the best of our knowledge, the weakest assumptions known to imply an $\FP^\NP_{tt}$ algorithm for $\Avoid$ are still lower bounds against SVN circuits~\cite{KlivansM02, MiltersenV05}.

\begin{question}\label{q: HvR for AVOID}
    What is the weakest circuit lower bound assumption that implies $\Avoid \in \FP^\NP_{tt}$? %
\end{question}

It is conceivable that with a better hardness-randomness tradeoff than $\JK$, we will be able to obtain a better uniform reconstruction algorithm, hence a better near-maximum circuit lower bound. Let $\Pi$ be an encoded computational history, then either $\Pi$ is ``hard enough'' to solve $\Avoid$, or $\Pi$ is ``easy'' so that one can ``reconstruct'' $\Pi$ efficiently. Is there a suitable definition of ``hardness'' so that any ``hard enough'' $\Pi$ can be used to solve $\Avoid$ in $\FP^\NP_{tt}$, while $\Pi$ being ``easy'' implies an efficient $\MA$ algorithm for reconstructing $\Pi$?

\ifnum\Anonymity=0
\section*{Acknowledgments}
We thank Lijie Chen for fruitful discussions during early stages of this work. We are grateful to Jingxun Liang, Zhenjian Lu, and Igor C.~Oliveira for helpful comments on a draft version of this paper.
\fi

\printbibliography[heading=bibintoc]

\appendix
\section{Half-Exponential Circuit Lower Bounds for \texorpdfstring{$\E^\prMA$}{E to prMA}}\label{appendix: E to prMA halfexp}

In this section, we observe that one can prove a half-exponential circuit lower bound for the class $\E^\prMA$ using relativizing techniques. This basically follows from the same argument as the Karp--Lipton collapse for $\P^\prMA$~\cite{ChakaravarthyR11}. %

\begin{theorem}
    Let $f(n)$ be a nice function such that $f(f(n^c)^c) = 2^{o(n)}$ for every constant $c\ge 1$. Then
    \[\E^\prMA\not\subseteq\SIZE[f(n)].\]
    Moreover, this holds in every relativizing world.
\end{theorem}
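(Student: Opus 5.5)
The plan is a win-win in the style of Karp--Lipton, using the Chakaravarthy--Roy collapse. The first ingredient I would establish, relativized to an arbitrary oracle $O$, is a scaled-up form of \cite{ChakaravarthyR11}: if $\NP^O \subseteq \SIZE^O[s(n)]$, then $\Sigma_2^O\TIME[n]$ can be decided in $\TIME[\poly(s(n^{c}))]^{\prMA^O}$ by an algorithm that answers consistently no matter how queries outside the promise are answered.

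To prove this, I would use the standard search-to-decision step: $\SAT^O$ then has size-$\poly(s)$ circuits that output satisfying assignments. A $\Sigma_2$ statement $\exists y\,\forall z\,\phi(x,y,z)$ is decided by deciding whether a small circuit $D$ exists such that $\exists y$ with $\neg\phi(x,y,D(x,y))$ false, and $D$ is a correct $\SAT$-witness finder on all instances of that form. Following \cite{ChakaravarthyR11}, the "correct on all instances" part is replaced by a tournament among candidate circuits. Checking that a candidate circuit never errs on a falsifiable instance is a $\coNP$-type condition, so it is avoided. Instead, $\P^{\prMA}$ performs a prefix search on $y$, where each query asks Merlin for a circuit and Arthur samples to test consistency with self-reducibility. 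The result is a $\P^{\prMA}$ algorithm, not necessarily smart, whose output is independent of out-of-promise answers. I would cite their argument essentially verbatim, check that it relativizes (it only uses circuits with oracle gates and padding), and pad it up to exponential size.

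With that in hand, the win-win runs as follows. Suppose toward contradiction that $\E^{\prMA} \subseteq \SIZE[f(n)]$.
\begin{itemize}
\item By \cite{Kannan82}, $\Sigma_2\TIME[2^{O(n)}]$, and even $\Sigma_2$-time $\poly(f(n^c))$, contains a language $L$ without circuits of size $f(f(n^c)^c)$-type bounds. More precisely, I would diagonalize at the scale $g(n)=f(n^{c})^{c}$: $\Sigma_3\TIME[\poly(g)]$, or $\Sigma_2$ with careful bookkeeping, contains a function of circuit size greater than $f(g)$.
\item On the other side, $\NP \subseteq \E^{\prMA}$ has circuits of size $f$. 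So $\SAT$ instances of length $\poly(m)$ have circuits of size $f(\poly(m))$, and by the collapse lemma $L$ lies in $\TIME[\poly(f(\poly(m)))]^{\prMA}$.
\item The hypothesis $f(f(n^c)^c)=2^{o(n)}$ is exactly what lets this last class be padded into $\E^{\prMA}$. By assumption that class has size-$f$ circuits, and composing the padding gives circuits for $L$ of size at most $f(f(\poly)^{\poly})$, contradicting the diagonalization.
\end{itemize}
Relativization holds because every step (Kannan's diagonalization, the collapse, padding) is oracle-oblivious.

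The main obstacle is the collapse lemma. The point needing care is that the resulting machine must define a genuine language in $\E^{\prMA}$, so its acceptance must be independent of how out-of-promise queries are answered. Without that property, "$\E^{\prMA}\subseteq\SIZE[f]$" cannot be applied to it. I would resolve this as \cite{ChakaravarthyR11} does: Merlin's circuits are checked only via sound probabilistic tests. Any answer pattern then leads to a witness circuit for which the final $\Sigma_2$ verification gives the same truth value. This consistency claim is where I would spend most of the effort.
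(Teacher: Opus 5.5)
\textbf{There is a genuine gap: your parameter accounting does not close.} The overall shape is reasonable: a win-win on whether $\SAT$ has size-$f$ circuits, plus a Karp--Lipton-type collapse into $\P^{\prMA}$. But the hypothesis $f(f(n^c)^c)=2^{o(n)}$ allows only two nested applications of $f$, and your route spends three.

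\textbf{The diagonalization step is misstated.} A Kannan-style diagonalization running in time $\poly(g)$ cannot defeat circuits of size $f(g)\gg g$. What is classically available is a language outside $\SIZE[f(n)]$ computable in time $\poly(f(n))$ with a $\Sigma_2\TIME[\poly(f(n))]$ oracle \cite{MVW99}. You need this $\P^{\Sigma_2}$ form, not $\Sigma_3$, because collapsing a third quantifier costs yet another application of $f$.

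\textbf{Your collapse lemma has the wrong overhead.} The claimed bound $\poly(s(n^c))$ is not what scaling the Chakaravarthy--Roy argument gives. The route behind that collapse has three steps:
\begin{enumerate}
\item $\Sigma_2\to\S_2$, where the refuting certificate is a counterexample-finding circuit of size $s(\poly(m))$;
\item $\S_2\subseteq\P^{\prAM}$;
\item $\prAM\subseteq\prMA$ \cite{ArvindKSS95}.
\end{enumerate}
The $\prAM$ queries in step 2 have length $\poly(s(\poly(m)))$, because they range over those circuits. Converting them to $\prMA$ requires Merlin to send a $\SAT$ circuit for instances of that length, which costs a second application of $s$. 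So the honest bound is $\Sigma_2\TIME[m]\subseteq\TIME[\poly(s(\poly(s(\poly(m)))))]^{\prMA}$.

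Plugging in $m=\poly(f(n))$ from the diagonalization, your hard language needs time roughly $f(f(f(n^{O(1)})^{O(1)})^{O(1)})$, i.e. three nested $f$'s. This is not $2^{o(n)}$ when $f$ is near half-exponential: if $f(f(n))\ge 2^{n^{\eps}}$, then $f(f(f(n)))\ge 2^{f(n)^{\eps}}$. Your sketch of a direct tournament or prefix search over $y$ does not say how to avoid the second $s$, and that is the crux. As written, your route proves only the weaker statement under a triple-composition hypothesis.

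\textbf{The missing idea is to start from a hard function already in $\P^{\prAM}$ at scale $f$.} This skips the $\Sigma_2\to\S_2$ step, which is where your extra $f$ comes from. The paper uses $\S_2\E\not\subseteq\SIZE[2^n/n]$ \cite{CHR24, Li24} together with $\S_2\E\subseteq\E^{\prAM}$; padding then gives $\TIME[f(n)^{O(1)}]^{\prAM}\not\subseteq\SIZE[f(n)]$. The case split is then:
\begin{itemize}
\item If $\SAT\notin\SIZE[f(n)]$, you are done, since $\SAT\in\E^{\prMA}$.
\item Otherwise, \cite{ArvindKSS95} gives $\prAM\subseteq\prMATIME[f(n^{O(1)})^{O(1)}]$. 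Replacing each $\prAM$ query, of length $f(n)^{O(1)}$, costs a single application of $f$. The total time is $f(f(n^{O(1)})^{O(1)})^{O(1)}=2^{o(n)}$, which lands in $\E^{\prMA}$.
\end{itemize}
Both ingredients relativize. The out-of-promise consistency issue you flag is automatic here: the promise problem is unchanged, and only the protocol deciding it is replaced.
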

\begin{proof}
    We need the following two facts (both are relativizing):
    \begin{itemize}
        \item $\TIME[f(n)^{O(1)}]^{\prAM} \not\subseteq \SIZE[f(n)]$. Since $\E^\prAM \not\subseteq \SIZE[2^n / n]$, this follows from a padding argument. (Note that $\S_2\E$ contains a language with circuit complexity $\ge 2^n / n$~\cite{CHR24, Li24} and $\S_2\E\subseteq\E^{\prAM}$~\cite{ChakaravarthyR08, ChakaravarthyR11}.)
        \item If $\SAT \in \SIZE[f(n)]$, then $\prAM \subseteq \prMATIME[f(n^{O(1)})^{O(1)}]$~\cite{ArvindKSS95}.
    \end{itemize}
    There are two cases. If $\SAT \not\in \SIZE[f(n)]$, then clearly $\E^\prMA \not\subseteq \SIZE[f(n)]$ as well. Otherwise, since $\SAT \in \SIZE[f(n)]$, we have $\prAM \subseteq \prMATIME[f(n^{O(1)})^{O(1)}]$. It follows that
    \[\TIME[f(n)^{O(1)}]^{\prAM}\subseteq \TIME[f(n)^{O(1)}]^{\prMATIME[f(n^{O(1)})^{O(1)}]} \subseteq \TIME[f(f(n^{O(1)})^{O(1)})^{O(1)}]^\prMA \subseteq \E^\prMA,\]
    hence $\E^\prMA \not\subseteq \SIZE[f(n)]$.
\end{proof}
It is unclear whether such a relativizing lower bound can be proved for $\smart\E^\prMA/_1$.

\newpage
\listoffixmes

\end{document}